\documentclass[11pt]{article}
\usepackage{jheppub}
\usepackage{tensor}
\usepackage{cleveref}
\usepackage{framed}
\usepackage{amsmath,amssymb,amsfonts,graphicx,tensor}
\usepackage{palatino}
\usepackage{newpxmath}
\usepackage{bbold}
\usepackage{subcaption}
\usepackage{graphicx}
\usepackage{enumitem}
\usepackage{braket}
\usepackage[dvipsnames]{xcolor}
\usepackage{comment}

\usepackage{microtype}

\newcommand{\cH}{\mathcal H}
\newcommand{\cM}{\mathcal M}

\newcommand{\Tr}{\operatorname{Tr}}
\newcommand{\legs}{\mathcal{L}}          
\newcommand{\extv}{\mathcal{B}}          

\newcommand{\tr}{{\rm tr}}
\newcommand{\Hcal}{\mathcal{H}}
\newcommand{\Acal}{\mathcal{A}}
\newcommand{\verts}{\mathsf{V}}
\newcommand{\edge}{\mathsf{E}}
\newcommand{\Id}{\mathbb{1}}

\newcommand{\Ind}{\operatorname{Ind}}
\newcommand{\bdy}{\partial}

\usepackage{amsthm,mathtools}

\newtheorem{theorem}{Theorem}
\newtheorem{lemma}[theorem]{Lemma}
\newtheorem{proposition}[theorem]{Proposition}
\newtheorem{corollary}[theorem]{Corollary}
\theoremstyle{definition}
\newtheorem{definition}{Definition}

\newcommand{\area}{\mathrm{area}}

\usepackage{bbm}
\usepackage{float}
\usepackage{minted}
\usepackage{tcolorbox}

\def\Tr{{\rm Tr }}

\title{Algebraic structure in holographic tensor networks}
\author{Abhisek Sahu, Jeremy van der Heijden, Mark Van Raamsdonk, Rana Zibakhsh}
\emailAdd{abhi@phas.ubc.ca}
\emailAdd{jeremy.vanderheijden@ubc.ca}
\emailAdd{mav@phas.ubc.ca}
\emailAdd{rzibakhsh@phas.ubc.ca}
\affiliation{Department of Physics and Astronomy, University of British Columbia,\\
6224 Agricultural Road, Vancouver, B.C., V6T 1Z1, Canada}

\date{}

\abstract{In a previous work 2511.21852, we explored the hypothesis that in general theories of quantum gravity, spacetime regions known as generalized entanglement wedges (introduced by Bousso and Penington) have an associated subalgebra of observables in a  semiclassical limit. We defined an entropy associated with these subalgebras that obeys the same monotonicity and strong subadditivity properties as the generalized gravitational entropy of generalized entanglement wedges. Here, we test aspects of these ideas making use of tensor network toy models for holography. In this context, the analog of a generalized entanglement wedge is a region for which no region containing it has a smaller number of boundary legs. In the limit of large bond dimension (analogous to the semiclassical limit), we argue that each such region has a canonically associated subalgebra of the full algebra of observables on the boundary Hilbert space, as well as a family of unitarily related von Neumann factor subalgebras. We investigate in detail how various algebraic operations on these subalgebras (e.g. their intersections, joins, and commutants) are related to corresponding geometrical operations on the tensor network regions. We study to what extent the monotonicity and strong subadditivity properties of generalized entropy for generalized entanglement wedges follow from properties of our algebraic entropy. We also introduce an alternative entropic quantity $S(\omega || \omega \circ E_{\mathcal{A}'})$ associated with a subalgebra $\mathcal{A}$ that also obeys monotonicity and strong subadditivity properties with suitable conditions. Finally, we explain how to define a type $II_1$ von Neumann algebra based on an infinite family of random tensor networks with increasing bond dimension such that the regions corresponding to generalized entanglement wedges can be associated with exact subalgebras of this full algebra.}

\begin{document}

\maketitle
\section{Introduction}

This paper is a follow up to \cite{Sahu:2025upe}, motivated by the broad goal of understanding the mathematical structure underlying quantum gravity in general spacetimes. There, we noted that in AdS/CFT the geometrical and causal structure of entanglement wedges reflects the underlying algebraic structure of the dual CFT, with a subalgebra of observables associated to each entanglement wedge, and properties of inclusion and spacelike separation for entanglement wedges corresponding to algebra inclusion and commutation in the dual theory (see also e.g. \cite{Dong2016Reconstruction,Harlow2017RTQEC}). Quantitatively, the Ryu-Takayanagi formula \cite{Ryu2006RT},
together with its quantum corrections 
\cite{Faulkner2013QuantumCorrections,Engelhardt2015QES} gives a relation between generalized entropy for an entanglement wedge and von Neumann entropy for the corresponding CFT subsystem (both appropriatedly regulated).

We asked whether in AdS/CFT there are more general subalgebras associated with more general gravitational subsystems and whether the entanglement wedge/subalgebra connection might extend to more general examples of quantum gravity theories. An interesting candidate for a general class of regions that might be associated with subalgebras is the set of generalized entanglement wedges (GWs) defined by Bousso and Penington \cite{Bousso:2022hlz, Bousso:2023sya}.\footnote{Related recent work proposes a gravitational path-integral
construction of states for gravitating regions without requiring
an asymptotic boundary and finds agreement with the
Bousso--Penington entropy prescription in a class of examples
\cite{Bousso2026QuantumState}.} For a time-reflection symmetric slice in a static or time-reflection symmetric spacetime we have the following definition:
\begin{itemize}
    \item A {\bf generalized entanglement wedge} is a spatial region for which no region containing it has a smaller generalized gravitational entropy $S_{gen} = \text{Area}/(4G) + S_{vN}(A)$.
\end{itemize}
The set of such regions has mathematical properties reminiscent of the set of subalgebras of an algebra:
\begin{itemize}
    \item There is a partial ordering by inclusion
    \item For GWs $W_1$ and $W_2$, the intersection $W_1 \cap W_2$ is a GW.
    \item For GWs $W_1$ and $W_2$ the union $W_1 \cup W_2$ is not generally a GW, but there is a minimal GW $W_1 \vee W_2 = \cap_{W} \{W_1 \cup W_2 \subset W\}$ that contains both, where the intersection is over GWs.
\end{itemize}
The generalized entropy of GWs also obeys certain inequalities reminiscent of properties of certain algebraically defined entropies:
\begin{itemize}
    \item {\bf Monotonicity}: For GWs $W_1 \subset W_2$, $S_{gen}(W_1) \le S_{gen}(W_2)$~.
    \item {\bf Strong subadditivity}: For GWs $W_1, W_2$, $S_{gen}(W_1) + S_{gen}(W_2) \ge S_{gen}(W_1 \cap W_2) + S_{gen}(W_1 \vee W_2)$~.
\end{itemize}
In \cite{Sahu:2025upe}, we considered the hypothesis that 
\begin{itemize}
    \item Each generalized entanglement wedge $W$ can be associated with a pair $(\mathcal{A}_W, \omega_W)$ where $\mathcal{A}_W$ is a subalgebra  of the underlying algebra of observables and $\omega_W$ is a state on $\mathcal{A}_W$.
\end{itemize}
We noted that the various properties of generalized entanglement wedges would follow from underlying algebraic properties if this correspondence obeys
\begin{enumerate}
    \item 
$W_1 \subset W_2 \iff \mathcal{A}_{W_1} \subset  \mathcal{A}_{W_2}~. $
\item
    $\mathcal{A}_{W_1 \cap W_2} = \mathcal{A}_{W_1} \cap \mathcal{A}_{W_2} \; .$
\item 
$\mathcal{A}_{W_1 \vee W_2} = \mathcal{A}_{W_1} \vee \mathcal{A}_{W_2}~$ where $\mathcal{A}_{W_1} \vee \mathcal{A}_{W_2}$ is the algebra generated by $\mathcal{A}_{W_1}$ and $\mathcal{A}_{W_2}$ \; .
\end{enumerate}
We further noted that the strong subadditivity and monotonicity properties for generalized entropies would follow from algebraic properties if the various algebras associated to GWs are von Neumann factors, if the algebras associated with $W_1 \cap W_2$, $W_1$, $W_2$, and $W_1 \vee W_2$ form a non-degenerate commuting square, and if the generalized entropy is related to algebraic quantities by
\begin{equation}
S_{gen}(W_1) - S_{gen}(W_2) = \log \text{Ind} E_{\mathcal{A}_{W_1} \to \mathcal{A}_{W_2}} - S(\omega_1 || \omega_1 \circ E_{\mathcal{A}_{W_1} \to \mathcal{A}_{W_2}})
\label{Sdiff}
\end{equation}
for $W_2 \subset W_1$. Here $\omega_1 \circ E_{\mathcal{A}_{W_1} \to \mathcal{A}_{W_2}}$ is the state $\omega_1$ on $\mathcal{A}_{W_1}$ coarse-grained to keep only the information about observables in $\mathcal{A}_{W_2}$, the relative entropy $S(\omega_1 || \omega_1 \circ E_{\mathcal{A}_{W_1} \to \mathcal{A}_{W_2}})$ measures the information lost in such a coarse-graining, and $\log \text{Ind} E_{\mathcal{A}_{W_1} \to \mathcal{A}_{W_2}}$ (where $\text{Ind} E$ is an index of the conditional expectation $E$) provides a state-independent upper bound for this relative entropy.\footnote{The right-hand side of (\ref{Sdiff}) can also be expressed as a relative entropy $S_{M}(\omega_M || \omega_M \circ \hat{E})$ where the von Neumann algebra $M = \langle \mathcal{A}_{W_1}, e_{\mathcal{A}_{W_2}}\rangle$ is the algebra defined by the Jones basic construction, $\omega_M$ is the state on $M$ associated with $\omega$, and $\hat{E}:M \to \mathcal{A}_{W_1}$ is the ``dual conditional expectation''. See appendix \ref{app:proof1} for more details.} For regions $W_1$ with finite generalized entropy, we can define a corresponding algebraic entropy by taking $W_2$ to be trivial. 

\subsubsection*{Algebras and entropies in random tensor networks}

The proposals we made in \cite{Sahu:2025upe} are speculative. In order to evaluate them in a more concrete setting, we consider in this work random tensor network toy models for holography \cite{Hayden:2016cfa}. Here, we have a Hilbert space ${\cal H}_\partial$ and an algebra of observables $B({\cal H}_\partial)$ associated with the boundary of the tensor network that plays the role of the CFT Hilbert space and observable algebra in holography.\footnote{For any Hilbert space, the notation $B(\mathcal{H})$ indicates the full algebra of bounded operators on $\mathcal{H}$. This is a von Neumann factor. When $\mathcal{H}$ is of finite dimension $n$, this algebra is isomorphic to $M_n(\mathbb{C})$, the algebra of $n \times n$ complex matrices.} The geometry of the tensor network plays the role of the bulk spatial geometry. The tensor network analog of a generalized entanglement wedge is a region of the tensor network such that no region containing it has a smaller boundary area as measured by the number of tensor network legs forming the boundary of the region. We call such regions ${\it extremal}$.

As we review in section 2, for any region $A$ of the tensor network (defined by a subset of vertices), there is a Hilbert space $\mathcal{H}_A$ associated with its boundary legs and a map $T_A : \mathcal{H}_A \to {\cal H}_\partial$ provided by the tensor network. For an extremal region, this map tends to an isometry in the limit of large bond dimension (associated with the semiclassical $G_N \to 0$ limit in the analogy with holography). If $T_A$ is an isometry, the map $T_A : \mathcal{H}_A \to {\cal H}_\partial$ represents a quantum code with $\mathcal{H}_A$ playing the role of the logical Hilbert space and ${\cal H}_\partial$ playing the role of the physical Hilbert space.\footnote{This interpretation follows the holographic quantum
error-correction perspective \cite{Almheiri2015BulkLocality}
and its realization in tensor-network encoding maps
\cite{Pastawski2015HolographicCodes}.}
In this case, we can also represent $B(\mathcal{H}_A)$, the algebra of observables on $\mathcal{H}_A$ as a subalgebra of $B(\mathcal{H}_\partial)$ by defining an operator map
\[
\varphi: B(\mathcal{H}_A) \to B(\mathcal{H}_\partial)
\]
with the properties that $\varphi$ is a *-homomorphism (i.e. respects the product and adjoint operations) and satisfies
\[
\varphi({\cal O}) T = T {\cal O}
\]
so that the action of an operator on $\mathcal{H}_A$ is faithfully represented by the action of the representing operator $\varphi(\mathcal{O})$ on the code subspace $T B(\mathcal{H})$. Any choice of $\varphi$ with these properties gives a natural boundary subalgebra associated with the extremal region $A$, so the analog of the generalized entanglement wedge/subalgebra correspondence is realized in the tensor network.

There are generally many possible choices for the operator map $\varphi$ and thus the associated boundary subalgebra $\varphi(B(\mathcal{H}_A))$. A canonical choice is
\[
\varphi({\cal O}) = T_A {\cal O} T_A^\dagger 
\]
giving the algebra $\mathcal{C}_A = T_A B(\mathcal{H}_A) T_A^\dagger$. However, (assuming all bond dimensions are equal so that $\dim \mathcal{H}_\partial$ is a multiple of $\dim \mathcal{H}_A$) we can also demand the additional structure that $\varphi(B(\mathcal{H}_A))$ is a von Neumann factor subalgebra; in this case, there is a standard quantum subsystem (defined by a tensor factor of the full Hilbert space) associated with the region $A$. There is no unique choice of $\varphi$ that gives such a factor subalgebra but rather a family of equivalent maps related by unitary transformations on $\mathcal{H}_\partial$. We will use the notation $\mathcal{M}_A = \varphi_A(B(\mathcal{H}_A))$ to refer to the boundary von Neumann factor associated to one of these choices.

\subsubsection*{Properties of the subregion-subalgebra map for isometric tensor networks}

In section 3 which forms the core of this paper, we investigate the properties of the extremal subregion/subalgebra family map in this tensor network context. To avoid dealing with approximate isometries and approximate subalgebras that we expect for large but finite bond dimension, we consider {\it isometric tensor networks} for which we simply assume that the tensor network maps associated with all extremal regions are isometric. We also focus on operator maps that give von Neumann factor subalgebras. With these assumptions, we show the following: 
\begin{itemize}
    \item For a set of extremal regions whose inclusion structure is tree-like, it is possible to choose operator maps so that the corresponding boundary subalgebras have the same inclusion structure, that is
    \[
    A \subset B \implies \mathcal{M}_A \subset \mathcal{M}_B \; .
    \]
    \item For a set of regions with a diamond-like inclusion structure $A_1 \subset A_2, A_3 \subset A_4$, it is possible to choose operator maps so that the corresponding boundary subalgebras have the same inclusion structure if and only if the principal angles between the code subspaces $T_{A_2} \mathcal{H}_{A_2}$ and $T_{A_3} \mathcal{H}_{A_3}$ have degeneracies that are multiples of $\dim \mathcal{H}_{A_1}$. We show that this will be true if $A_2 \cup A_3$ is also extremal; in this case the various subalgebras form a non-degenerate commuting square.
    \item For regions $A$ and $B$ with $A \vee B$ the minimal extremal region containing $A$ and $B$, we show that for almost all\footnote{Here and below, ``almost all'' means with probability one for a Haar-random choice in the unitary family.} choices of $\varphi_A$, $\varphi_B$, $\varphi_{A \vee B}$ giving boundary algebras respecting the inclusion structure $A,B, A \vee B$, we have
    \[
    \mathcal{M}_A \vee \mathcal{M}_B = \mathcal{M}_{A \vee B} \; ,
    \]
    that is, the subalgebras associated with $A$ and $B$ together generate the subalgebra associated with $A \vee B$.
    \item For extremal regions $A$ and $B$ in an isometric tensor network with bond dimension larger than 2 and $|\partial A|, |\partial B|$ smaller than the full boundary, almost all choices of $\varphi_A$, $\varphi_B$, $\varphi_{A \cap B}$ giving boundary algebras respecting the inclusion structure  $\mathcal{M}_{A \cap B} \subset \mathcal{M}_A , \mathcal{M}_B$ also satisfy 
    \[
\mathcal{M}_A \cap \mathcal{M}_B = \mathcal{M}_{A \cap B} \; .
\]
\item For non-intersecting and non-adjacent extremal regions $A$ and $B$ with $A \cup B$ also extremal, it is possible to define associated operator maps such that the boundary von Neumann factor algebras $\mathcal{M}_A$ and $\mathcal{M}_B$ commute.
\end{itemize}
In \cite{Sahu:2025upe}, we suggested that the property of algebra commutation might be related to the geometrical property of spacelike separation. However, the last  tensor network result here suggests the refined proposal that {\it 
$\mathcal{A}_{W_1}$ and $\mathcal{A}_{W_2}$ commute if and only if $W_1 \cap W_2$ is empty and $W_1 \cup W_2$ is a GW.}

We also evaluate in section 3 the algebraic entropy formula (\ref{Sdiff}) in the tensor network context. Given a region $A$, we consider a mixed state described by a density matrix $\rho = T_A \rho_A T_A^\dagger$ produced by feeding in an arbitrary density matrix $\rho_A$ on $\mathcal{H}_A$ into the tensor network. We find that, for any choice of the boundary von Neumann factor subalgebra $\mathcal{M}_A$ associated with the region $A$, the associated entropy is  
\[
S_{alg} =  |\partial A|  \log D - \tr\, \rho_A \log \rho_A \; ,
\]
which is precisely the expected tensor network version of generalized entropy. Here, $|\partial A|$ is the number of boundary legs for the region $A$ that plays the role of area and $\log D$, the logarithm of the bond dimension is the tensor network version of $1/4 G_N$. The term $- \tr \rho_A \log \rho_A$ is the von Neumann entropy of the state $\rho_A$ and can be interpreted as the bulk entropy in the region $A$.

For extremal regions $A \subset B$, our results establish that we can choose algebras with the same inclusion structure, so the monotonicity of the generalized gravitational entropies is directly related to the positivity of the right side of (\ref{Sdiff}). For extremal regions $A$ and $B$ where $A \cup B$ is extremal, our results show that we can associate algebras to $A \cap B$, $A$, $B$, and $A \cup B$ with the same inclusion structure as the regions which furthermore form a non-degenerate commuting square. In this case, the algebraic entropies satisfy a strong subadditivity inequality, providing an algebraic reason for the strong subadditivity exhibited by the generalized entropies of the regions. On the other hand, when $A \cup B$ is not extremal, our results do not yet give any algebraic relation underlying the strong subadditivity for generalized entropies of regions $A \cap B$, $A$, $B$, and $A \vee B$.

Sections 4 and 5 contain two related observations that lie somewhat outside the main line of development and can be read independently of each other and of sections 2 and 3:

\subsubsection*{Type $II_1$ algebras from tensor network sequences}

For random tensor networks with large but finite bond dimension, the tensor network maps associated with extremal regions will be close to isometries with high probability, but only exact isometries for special choices of tensors. In section 4, we point out that there is a way to formally work in the large $D$ limit of random tensor networks in such a way that we have exact isometries from Hilbert spaces associated to extremal regions to the boundary Hilbert space. The construction is based on considering an infinite sequence of random tensor networks with increasing bond dimension and a type of limit known as an ultrafilter limit. The resulting algebras associated with the boundary and with other extremal regions are type $II_1$ von Neumann algebras. Type $II$ algebras also arise in gravitational constructions of algebras associated with
de Sitter static patches and more general subregions
(see e.g., \cite{Chandrasekaran2023DeSitter,Jensen2023GeneralSubregions}).

\subsubsection*{Commutant entropy}

In section 5, we note that there is an alternative entropy formula that also exhibits monotonicity and strong subadditivity properties under suitable conditions. Specifically, for an algebra $\mathcal{M}$ with faithful normal state $\omega$, a subalgebra $\mathcal{A}$ with commutant $\mathcal{A}'$ in $\mathcal{M}$ and a faithful normal conditional expectation $E_{\mathcal{A}'} : \mathcal{M} \to \mathcal{A}'$, we define 
the {\it commutant entropy}
\begin{equation}
\label{Sgen}
S_C(\mathcal{M},\mathcal{A},\omega) = S(\omega || \omega \circ E_{\mathcal{A}'}) \; .
\end{equation}
The state $ \omega \circ E_{\mathcal{A}'}$ is a coarse-graining of the global state $\omega$ to a state on the algebra $\mathcal{A}'$. We can think of it as a state coarse-grained to preserve only the information about the observables commuting with those in $\mathcal{A}$. The expression $S(\omega || \omega \circ E_{\mathcal{A}'})$ is the relative entropy between the original state and the coarse-grained state, {\it measuring how much information we have discarded through the coarse-graining to keep only observables in $\mathcal{A}'$}. We show that the entropy formula (\ref{Sgen}) obeys both monotonicity and strong subadditivity properties given appropriate assumptions. Thus, like $S_{alg}$, it behaves like generalized entropy for generalized entanglement wedges. As we will explain, in certain cases $S_C$ and $S_{alg}$ define the same quantity, but in other cases, they are different. Various elements used in the construction in this section have been discussed in  \cite{Gao2020RelativeEntropy} and
\cite{Gao2017Unifying}.

We end in section 6 with a brief discussion where we comment on the related works that have recently appeared in the literature.

\paragraph{Note:} In sections 2 and 3 of this paper, we make use of a variety of rigorous results concerning upgrading networks of quantum codes (as we have in an isometric tensor network) to ``factor code networks'' where each isometry has an associated unital operator map, allowing us to define von Neumann factor subalgebras associated with various Hilbert spaces in the parent Hilbert spaces. Since these results may be of independent interest, we have presented them along with their proofs in a companion paper \cite{FactorCode}.

\section{Algebras for extremal regions in holographic tensor networks}

In \cite{Sahu:2025upe}, we proposed to explore the relation between subregions and subalgebras in the context of random tensor network toy models for holography \cite{Hayden:2016cfa}. Here, we describe a detailed investigation aimed at constructing an explicit map from an extremal subregion to a subalgebra and evaluating whether  various proposed properties of the subregion/subalgebra relation hold in the tensor network setting. 

We recall that the basic construction starts from a graph/network with some number $N$ of ``external'' edges, as shown in Figure \ref{fig:TNremove}.
Let $G=(\verts,\edge)$ be the tensor network graph, with a set $\legs$ of dangling legs ending on external vertices $\extv$. We write $\widehat\edge=\edge\cup\legs$ for the extended edge set.

To this network, we associate a quantum state on a Hilbert space ${\cal H}_\partial = \otimes^N \mathbb{C}^{D}$, where $D$ is known as the {\it bond dimension}. The construction proceeds by independently and randomly choosing states 
\[
\sum_{\{i_k\}} T_{i_1 \cdots i_n} |e_{i_1} \rangle \otimes \cdots \otimes |e_{i_n} \rangle
\]
associated with each vertex of degree $n$ in the network.\footnote{Specifically, we choose a basis $|e_i \rangle$ for $\mathbb{C}^D$, choose a random unitary $U: \otimes^n\mathbb{C}^{D} \to  \otimes^n\mathbb{C}^{D}$ and define $\sum_{\{i_k\}} T_{i_1 \cdots i_n} |e_{i_1} \rangle \otimes \cdots \otimes |e_{i_n} \rangle = U |e_1 \rangle \otimes \cdots \otimes |e_1 \rangle$. }
The state associated with the network is
\[
\psi_{i_1 \cdots i_N} |e_{i_1} \rangle \cdots |e_{i_N} \rangle~,
\]
where $\psi_{i_1 \cdots i_N}$ is the product of the tensors $T$ with pairs of indices associated with each edge in the network contracted.\footnote{Equivalently, we construct the product state from the states associated with each vertex in the network and define maps $\sum_i \langle i| \otimes \langle i| : (\mathbb{C}^N)^2 \to \mathbb{C}$ associated with each edge. The tensor product of these maps takes the full product state to a state in ${\cal H}_\partial$.}

\subsection{Hilbert spaces and boundary maps for tensor network subsystems}

\begin{figure}
    \centering
    \includegraphics[width=\linewidth]{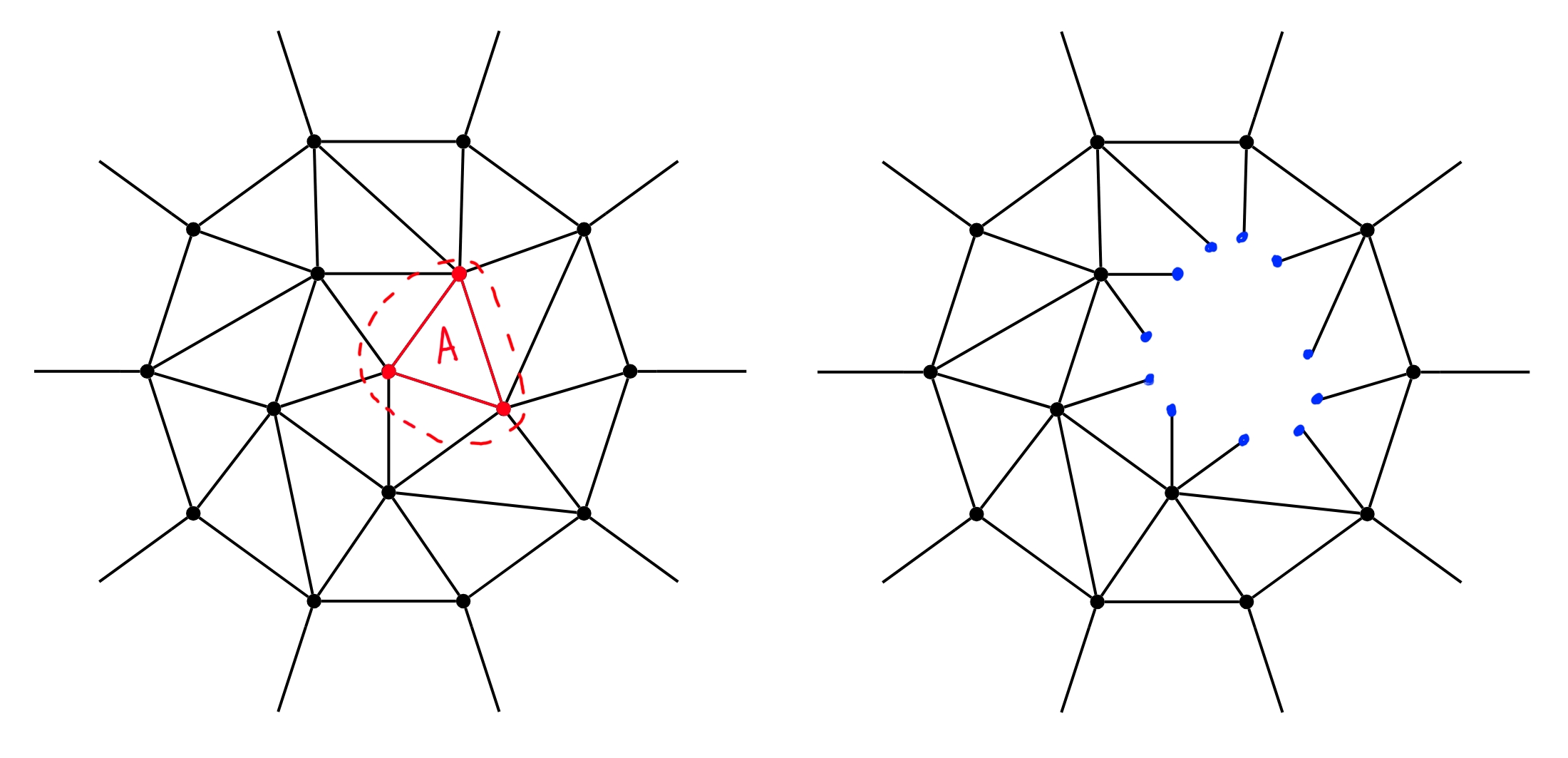}
    \caption{Left: A tensor network with 10 external legs and a region $A$ consisting of three vertices and the edges connecting these. Right: removing the region $A$, the remaining tensor network can be used to define a map $T_A$ from the Hilbert space associated with the revealed legs (blue endpoints) to the boundary Hilbert space. In this case, the map cannot be an isometry, since the dimension of the input space is larger than the dimention of the output space.}
    \label{fig:TNremove}
\end{figure}

To each subset $A$ of vertices in the tensor network (we will call this a \emph{region}), we can associate a Hilbert space ${\cal H}_A = \otimes^{|\partial A|}(\mathbb{C}^D)$,  where $\partial A$ is set of edges (including external edges) with exactly one vertex in $A$. Keeping only the vertices in $A$ and edges connected to these vertices, we have a smaller tensor network that constructs a state $|\psi_A \rangle$ on ${\cal H}_A$.

We can define a map $\hat T_A: {\cal H}_A \to {\cal H}_\partial$ starting from the state $|\psi_{\bar{A}} \rangle$ associated with the complementary region $\bar{A}$ and using the Choi–Jamiołkowski isomorphism to reinterpret this as a map. That is, if $(\psi_{\bar{A}})_{\vec{i} \vec{j}}$ is the tensor associated with the state $|\psi_{\bar{A}} \rangle$ in our chosen basis, with $\vec{i}$ and $\vec{j}$ the indices associated with edges in $\partial A$ and the remaining boundary edges respectively, we define
\begin{equation} \label{eq:CJiso}
\hat T_A |i_1 \rangle \otimes \cdots \otimes |i_M \rangle = \sum_{\vec{j}}(\psi_{\bar{A}})_{\vec{i} \vec{j}} |j_1 \rangle \otimes \cdots \otimes |j_N \rangle~,
\end{equation}
where $M=|\partial A|$, the number of legs crossing the boundary of $A$.

We will now see that (with an appropriate normalization) the map $\hat T_A$ approaches an isometry in the limit of large bond dimension if the region $A$ satisfies the extremality condition that there is no region containing $A$ that has fewer boundary legs.

\subsection{Isometries from generalized entanglement wedges}

We will consider a tensor network map $\hat T_A$ isometric if there is an appropriate normalization $T_A = {\cal N} \hat T_A$ such that
\[
T_A^\dagger T_A = \mathbb{1}_{{\cal H}_A} \;, 
\]or equivalently, the positive operator
\begin{equation}
\label{eq:Tprojector}
    P_A \equiv T_A T_A^\dagger = T_A T_A^\dagger T_A T_A^\dagger  = P_A^2~,
\end{equation}
is a projector.
This leads to
\[
\tr(T_A^\dagger T_A) = D_A \; ,
\]
so if it is possible to get an isometry from $\hat T_A$ after normalization, the properly normalized map will be
\begin{equation} \label{eq:normalized_TA}
T_A \equiv \sqrt{D_A} {\hat T_A \over \tr^{1 \over 2}(\hat T_A^\dagger \hat T_A)} \; .
\end{equation}
The isometry condition can be written alternatively as 
\begin{equation}
\label{eq:purity}
{\tr(\hat \rho_A^2) \over \tr^2(\hat \rho_A)} = {1 \over D_A}~,
\end{equation}
where we have defined the positive operator
\begin{equation}
\hat \rho_A \equiv  \hat T_A^\dagger\hat T_A~.
\end{equation}
Equation \eqref{eq:purity} is the statement about the purity \(\tr (\rho_A^2)\) of the normalized density operator $\rho_A \equiv \hat \rho_A/\tr(\hat \rho_A)$, namely that it takes the minimum value possible. 

We can use the results of \cite{Hayden:2016cfa} to estimate this purity for random tensor network states. In that paper, the authors were interested in R\'enyi entropies of reduced density operators associated with boundary subregions in a network, but the same approach works in our case to calculate the purity of $\rho_A$; we just think of the legs revealed by removing $A$ as a part of the boundary of a new tensor network and compute the R\'enyi entropy of the density operator for this subsystem.

The result of \cite{Hayden:2016cfa} is that in the limit of large bond dimension, the second R\'enyi entropy, averaged over the tensors in the network, is given by
\begin{equation}
\overline{S_2} = -\ln(Z_1) + \ln(Z_0)~,
\end{equation}
where the overline indicates that we have performed the average, and $Z_1$ and $Z_0$ are partition functions for an Ising model,
\begin{equation}
Z = \sum_{\{s_i\}} e^{-H(\{s_i\})}~,
\end{equation}
with a dynamical spin $s_i = \pm 1$ at the site of every tensor and with a fixed spin at each boundary leg. For the partition function $Z_1$, all the spins at the boundary of region $A$ are set to $+1$ while the remaining spins on the boundary of the network are set to $-1$. For $Z_0$, all boundary spins are set to $+1$. The Hamiltonian is given by
\begin{equation}
\label{eq:Ising}
H(\{s_i\}) = -{1 \over 2} \log (D) \sum_{<xy>}  (s_x s_y - 1)~,
\end{equation}
where the coupling strength is determined by the bond dimension $D$, so we get an additional energy $\log D$ for each edge with a different spin value on either end.

In the limit of large bond dimension, this sum is dominated by the configuration of spins that minimizes the Hamiltonian, which is the configuration that minimizes the number of links with differing spins. When $A$ satisfies our extremality condition, i.e. when no region $B$ containing $A$ has fewer boundary legs than $A$, the best we can do is set all dynamical spins to $+1$ for $Z_0$ and to $-1$ for $Z_1$. In this case, the Hamiltonian vanishes in the first case and is equal to $\log(D_A)$ in the second case, so we obtain the result:
\begin{equation}
\overline{S_2} = \log(D_A)~.
\end{equation}
At leading order in large bond dimension, the purity is then
\begin{equation}
\overline{\tr(\hat \rho_A^2) \over \tr^2(\hat \rho_A)}=e^{-\overline{S_2}}= \frac{1}{D_A}~,
\end{equation}
which is the minimum possible value, signalling an isometry. 

On the other hand, if there is some region $B$ whose boundary has a smaller number of links than the boundary of $A$, the best we can do is set all dynamical spins to $-1$ in the region $B$ and $+1$ outside $B$ for $Z_1$, and again to $+1$ everywhere for $Z_0$. In this case we obtain the result
\begin{equation}
\overline{S_2} = \log(D_B)~,
\end{equation}
where $D_B$ is the dimension of the Hilbert space $\mathcal{H}_B$. The purity now satisfies 
\begin{equation}
\overline{\tr(\hat \rho_A^2) \over \tr^2(\hat \rho_A)}=e^{-\overline{S_2}} = \frac{1}{D_B} > \frac{1}{D_A}~,
\end{equation}
so the map $T_A$ is \emph{not} an isometry. This matches our intuition that to map $\mathcal{H}_A$ to the boundary Hilbert space $\mathcal{H}_{\partial}$ one has to go through $\mathcal{H}_{B}$, a Hilbert space of smaller dimension than $\mathcal{H}_{A}$, providing an obstruction for the map to be an isometry.

As a further check, we note that for an isometry $T_A$, $T_A T_A^\dagger$ is a projection, so $(T_A T_A^\dagger)^n = T_A T_A^\dagger$ and $\tr((T_A T_A^\dagger)^n) = \tr(T_A T_A^\dagger) = D_A$. In \cref{app:average}, we verify explicitly that the average of \(\tr({( T_A T_A^\dagger)^n})\) is $D_A$ at leading order in large $D$ for all $n$.

We have thus shown that the map $T_A$ obeys the isometry condition (\ref{eq:purity}) to leading order in large $D$ if and only if the region $A$ is extremal in the sense that there is no region that contains it that has smaller ``area'', which is the number of boundary legs in our tensor network model. For large but finite $D$, the map $T_A$ for typical tensors will be an approximate isometry. In section \ref{sec:largeD}, we will describe a sense in which we can take the large $D$ limit precisely to define an exact isometry associated with an extremal region $A$. Alternatively, through specific choices of tensors, we can ensure that the map associated with an extremal region is an exact isometry for finite $D$. In the discussions below, in order that we can work with precisely defined subalgebras associated with bulk regions in the finite-dimensional context, we will sometimes consider tensor networks such that the map $T_A$ is an exact isometry for every extremal region $A$ and further, that the map $T_{BA}:\mathcal{H}_A \to \mathcal{H}_B$ is an isometry for any extremal $A,B$ with $A \subset B$. We refer to such a network as an {\it isometric tensor network}.

\subsection{Algebras for extremal regions} 

\label{sec:algebras_extremal_regions}

In this section, we will consider some extremal region $A$ and the associated map $T_A$. We will assume that this is an isometry; in the case where it is only an approximate isometry, the various results that we describe will only be true in some approximate sense. 

Let ${\cal M}_\partial = B({\cal H}_\partial)$ be the algebra of operators on the boundary Hilbert space. We would like to argue that there are various subalgebras ${\cal M}_\partial$ that can be naturally associated with an extremal bulk region $A$. 

Let $B({\cal H}_A)$ be the algebra of operators on the Hilbert space ${\cal H}_A$ associated with the region $A$. For any ${\cal O}_A \in B({\cal H}_A)$, we have a natural map to a boundary operator via
\[
\mathcal{O}_A \to T_A \mathcal{O}_A T_A^\dagger  \; .
\]
The image of this map gives some subspace of ${\cal M}_{\partial}$ regardless of whether $T_A$ is an isometry. However, when it is, we have
\[
T_A \mathcal{O}_A T_A^\dagger T_A \mathcal{O}_B T_A^\dagger = T_A \mathcal{O}_A \mathcal{O}_B T_A^\dagger~,
\]
so the image is closed under multiplication and therefore forms a subalgebra. Since the map $T_A$ also preserves the adjoint structure, i.e.
\[
(T_A \mathcal{O}_A T_A^\dagger)^\dagger = T_A \mathcal{O}_A^\dagger T_A^\dagger \; ,
\]
the image of $B({\cal H}_A)$ is more precisely a $*$-subalgebra $\mathcal{C}_A$ of the boundary algebra. This can be equivalently be characterized as the set of operators $P_A B(\mathcal{H}_\partial) P_A$, where $P = T_A T_A^\dagger$ is the projector to the subspace of ${\cal H}_\partial$ that is the image of ${\cal H}_A$. An algebra defined in this way is known as a {\it corner}. 

This subalgebra $\mathcal{C}_A$ is generally \emph{not} {\it unital} i.e. it does not  generally contain the identity operator in $B(\mathcal{H}_\partial)$, so it is not a \emph{von Neumann} subalgebra of the larger algebra.\footnote{Here, we are using the terminology that any von Neumann subalgebra is assumed to be {\it unital}, i.e. to contain the identity of the full algebra. For a finite-dimensional Hilbert space, any unital $C^*$-subalgebra is a von Neumann subalgebra.} We can generate a von Neumann subalgebra by taking the double commutant $\mathcal{N}_A \equiv \mathcal{C}_A''$. This is the subalgebra of operators of the form
\[
P_A {\cal O} P_A + \lambda (\mathbb{1}-P_A) \;,
\]
for arbitrary $\lambda \in \mathbb{C}$. The algebra $\mathcal{N}_A$ is the von Neumann subalgebra of the boundary algebra most canonically associated with the region $A$, but it is not generally a \emph{factor}\footnote{A von Neumann algebra $\mathcal{N}$ is called a factor if there are no non-trivial operators in $\mathcal{N}$ that commute with all operators in $\mathcal{N}$. Operators that do satisfy this property form a subalgebra called the center, $\mathcal{Z}(\mathcal{N})=\{\widetilde{\mathcal{O}} \in \mathcal{N}\, |\, [\widetilde{\mathcal{O}},\mathcal{O}] =0~,\,\forall \mathcal{O}\in \mathcal{N} \}$.}, since operators of the form $a P_A + b (\mathbb{1}-P_A)$ give a non-trivial center. In order to demonstrate the strong subadditivity properties of our proposed entropy formula (\ref{Sdiff}) in \cite{Sahu:2025upe}, it was necessary to assume that the algebra associated with a region to be a von Neumann factor, so we would like to ask whether this is possible to achieve in the tensor network context. 

A subalgebra ${\cal N}$ of the algebra ${\cal M}$ of operators on a finite dimensional Hilbert space ${\cal H}$ is a von Neumann factor if and only if we can represent ${\cal H} = {\cal H}_1 \otimes {\cal H}_2$ and ${\cal N}$ is the set of operators of the form
\[
{\cal N} = U ( B({\cal H}_1) \otimes \mathbb{1}) U^\dagger \; 
\]
where $U: {\cal H}_1 \otimes {\cal H}_2 \to {\cal H}$ is some fixed unitary. This means that after performing the change of basis associated with $U$, the operators in ${\cal N}$ are represented by block diagonal matrices whose blocks are all equal. 

In our tensor network context, we thus require that in some basis, the mapping from $B({\cal H}_A)$ to ${\cal M}_\partial$ takes the form
\[
{\cal O} \to \left( \begin{array}{cccc} {\cal O} & 0 & & 0 \cr 0 & {\cal O} & & 0 \cr
& & \ddots & \cr
0 & 0 & & {\cal O} \end{array} \right)~.
\]
We see that the image operator acts identically on a set of $m = \dim{\cal H}_\partial/\dim{\cal H}_A$ subspaces with the dimension of ${\cal H}_A$ that are orthogonal to one another and together span the full boundary Hilbert space.

This observation suggests the following construction of a von Neumann factor of $B(\mathcal{H}_\partial)$ associated with the region $A$. Choose isometries $T^{(a)}$, $a=1, \dots, m$ such that $T^{(1)} = T_A$ and the $T^{(a)}$ map into orthogonal subspaces of $\mathcal{H}_\partial$. In this case, 
\[
U_A \equiv (T^{(1)} \dots T^{(m)})
\]
defines a unitary from $\mathcal{H}_A \otimes \mathbb{C}^m$ to $\mathcal{H}_\partial$. Using this, we can define an operator map
\begin{equation}
\varphi(\mathcal{O}_A) \equiv U_A ( \mathcal{O} \otimes \mathbb{1}) U_A^\dagger = \sum_a T^{(a)} {\cal O}_A T^{(a)}{}^\dagger \; .
\label{opmap}
\end{equation}
The image defines a subalgebra $\mathcal{M}_A$ of the boundary algebra that is isomorphic to $B(\mathcal{H}_A)$, is a factor, and includes the identity of the boundary algebra. 

The isometries \(T^{(a)}\) define orthogonal projectors $P^{(a)} = T^{(a)}T^{(a)\dagger}$ that satisfy
\[
\sum_a T^{(a)}T^{(a)\dagger}=\mathbb{1}_\partial~,
\qquad 
T^{(a)\dagger}T^{(b)}=\delta^{ab}\mathbb{1}_A~.
\]

A notable feature of this identification of a boundary factor subalgebra with a region $A$ is the non-uniqueness associated with our choices of the auxiliary isometries $T^{(a)}$, $a=2,\dots, m$. We can understand this freedom in the language of quantum error correction. The isometry $T_A :\mathcal{H}_A \to \mathcal{H}_\partial$ defines a quantum code through which state of a logical Hilbert space $\mathcal{H}_A$ are encoded in a code subspace $T_A \mathcal{H}_A$ of a physical Hilbert space $\mathcal{H}_\partial$. The map $\varphi$ defines an operator map, instructing us how to implement operations on the logical Hilbert space via operations on the physical Hilbert space. In general, such an operator map should satisfy
\[
\varphi({\cal O}_1) \varphi({\cal O}_2) = \varphi({\cal O}_1 {\cal O}_2)~,  \qquad \qquad \varphi({\cal O}^\dagger) = \varphi({\cal O})^\dagger~, 
\]
so that the algebraic structure is preserved and
\begin{equation} \label{eq:qec}
T_A {\cal O}_A = \varphi({\cal O}_A) T_A  \; ,
\end{equation}
so that the operations are represented faithfully on the code subspace. If we additionally demand that doing nothing to the logical Hilbert space should do nothing to the physical Hilbert space, then $\varphi$ must be a unital map
\[
\varphi(\mathbb{1}_A) = \mathbb{1}_\partial~.
\]
The operator maps defined in (\ref{opmap}) are the most general ones satisfying these requirements. The freedom in choosing the auxiliary isometries $T^{(a)}, a>1$ is related to the the fact that we do not care how the mapped operators act outside the code subspace apart from the requirement that the image of $\varphi$ is a unital algebra. 

We can describe the freedom in choosing the auxiliary isometries more explicitly by working in a basis corresponding to the block diagonal representation of $\varphi({\cal O})$ given above. The isometry $T_A = T^{(1)}$ maps an orthonormal basis $\{\ket{\psi_j}\}$ of $\mathcal{H}_A$ to an orthonormal set in $\mathcal{H}_\partial$,
\[
\ket{\phi^1_j} = T^{(1)}\ket{\psi_j}, \qquad j = 1, \dots, \dim\mathcal{H}_A~.
\]
The auxiliary isometries $T^{(a)}$ amount to a choice of completion of this set to a full orthonormal basis for $\mathcal{H}_\partial$,
\[
T^{(a)}\ket{\psi_j} = \ket{\phi^a_j}, \qquad a = 1,\dots, m\;;\quad j = 1,\dots, \dim\mathcal{H}_A~,
\]
so that the unitary $U_A : \mathcal{H}_A \otimes \mathbb{C}^m \to \mathcal{H}_\partial$ acts as $U_A\ket{\psi_j, e_a} = \ket{\phi^a_j}$.

Any two admissible completions of $\{\ket{\phi^1_j}\}$ to a full orthonormal basis differ by a unitary $U_\perp$ acting on the orthogonal complement of $\mathrm{span}\{\ket{\phi^1_j}\}$, which has dimension $D_\partial - D_A$. Making use of the basis associated with one choice of isometries, such that the operator map is represented as
\[
{\cal O} \to  \left( \begin{array}{cc} {\cal O} & 0 \cr 0 & {\cal O} \otimes \mathbb{1}_{m-1}  \end{array} \right) \qquad \qquad m \equiv {D_\partial \over D_A} ~.
\]
the operator map associated with any other choice is 
\[
{\cal O} \to  \left( \begin{array}{cc} {\cal O} & 0 \cr 0 & U_\perp ({\cal O} \otimes \mathbb{1}_{m-1}) U_\perp^\dagger \end{array} \right)  ~.
\]
The unitaries $U_\perp$ and $U_\perp (\mathbb{1} \otimes u_{m-1})$ give the same operator map, so overall, the space possible boundary von Neumann factor subalgebras associated with $T_A$ is parameterized by $U(D_\partial - D_A)/U(m - 1)$.

It is interesting to note (see Appendix~\ref{app:CornerAvg}) that this operator map, averaged over all choices of $U_\perp$ according to the Haar measure gives a map
\begin{equation}
    \overline{\varphi}(\mathcal{O}_A) \equiv \mathbb{E}_\perp\!\left[\varphi^{(U_\perp)}(\mathcal{O}_A)\right] 
    = T_A \mathcal{O}_A T_A^\dagger + \tau(\mathcal{O}_A)(\mathbb{1}_\partial - P_A)~,
\end{equation}
where $\tau(\mathcal{O}_A) = \frac{1}{D_A}\mathrm{Tr}(\mathcal{O}_A)$ is the 
normalized trace and $P_A = T_A T_A^\dagger$ projects onto the code subspace. This is the minimal upgrade to the canonical map ${\cal O} \to T_A {\cal O} T_A^\dagger$ to make it unital, but the resulting map $\overline{\varphi}$ no longer preserves the product structure.

\subsubsection{Summary and formal results}

The results in this section are formalized in sections 2 and 3 of \cite{FactorCode}. There, a situation where we have an isometry $T: {\cal H}_{\rm code} \to {\cal H}_{\rm phys}$ is referred to as a {\it quantum code}, and a linear map $\varphi:B({\cal H}_{\rm code}) \to B({\cal H}_{\rm phys})$  preserving the $*$-algebra structure) and satisfying $\varphi({\cal O}) T = T {\cal O}$ is referred to as an {\it operator map}. It is proven that:
\begin{proposition}
\label{factor}
    For a quantum code $T : \mathcal{H}_{\rm code} \to \mathcal{H}_{\rm phys}$ with operator map $\varphi$, the following are equivalent:
\begin{enumerate}
        \item The operator map $\varphi$ is unital, i.e. $\varphi(I_{\rm code}) = I_{\rm phys}$.
        \item $\varphi(U)$ is unitary for every unitary $U$.
        \item The ratio $m = \dim \mathcal{H}_{\rm phys}/\dim \mathcal{H}_{\rm code}$ is an integer and there is a unitary $W:{\cal H}_{\rm code} \otimes \mathbb{C}^m \to \mathcal{H}_{\rm phys}$ such that 
        \[
        \varphi({\cal O}) = W ({\cal O} \otimes I_m ) W^\dagger \;, 
        \]
        and there is some unit vector  $e_1 \in \mathbb{C}^m$ such that 
        \[
        T |\psi \rangle = W (|\psi \rangle \otimes e_1 ) \qquad \forall |\psi \rangle \in {\cal H}_{\rm code} \;. 
        \]
        \item The algebra $\varphi(B(\mathcal{H}_{\rm code}))$ is a unital von Neumann factor subalgebra of $B(\mathcal{H}_{\rm phys})$.
    \end{enumerate}
\end{proposition}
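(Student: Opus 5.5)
The plan is to prove the cycle of implications $1 \Rightarrow 3 \Rightarrow 2 \Rightarrow 1$, and then handle item 4 separately via $3 \Rightarrow 4$ and $4 \Rightarrow 1$. Throughout, $\varphi$ is a $*$-homomorphism with $\varphi(\mathcal{O})T = T\mathcal{O}$. All spaces are finite dimensional.

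\emph{Step $1 \Rightarrow 3$.} Assume $\varphi$ is unital. Let $e_{ij}$ be matrix units of $B(\mathcal{H}_{\rm code})$ in an orthonormal basis $\{|\psi_j\rangle\}$. Then $p_i = \varphi(e_{ii})$ are mutually orthogonal projections summing to $\varphi(I) = I_{\rm phys}$. The partial isometry $\varphi(e_{i1})$ maps $p_1\mathcal{H}_{\rm phys}$ isometrically onto $p_i\mathcal{H}_{\rm phys}$, so all $p_i$ have a common rank $m$. Hence $\dim\mathcal{H}_{\rm phys} = m\dim\mathcal{H}_{\rm code}$, and $m$ is an integer.

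Next, choose an orthonormal basis $\{f_a\}_{a=1}^m$ of $p_1\mathcal{H}_{\rm phys}$, and define
\[
W(|\psi_i\rangle\otimes e_a) = \varphi(e_{i1}) f_a .
\]
This $W$ is unitary. A check on matrix units gives $\varphi(e_{ij}) = W(e_{ij}\otimes I_m)W^\dagger$, and linearity extends this to all $\mathcal{O}$.

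To get $T$ into the required form, use the intertwining relation: $\varphi(e_{11})T|\psi_1\rangle = Te_{11}|\psi_1\rangle = T|\psi_1\rangle$. So $T|\psi_1\rangle$ is a unit vector in $p_1\mathcal{H}_{\rm phys}$, and we may choose $f_1 = T|\psi_1\rangle$. Then
\[
T|\psi_i\rangle = Te_{i1}|\psi_1\rangle = \varphi(e_{i1})T|\psi_1\rangle = W(|\psi_i\rangle\otimes e_1),
\]
which is item 3.

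The key subtlety in this step is the choice of basis for $p_1\mathcal{H}_{\rm phys}$. It must be adapted to $T$, and this is exactly where the code condition enters. I expect this to be the main point requiring care.

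\emph{Step $3 \Rightarrow 2$.} Assume item 3. For unitary $U$, $U\otimes I_m$ is unitary, and conjugating by the unitary $W$ preserves unitarity. So $\varphi(U)$ is unitary.

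\emph{Step $2 \Rightarrow 1$.} Take $U = I_{\rm code}$. Then $\varphi(I)$ is a projection, since $\varphi$ is a $*$-homomorphism, and it is unitary by item 2. The only unitary projection is the identity, so $\varphi(I) = I_{\rm phys}$.

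\emph{Steps $3 \Rightarrow 4$ and $4 \Rightarrow 1$.} For $3 \Rightarrow 4$: the image of $\varphi$ is $W(B(\mathcal{H}_{\rm code})\otimes I)W^\dagger$. This contains the identity and has commutant $W(I\otimes M_m)W^\dagger$, so its center is trivial and it is a unital factor.

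For $4 \Rightarrow 1$: if the image is a unital subalgebra, then $I_{\rm phys} = \varphi(X)$ for some $X$. Then
\[
\varphi(I) = \varphi(I)\varphi(X) = \varphi(X) = I_{\rm phys},
\]
where the middle equality holds because $\varphi(I)$ is the unit of the image algebra (as $\varphi(I)\varphi(Y) = \varphi(Y)$ for all $Y$). Here unitality already suffices and factoriality is not even needed. If "unital subalgebra" is read only as containing a unit of its own, I would instead note that $\varphi(I)$ is a central projection in $B(\mathcal{H}_{\rm phys})$ restricted to the image. The paper's convention that von Neumann subalgebras contain the identity of the full algebra makes the first argument apply directly.
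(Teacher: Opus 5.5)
Your proof is correct. In $1\Rightarrow 3$ the vectors $\varphi(e_{i1})f_a$ are orthonormal because $\varphi(e_{i1})^\dagger\varphi(e_{j1})=\varphi(e_{1i}e_{j1})=\delta_{ij}\,p_1$. The code condition $\varphi(e_{11})T|\psi_1\rangle=T|\psi_1\rangle$ is exactly what allows you to take $f_1=T|\psi_1\rangle$. The remaining implications are immediate.

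The paper does not prove this proposition itself; it defers the proof to the companion paper. Its discussion in Section~\ref{sec:algebras_extremal_regions} points to a different route. It starts from the structure theorem that a unital factor subalgebra of a finite-dimensional $B(\mathcal{H})$ has the form $U(B(\mathcal{H}_1)\otimes\mathbb{1})U^\dagger$. It then builds the unitary as $U_A=(T^{(1)}\cdots T^{(m)})$ from a chosen orthogonal completion of $T^{(1)}=T$, and asserts that \eqref{opmap} is the most general unital operator map.

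You go in the opposite direction and never invoke the classification of finite-dimensional factors. The matrix units of $B(\mathcal{H}_{\rm code})$ give the tensor decomposition directly, and the completion is read off from $\varphi$ as $T^{(a)}|\psi_i\rangle=\varphi(e_{i1})f_a$. This buys you two things:
\begin{enumerate}
\item Self-containment, plus an explicit form of the correspondence in Proposition~\ref{Tphi} between unital operator maps and completions. The leftover freedom in $f_2,\dots,f_m$ is precisely the $U(m-1)$ equivalence there.
\item It shows that the factor condition in item 4 adds nothing beyond unitality. The code condition makes $\varphi$ injective, so its image is always a copy of $B(\mathcal{H}_{\rm code})$.
\end{enumerate}

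One correction: drop your fallback remark about reading ``unital'' as merely having an internal unit. Under that reading $4\Rightarrow 1$ is false. The corner map $\mathcal{O}\mapsto T\mathcal{O}T^\dagger$ is an operator map whose image is a factor with unit $TT^\dagger\neq I_{\rm phys}$ whenever $T$ is not unitary. What makes item 4 equivalent to the others is the convention that subalgebras contain $I_{\rm phys}$, which your main argument already uses. No central-projection argument can substitute for it.
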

The correspondence between von Neumann factor subalgebras, unital operator maps, and choices of auxiliary isometries is formalized via:
\begin{proposition}
\label{Tphi}
For a quantum code $T : \mathcal{H}_{\rm code} \to \mathcal{H}_{\rm phys}$, with integer $m = \dim  \mathcal{H}_{\rm phys} / \dim \mathcal{H}_{\rm code}$, there is a one-to-one correspondence between 
\begin{enumerate}
    \item unital operator maps, i.e. unital $*$-homomorphisms
$\varphi:B(\mathcal H_{\rm code}) \to B(\mathcal H_{\rm phys})$ satisfying
\[
\varphi(\mathcal O)T=T\mathcal O \qquad
\text{for all } \mathcal O\in B(\mathcal H_{\rm code}) \; .
\]
\item equivalence classes of orthogonal isometric completions, defined to be 
sets  $\{T^{(a)}, a=1\dots m\}$ of isometries where $T^{(1)} = T$ and $T^{(a)} \mathcal{H}_{\rm code}$ are orthogonal subspaces of $\mathcal{H}_{\rm phys}$. Two completions are said to be equivalent if
\[
T_2^{(a)} = \sum_{b=2}^m U_{ab} T_1^{(b)}\;, \qquad \qquad a = 2, \dots , m\;, \qquad U \in U(m-1)\; .
\]
\item unital von Neumann factor subalgebras $\mathcal M\subset B(\mathcal H_{\rm phys})$ such that $\mathcal M\cong B(\mathcal H_{\rm code})$ and the code subspace \(T\mathcal H_{\rm code}\) is invariant under $\mathcal M$.
\end{enumerate}
After choosing a reference operator map $\varphi_0$ where $\varphi_0$ is an arbitrary fixed reference operator map the above spaces can be identified with the quotient $U(nm - n)/U(m-1)$ where $n = \dim \mathcal{H}_{\rm code}$. 
\end{proposition}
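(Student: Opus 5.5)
Write $n=\dim\mathcal H_{\rm code}$ and $N=nm=\dim\mathcal H_{\rm phys}$. The plan is to build explicit maps $(2)\to(1)\to(3)\to(2)$, check that each pair of maps are mutually inverse, and then read off the parametrization from a transitive group action with a computable stabilizer.

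\emph{From (2) to (1).} Given a completion $\{T^{(a)}\}$, set $\varphi(\mathcal O)=\sum_a T^{(a)}\mathcal O T^{(a)\dagger}$, as in (\ref{opmap}). Orthogonality $T^{(a)\dagger}T^{(b)}=\delta^{ab}\mathbb 1$ makes this a $*$-homomorphism and gives $\varphi(\mathcal O)T=T\mathcal O$. Since the $m$ ranges have total dimension $nm=N$, we get $\sum_aT^{(a)}T^{(a)\dagger}=\mathbb 1$, so $\varphi$ is unital. Replacing $T^{(a)}$ for $a\ge2$ by $\sum_b U_{ab}T^{(b)}$ with $U\in U(m-1)$ leaves $\varphi$ unchanged, because $\sum_{a}\bar U_{ab'}U_{ab}=\delta_{bb'}$. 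So the map is well defined on equivalence classes.

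\emph{From (1) to (2).} Given a unital $\varphi$, Proposition~\ref{factor} supplies a unitary $W$ with $\varphi(\mathcal O)=W(\mathcal O\otimes I_m)W^\dagger$ and $T=W(\cdot\otimes e_1)$. Complete $e_1$ to an orthonormal basis $e_1,\dots,e_m$ of $\mathbb C^m$ and define $T^{(a)}=W(\cdot\otimes e_a)$. This is a completion whose associated $\varphi$ is the given one.

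\emph{Injectivity on classes.} If two completions give the same $\varphi$, look at $T_2^{(a)\dagger}T_1^{(b)}$. Because $T_1^{(b)\dagger}\varphi(\mathcal O)T_1^{(c)}=\delta_{bc}\mathcal O$, the operator $T_2^{(a)\dagger}T_1^{(b)}$ intertwines $B(\mathcal H_{\rm code})$ with itself, so by Schur's lemma it is a scalar $U_{ab}$. Completeness of both families then shows that $U$ is unitary, and since $T^{(1)}=T$ in both, $U$ is block diagonal with $U_{11}=1$. This Schur-lemma step, together with the bookkeeping that $U$ restricts to $U(m-1)$, is the only non-formal point and the one I expect to need the most care.

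\emph{(1) versus (3).} Send $\varphi\mapsto\mathcal M=\varphi(B(\mathcal H_{\rm code}))$. By Proposition~\ref{factor} this is a unital factor isomorphic to $B(\mathcal H_{\rm code})$, and $T\mathcal H_{\rm code}$ is invariant under it by the intertwining relation. Conversely, given such an $\mathcal M$, restricting to the invariant subspace gives a $*$-homomorphism $\mathcal M\to B(T\mathcal H_{\rm code})\cong B(\mathcal H_{\rm code})$ (conjugating by $T$). It is nonzero and unital, and since $\mathcal M$ is simple and the dimensions match, it is an isomorphism. Its inverse is a unital $\varphi$ with $\varphi(\mathcal O)T=T\mathcal O$. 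Uniqueness: two such $\varphi$ with the same image differ by an automorphism of $B(\mathcal H_{\rm code})$, which is inner, and that automorphism must fix the action on the code subspace, so it is trivial.

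\emph{The quotient.} Fix a reference completion. Every completion is obtained by applying a unitary $U_\perp$ on the $(N-n)$-dimensional complement of $T\mathcal H_{\rm code}$ to $T^{(2)},\dots,T^{(m)}$, and this action is transitive. Two such $U_\perp$ give equivalent completions exactly when they differ by $\mathbb 1_n\otimes u$ with $u\in U(m-1)$, after identifying the complement with $\mathcal H_{\rm code}\otimes\mathbb C^{m-1}$. Hence the spaces in (1)--(3) are identified with $U(nm-n)/U(m-1)$.
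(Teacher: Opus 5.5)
Your proposal is correct and takes essentially the same route as the paper's Section 2.3 sketch, whose full proof is deferred to the companion paper: build $\varphi=\sum_a T^{(a)}\mathcal{O}T^{(a)\dagger}$ from a completion, note that completions differ by a unitary $U_\perp$ on the orthogonal complement of $T\mathcal{H}_{\rm code}$, and observe that $U_\perp$ and $U_\perp(\mathbb{1}_n\otimes u)$ give the same operator map, which yields $U(nm-n)/U(m-1)$. Your Schur-lemma injectivity step (with $U_{a1}=\delta_{a1}$ forced by $T^{(1)}=T$) and the compression $X\mapsto T^\dagger X T$ used for (3)$\to$(1) fill in details the sketch leaves implicit, and both go through as written.
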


These results are further explained and proved in \cite{FactorCode}.

\section{Properties of the extremal region to subalgebra map for isometric tensor networks}
\label{sec:main}
In the previous section, we described how we can associate a family of boundary von Neumann factor subalgebras to any extremal region in an isometric tensor network. In this section, we consider the situation where we have two or more such regions and explore the relationship between the associated families of boundary subalgebras. We will ask when the properties of inclusion, intersection, and joining (taking the minimal extremal region that includes a pair of regions) can be carried over to the associated subalgebras and also investigate conditions under which subalgebras associated with different extremal regions can be taken to commute. Finally, we study in the tensor network context the algebraic entropy formula proposed in \cite{Sahu:2025upe} to correspond to generalized gravitational entropy.

\subsection{The inclusion property}

The most basic property that we proposed for the GW to algebra map in general spacetimes is the inclusion property:
\[
W_1 \subset W_2 \Leftrightarrow \mathcal{A}_{W_1} \subset \mathcal{A}_{W_2} \; .
\]
We would like to understand whether this holds in the tensor network context.

\subsubsection{Inclusion for canonical (non-unital) algebras}

For the canonical map $A \to \mathcal{N}_A$ from extremal regions to boundary von Neumann subalgebras, it is straightforward to show that inclusion of regions leads to algebra inclusion. For extremal regions $A \subset B$, the (assumed) isometries $T_A$ and $T_B$ are related by
\[
T_{A} = T_{B} T_{BA}~,
\]
where $T_{BA}$ is the isometry produced by the tensor sub-network associated to the region $B \setminus A$. Any operator in $\mathcal{N}_A$ takes the form
\[
T_A {\cal O}_A T_A^\dagger + \lambda \mathbb{1} = T_{B} T_{BA} {\cal O}_A T_{BA}^\dagger T_B^\dagger + \lambda \mathbb{1}  = T_{B} {\cal O}_A^{(B)} T_B^\dagger + \lambda \mathbb{1}~, 
\]
where we have defined ${\cal O}_A^{(B)} = T_{BA} {\cal O}_A T_{BA}^\dagger$.
The algebra $\mathcal{N}_B$ includes all operators of the form of the last expression, so $\mathcal{N}_A \subset \mathcal{N}_B$. The same conclusions hold for the corner subalgebras $\mathcal{C}_A$ (via the same argument without the $\mathbb{1}$ terms).

\subsubsection{Inclusion for factor algebras}

Next, we consider the inclusion property for the  von Neumann factor subalgebras defined in the previous section. In this case, we have a whole family of possible boundary algebras associated with each region, so whether the inclusion property holds might depend on our choices for the algebras. In this section, we will consider a set of regions and try to understand when it is possible for some choices of the operator maps to arrange that the boundary algebras have the same inclusion structure as the regions.

For any set ${\cal S}$ of extremal  subregions of the tensor network (equivalently, subsets of the tensor network vertices) we can define a partial ordering by inclusion. We write $i < j$ to indicate that a region $i$ is included in a region $j$.

To visualize the structure of ${\cal S}$ and its partial ordering, we can introduce a \emph{Hasse diagram}, which is a directed graph with a vertex for each set in ${\cal S}$ and a directed edge $i \to j$ between any two vertices $i$ and $j$ with $i < j$ and with no $k$ such that $i < k < j$.  Figures \ref{fig:Arborescence} and \ref{fig:Diamond} show examples of sets of regions and the corresponding Hasse diagrams. The Hasse diagram of Figure \ref{fig:Arborescence} has the special property of being a tree. In this case, we have a unique path between any two vertices that are connected by the graph. 

\subsubsection{Factor code networks}

A partially ordered set ${\cal S}$ of tensor network regions with associated Hilbert spaces $\mathcal{H}_i$ and isometries $T_{ji}:{\cal H}_i \to {\cal H}_j$ for $i < j$ 
can be understood as a network of quantum codes or {\it code network}.  

We can choose boundary algebras associated to the various regions having the desired inclusion structure if the code network can be given additional structure that we call a factor code network:
\begin{definition}
    A {\it factor code network} is a code network $\{\mathcal{H}_i\},\{T_{ji}\}$ of quantum codes together with a corresponding set of unital operator maps $\varphi_{ji}$ for $i \le j$ satisfying
    \[
    \varphi_{kj} \circ \varphi_{ji} = \varphi_{ki}  \qquad i \le j \le k \;,
    \]
    and
    \[
    \varphi_{ji}({\cal O}) T_{ji} = T_{ji} {\cal O}  \qquad \forall {\cal O} \in B(\mathcal{H}_i)\; .
    \]
\end{definition}
It follows immediately from Proposition \ref{factor} that
\begin{proposition}
    In a factor code network, the algebras $\mathcal{M}_i^{(j)} \equiv \varphi_{ji}(B(\mathcal{H}_i))$ are unital von Neumann factor subalgebras of $B(\mathcal{H}_j)$ with the property that $\mathcal{M}_i^{(k)} \subset \mathcal{M}_j^{(k)}$ for $i < j < k$.
\end{proposition}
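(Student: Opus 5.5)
The proof has two parts: each $\mathcal{M}_i^{(j)}$ is a unital factor by Proposition \ref{factor}, and the inclusion follows from the composition law in the definition of a factor code network. Throughout, fix $i<j<k$.

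\textbf{Factor property.} For any pair $i \le j$, the map $\varphi_{ji}: B(\mathcal{H}_i)\to B(\mathcal{H}_j)$ is, by definition of a factor code network, a unital operator map for the quantum code $T_{ji}:\mathcal{H}_i\to\mathcal{H}_j$. That is, it is a unital $*$-homomorphism satisfying $\varphi_{ji}(\mathcal{O})T_{ji}=T_{ji}\mathcal{O}$. Proposition \ref{factor}, applied via the implication (1)$\Rightarrow$(4), then gives directly that $\mathcal{M}_i^{(j)}=\varphi_{ji}(B(\mathcal{H}_i))$ is a unital von Neumann factor subalgebra of $B(\mathcal{H}_j)$. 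No further work is needed; we only need to observe that the hypotheses of that proposition are exactly the defining conditions.

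\textbf{Inclusion.} Take any $X\in\mathcal{M}_i^{(k)}$, so $X=\varphi_{ki}(\mathcal{O})$ for some $\mathcal{O}\in B(\mathcal{H}_i)$. The composition law gives
\[
X=\varphi_{kj}\bigl(\varphi_{ji}(\mathcal{O})\bigr).
\]
Here $\varphi_{ji}(\mathcal{O})\in B(\mathcal{H}_j)$, so $X$ lies in $\varphi_{kj}(B(\mathcal{H}_j))=\mathcal{M}_j^{(k)}$. This proves $\mathcal{M}_i^{(k)}\subset\mathcal{M}_j^{(k)}$.

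The same argument also yields the refinement $\mathcal{M}_i^{(k)}=\varphi_{kj}\bigl(\mathcal{M}_i^{(j)}\bigr)$, since $\varphi_{kj}$ is an injective $*$-homomorphism. Injectivity holds because $B(\mathcal{H}_j)$ is simple and $\varphi_{kj}$ is unital, hence nonzero.

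There is no real obstacle. The only point requiring care is that the composition axiom is stated for $i\le j\le k$, which covers the strict case used here. Consistency of the composite with the code condition, $\varphi_{ki}(\mathcal{O})T_{ki}=T_{ki}\mathcal{O}$ with $T_{ki}=T_{kj}T_{ji}$, is not needed for the statement and is already built into the definition.
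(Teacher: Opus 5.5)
Your proposal is correct and matches the paper, which simply notes that the result follows immediately from Proposition~\ref{factor}: the factor property is the implication (1)$\Rightarrow$(4) applied to each unital operator map $\varphi_{ji}$, and the inclusion is the composition law $\varphi_{ki}=\varphi_{kj}\circ\varphi_{ji}$. (Minor aside: the equality $\mathcal{M}_i^{(k)}=\varphi_{kj}(\mathcal{M}_i^{(j)})$ follows directly from the composition law, so you do not need injectivity of $\varphi_{kj}$ for it.)
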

In particular, if we have a maximal element $0$ of ${\cal S}$ corresponding to the full tensor network, we have a partially-ordered set of von Neumann factor subalgebras of $B(\mathcal{H}_0)$ whose inclusion structure is dictated by the ordering.

\begin{figure}
    \centering
\includegraphics[width=0.5\linewidth]{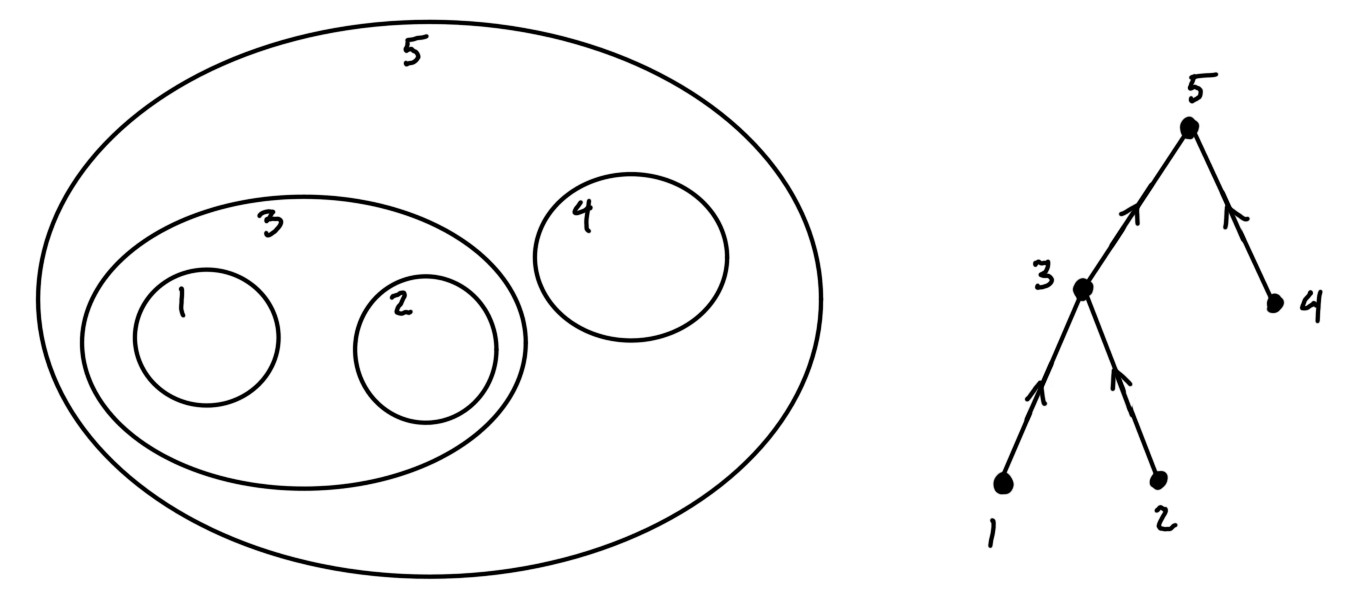}
    \caption{A collection of sets with a partial order by inclusion, and the corresponding Hasse diagram.}
    \label{fig:Arborescence}
\end{figure}

\subsubsection*{Inclusions for tree networks}

If we have a set of tensor network regions whose Hasse diagram forms an tree, and we assume that the Hilbert spaces associated with each vertex have isometries between them according to the directed edges (as we expect in the large $D$ limit of the tensor network), we now show that it is possible to choose isometries as above to associate subalgebras of the largest algebra (associated with the root) to each node with an inclusion structure that matches that of the subsets.

\begin{proposition}{\it Consider a set of Hilbert spaces $\{\mathcal{H}_i\}$ indexed by a partially ordered set whose Hasse diagram is a tree. Suppose that for $i < j$,  ${\rm dim}(\mathcal{H}_i)$ divides ${\rm dim}(\mathcal{H}_j)$ and we have an isometry $T_{ji}:\mathcal{H}_i \to \mathcal{H}_j$ such that $T_{ki} = T_{kj} T_{ji}$ whenever $i<j<k$. Then we can define a set of operator maps  $\varphi_{ji} : B(\mathcal{H}_i) \to B(\mathcal{H}_j)$ with the properties that 
\begin{enumerate}
\item 
$\varphi_{ji}$ is a unital injective *-homomorphism~.
    \item 
$
\varphi_{ki} = \varphi_{kj} \circ \varphi_{ji}
$ \,
for \, $i < j < k$ ~.
\item
$
T_{j i} \mathcal{O}_i |\Psi_i \rangle  = \varphi_{ji}(\mathcal{O}_i) T_{j i} |\Psi_i \rangle 
$\, for\, ${\cal O}_i \in B(\mathcal{H}_i)$~.
\item 
The algebras $\mathcal{M}_i^{(j)} \equiv \varphi_{ji}(B(\mathcal{H}_i))$ are von Neumann factor subalgebras with the property that $\mathcal{M}_i^{(k)} \subset \mathcal{M}_j^{(k)}$ for $i < j < k$~.
\end{enumerate}.}
\label{tree}
\end{proposition}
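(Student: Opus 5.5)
The plan is to define the operator maps only on the edges of the Hasse diagram and then extend them to all comparable pairs by composition. Because the diagram is a tree, the composite is well defined.

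\textbf{Step 1: edge maps.} For each directed edge $i \to j$ of the Hasse diagram, $T_{ji}$ is an isometry and $m_{ji} = \dim\mathcal{H}_j/\dim\mathcal{H}_i$ is an integer. We complete $T_{ji}$ to an orthogonal isometric completion $\{T_{ji}^{(a)}\}_{a=1}^{m_{ji}}$ with $T^{(1)}_{ji} = T_{ji}$. Concretely, we extend the orthonormal set $T_{ji}|\psi_k\rangle$ to an orthonormal basis of $\mathcal{H}_j$ and group the new vectors into blocks of size $\dim\mathcal{H}_i$. We then set
\[
\varphi_{ji}(\mathcal{O}) = \sum_a T^{(a)}_{ji}\,\mathcal{O}\,T^{(a)\dagger}_{ji}.
\]
By Proposition \ref{factor} (or Proposition \ref{Tphi}), this is a unital $*$-homomorphism satisfying $\varphi_{ji}(\mathcal{O})T_{ji} = T_{ji}\mathcal{O}$, with factor image. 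Injectivity is immediate, since $T_{ji}^\dagger\varphi_{ji}(\mathcal{O})T_{ji} = \mathcal{O}$.

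\textbf{Step 2: extension along paths.} For a general pair $i < k$, there is a chain $i = i_0 \to i_1 \to \cdots \to i_r = k$ of Hasse edges, since in a finite poset any strict relation refines to a maximal chain of covering relations. The tree property ensures this chain is unique: two distinct chains would give two distinct paths between $i$ and $k$ in the underlying undirected graph, contradicting acyclicity. We define
\[
\varphi_{ki} = \varphi_{i_r i_{r-1}}\circ\cdots\circ\varphi_{i_1 i_0}.
\]
Uniqueness of the chain is the point where the tree hypothesis is used, and it is the main obstacle to check carefully. For $i < j < k$, concatenating the chain from $i$ to $j$ with the chain from $j$ to $k$ gives a chain from $i$ to $k$. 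By uniqueness it coincides with the chosen one, which gives property 2. In a diamond, by contrast, two chains exist and their composites would generally differ; this is exactly why the diamond case needs the extra conditions discussed later.

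\textbf{Step 3: remaining properties.} Compositions of unital injective $*$-homomorphisms are again unital injective $*$-homomorphisms, which gives property 1. For property 3, we use the hypothesis $T_{ki} = T_{i_r i_{r-1}}\cdots T_{i_1 i_0}$ (iterating $T_{ki} = T_{kj}T_{ji}$ along the chain) and push $\mathcal{O}$ through one edge at a time:
\[
\varphi_{ki}(\mathcal{O})T_{ki} = \varphi_{i_r i_{r-1}}\circ\cdots\circ\varphi_{i_2 i_1}\bigl(\varphi_{i_1 i_0}(\mathcal{O})\bigr)\,T_{i_r i_{r-1}}\cdots T_{i_1 i_0},
\]
and by induction on $r$ this equals $T_{ki}\mathcal{O}$. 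With these, $\{\varphi_{ji}\}$ forms a factor code network, so property 4 follows from the preceding proposition.

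For completeness we spell out the inclusion in property 4. The image of a unital injective $*$-homomorphism from $B(\mathcal{H}_i)$ is isomorphic to a full matrix algebra, hence a unital factor. For $i < j < k$, property 2 gives $\mathcal{M}_i^{(k)} = \varphi_{kj}(\varphi_{ji}(B(\mathcal{H}_i))) \subset \varphi_{kj}(B(\mathcal{H}_j)) = \mathcal{M}_j^{(k)}$.
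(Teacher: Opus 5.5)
Your proposal is correct and is essentially the paper's construction. Composing the edge maps $\varphi_{ji}=\sum_a T^{(a)}_{ji}(\cdot)T^{(a)\dagger}_{ji}$ along the unique Hasse path gives exactly the paper's sum over the composite isometries $T^{(a_1)}_{jk_n}\cdots T^{(a_{n+1})}_{k_1 i}$, and you use the tree hypothesis where the paper does, namely uniqueness of that path, which yields the composition law.
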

This assertion is equivalent to Proposition 4 in \cite{FactorCode} and is proved there, but the basic construction is as follows. For any $i < j$ with intermediate vertices, we have a unique path $i \to k_1 \to \cdots \to k_n \to j$ in the Hasse diagram, since we are assuming this is a tree. We can define $m_{ij} = m_{i k_1} m_{k_1 k_2} \cdots m_{k_n j}$ isometries 
\begin{equation} \label{eq:compositions}
T^{(a_1 \cdots a_{n+1})}_{ji} = T^{(a_1)}_{j k_n} T^{(a_2)}_{k_n k_{n-1}} \cdots T^{(a_{n+1})}_{k_1 i} \; .
\end{equation}
Again, these map to orthogonal subspaces. Finally, for any $i < j$ we define the operator map
\[
\varphi_{ji}({\cal O}_i)= \sum_a T^{(a)}_{ji} {\cal O}_i (T^{(a)}_{ji})^\dagger \; .
\]

\begin{figure}
    \centering
\includegraphics[width=0.5\linewidth]{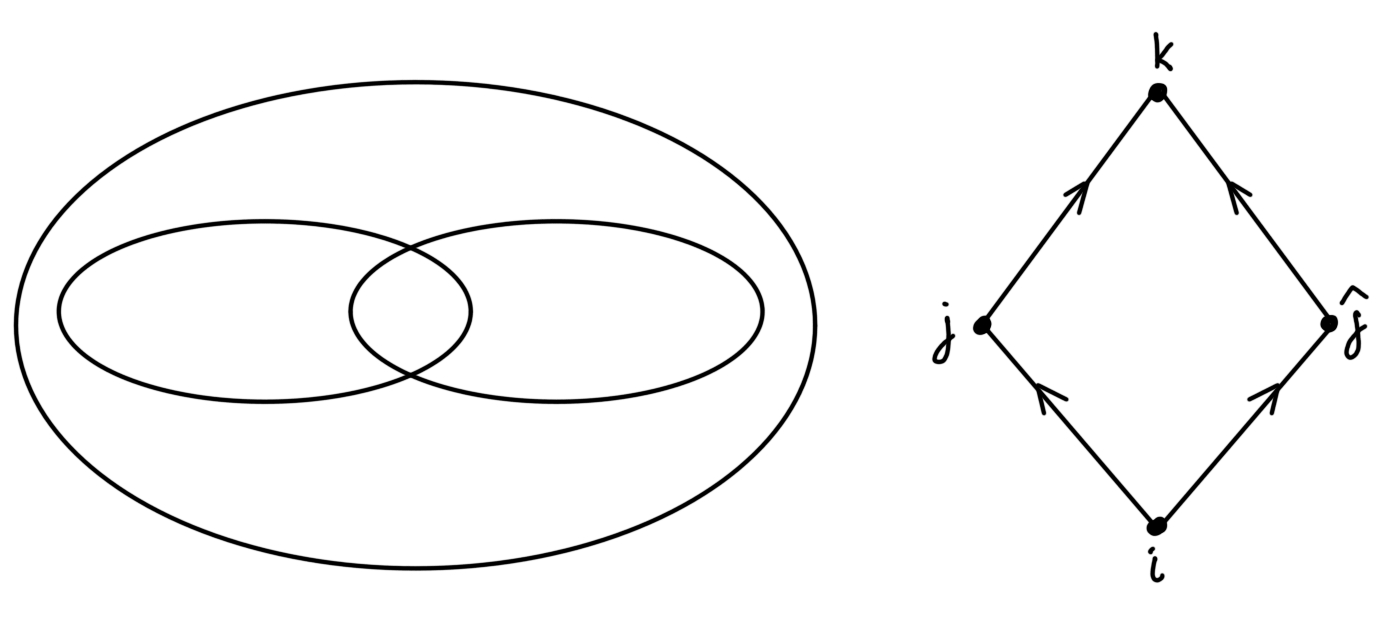}
    \caption{Sets $j$, $\hat{j}$, $i = j \cap \hat{j}$ and $k \supset j, \hat{j}$ with a diamond-shaped Hasse diagram.}
    \label{fig:Diamond}
\end{figure}

\subsubsection*{Inclusions for non-tree networks}

Next, we will show that such a construction of operator maps and nested subalgebras satisfying the properties of a factor code network is not always possible when we consider sets whose Hasse diagram is not a tree graph. We focus on the diamond example of Figure \ref{fig:Diamond} region inclusions $1 < 2,3 < 4$ of region inclusions. Here, it is possible to give precise necessary and sufficient conditions on when the promotion to a factor code network is possible. The success of the construction depends only on how the subspaces $T_{42} \mathcal{H}_2$ and $T_{43} \mathcal{H}_3$ sit inside $\mathcal{H}_4$ (see Figure \ref{fig:DiamondMaps}).

\begin{figure}
    \centering
\includegraphics[width=0.5\linewidth]{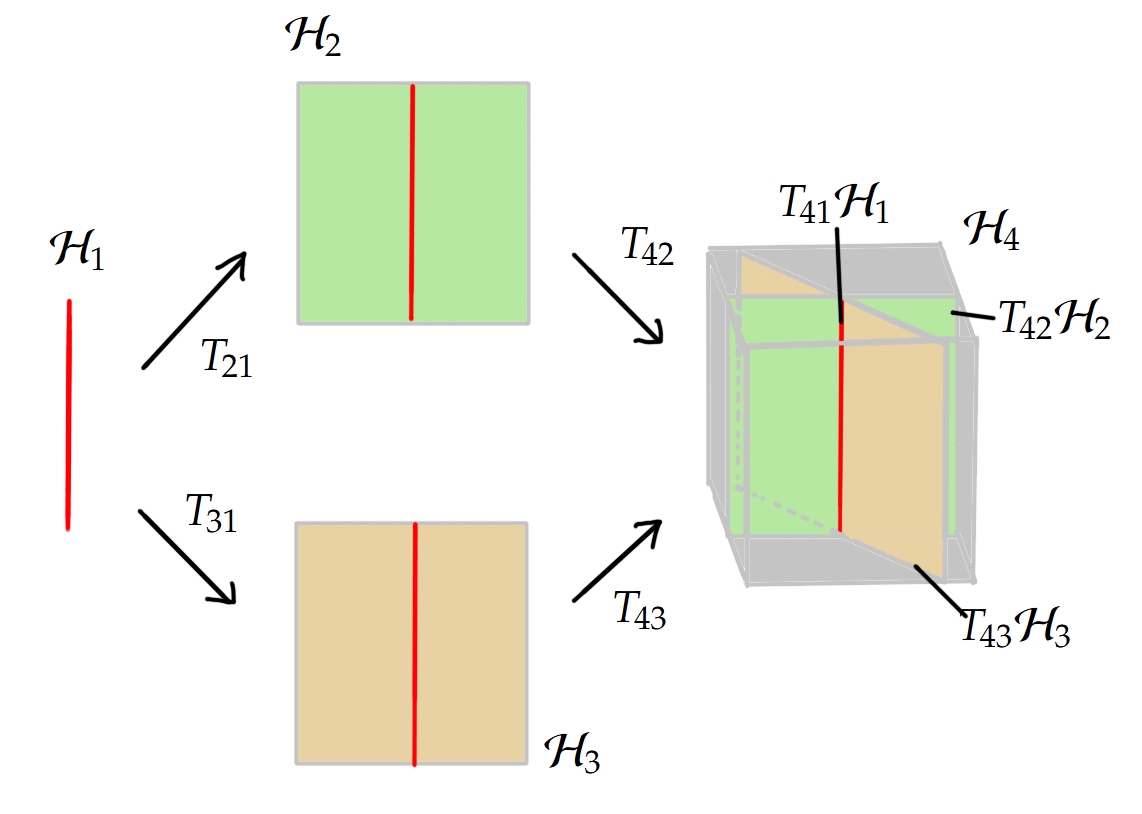}
    \caption{Hilbert spaces and isometries for a diamond-shaped Hasse diagram. The isometries can be promoted to a set of operator maps satisfying the properties of Proposition 1 if and only if the princpal angles between the two subspaces  $T_{42} \mathcal{H}_2$ and $T_{43} \mathcal{H}_3$ have degeneracies that are multiples of $\dim(\mathcal{H}_1)$.}
    \label{fig:DiamondMaps}
\end{figure}

The invariant geometric data describing the relative orientation of the two subspaces $A = T_{42} \mathcal{H}_2$ and $B = T_{43} \mathcal{H}_3$ is a set of {\it principal angles}. To define these, we note that it is always possible to divide the space into a direct sum of subspaces
\[
(A \cap B )\oplus (A \cap B^\perp ) \oplus (B \cap A^\perp )\oplus (\text{span}(A,B)^\perp )\oplus_i H_i \; ,
\]
where $H_i$ is two-dimensional, $\perp$ denotes orthogonal complement, and $A \cap H_i$ and $B \cap H_i$ are one-dimensional subspaces with some relative angle $0 \le \theta_i \le \pi/2$. To this set of angles, we add $\theta = 0$ with multiplicity $\dim (A \cap B)$ and $\theta = \pi/2$ with multiplicity $\min(\dim A \cap B^\perp, \dim B \cap A^\perp)$ to define the full set of principal angles of size $\min (\dim A, \dim B)$.

The principal angles can be defined algebraically considering the projectors $P_2 = T_{42} T_{42}^\dagger$ and $P_3 = T_{43} T_{43}^\dagger$. The set $\{\lambda_i\}$ of nonzero eigenvalues of $P_2 P_3 P_2$ (including multiplicities) corresponds to the set of principal angles with $\theta < \pi/2$ via $\lambda_i = \cos^2 \theta_i$.\footnote{For example, with two one-dimensional Hilbert spaces embedded in a higher dimensional space, we have $P_2 = |v_2 \rangle \langle v_2|$ and $P_3 = |v_3 \rangle \langle v_3|$ so $P_2 P_3 P_2 = \cos^2(\theta) |v_2 \rangle \langle v_2|$ where $\cos^2(\theta) = |\langle v_2 |v_3 \rangle|^2$, so we have a single angle characterizing the relative orientation, as expected.} 

The following result, proved in \cite{FactorCode}, shows that the construction of a set of operator maps yielding boundary subalgebras with the desired inclusion structure  is possible if and only if these angles each come with multiplicity that is a multiple of $\dim(\mathcal{H}_1)$:

\begin{proposition}{\it Consider a set of Hilbert spaces $\{\mathcal{H}_i\}$ associated with sets $i=1,2,3,4$ ordered as $1 < 2,3 < 4$, such that $\dim \mathcal{H}_i$ divides $\dim \mathcal{H}_j$ for $i < j$. Suppose we have isometries $T_{ji}:\mathcal{H}_i \to \mathcal{H}_j$ such that $T_{41} = T_{42} T_{21} = T_{43} T_{31}$. Then it is possible to construct a set of operator maps satisfying the conditions of Proposition \ref{tree} if and only if the principal angles between $T_{42} \mathcal{H}_2$ and $T_{43} \mathcal{H}_3$ have multiplicities which are multiples of $\dim(\mathcal{H}_1)$.}
\label{diamond}
\end{proposition}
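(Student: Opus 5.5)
Write $n_i=\dim\mathcal H_i$ and identify everything inside $\mathcal H_4$, with $P_2=T_{42}T_{42}^\dagger$, $P_3=T_{43}T_{43}^\dagger$. By Proposition \ref{factor}, a factor code network on the diamond amounts to unital factor subalgebras $\mathcal M_2,\mathcal M_3\subset B(\mathcal H_4)$ and $\mathcal M_1\subset\mathcal M_2\cap\mathcal M_3$, with $P_2\mathcal H_4$ invariant under $\mathcal M_2$, $P_3\mathcal H_4$ invariant under $\mathcal M_3$, and $T_{41}\mathcal H_1$ invariant under $\mathcal M_1$. In addition, the restrictions must reproduce $T_{ji}\mathcal O=\varphi_{ji}(\mathcal O)T_{ji}$ compatibly. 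So I will work entirely with commutants: $\mathcal M_1'\supset\mathcal M_2',\mathcal M_3'$, where $\mathcal M_1'\cong M_{m_{41}}$ and $\mathcal M_1\cong B(\mathcal H_1)$.

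\emph{Necessity.} Suppose the operator maps exist. Since $P_2\mathcal H_4$ is invariant under the $*$-algebra $\mathcal M_2$, we have $P_2\in\mathcal M_2'\subset\mathcal M_1'$, and likewise $P_3\in\mathcal M_1'$. Hence $P_2P_3P_2\in\mathcal M_1'$. Under $\mathcal H_4\cong\mathcal H_1\otimes\mathbb C^{m_{41}}$, every element of $\mathcal M_1'$ has the form $\mathbb 1_{\mathcal H_1}\otimes X$. Its spectral projections are therefore of the form $\mathbb 1\otimes Q$, with rank divisible by $n_1$. The eigenvalues of $P_2P_3P_2$ in $(0,1]$ give the principal angles $\theta<\pi/2$, so each such multiplicity is a multiple of $n_1$. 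For $\theta=\pi/2$, apply the same argument to $A\cap B^\perp=\ker(P_2-P_2P_3P_2)\cap P_2\mathcal H_4$. This is the range of a spectral projection of elements of $\mathcal M_1'$, so $\dim(A\cap B^\perp)$ is a multiple of $n_1$, and similarly $\dim(B\cap A^\perp)$. Their minimum is then a multiple of $n_1$ as well.

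\emph{Sufficiency.} Assume the divisibility. Use the canonical decomposition into the sectors $A\cap B$, $A\cap B^\perp$, $B\cap A^\perp$, $\mathrm{span}(A,B)^\perp$, and two-dimensional blocks grouped by angle. The plan is to build a single tensor factorization $\mathcal H_4=\mathcal H_1\otimes\mathbb C^{m_{41}}$ in which $P_2=\mathbb 1\otimes p_2$ and $P_3=\mathbb 1\otimes p_3$, and in which $T_{41}\mathcal H_1=\mathcal H_1\otimes e_1$ with $T_{41}=T_{42}T_{21}=T_{43}T_{31}$.

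\begin{enumerate}
\item Use the tree construction (Proposition \ref{tree}) on $1<2<4$ with an arbitrary completion of $T_{21}$. This gives $A\cong\mathcal H_1\otimes\mathbb C^{m_{21}}$, with $T_{41}\mathcal H_1$ as one copy.
\item Check that $P_3$ restricted to $A$, i.e. $P_2P_3P_2$, can be chosen of the form $\mathbb 1\otimes x$. This requires choosing the completion of $T_{21}$ so that its copies of $\mathcal H_1$ are aligned with the eigenspaces of $P_2P_3P_2$. That alignment is possible because each eigenspace has dimension divisible by $n_1$, and because $T_{41}\mathcal H_1\subset A\cap B$, which is the $\lambda=1$ eigenspace.
\item Transport the copies to $B$ through the partial isometry in the polar decomposition of $P_3P_2$, which maps the $\lambda$-eigenspace in $A$ isometrically onto the corresponding subspace of $B$. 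This gives copies of $\mathcal H_1$ in $B$ paired with those in $A$.
\item Complete the copies on $B\cap A^\perp$ and on $\mathrm{span}(A,B)^\perp$ arbitrarily. The required divisibility holds because the total dimensions $n_2$, $n_3$, and $n_4$ are divisible by $n_1$, and the remaining pieces are divisible by subtraction.
\end{enumerate}

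The result is a product basis in which both projectors act trivially on $\mathcal H_1$. I then define $\mathcal M_1=B(\mathcal H_1)\otimes\mathbb 1$. Next I split the $\mathbb C^{m_{41}}$ factor compatibly so that $\mathcal M_2$ and $\mathcal M_3$ are factors containing $\mathcal M_1$ and acting on $A$ as $T_{42}B(\mathcal H_2)T_{42}^\dagger$, and similarly for $B$. For this, each of $p_2\mathbb C^{m_{41}}$ and $p_3\mathbb C^{m_{41}}$ must be completed to a tiling of $\mathbb C^{m_{41}}$ by orthogonal copies. That is the single-code statement of Proposition \ref{Tphi}, applied to the codes $\mathbb C^{m_{21}}\to\mathbb C^{m_{41}}$ and $\mathbb C^{m_{31}}\to\mathbb C^{m_{41}}$. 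Finally, I verify the compositions $\varphi_{42}\circ\varphi_{21}=\varphi_{43}\circ\varphi_{31}$; both equal $\mathcal O\mapsto\mathcal O\otimes\mathbb 1$.

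The main obstacle is step 2: aligning the copies so that $P_2$ and $P_3$ simultaneously take the form $\mathbb 1\otimes(\cdot)$, while also keeping the fixed copy $T_{41}\mathcal H_1$ and $T_{31}$ compatible with $T_{21}$. On $A\cap B$ I will handle this by noting that $T_{42}T_{21}$ and $T_{43}T_{31}$ have the same image. I will then choose the copies in $A\cap B$ first and use them on both sides, so that the identification of $\mathcal H_1$ inside $\mathcal H_2$ and inside $\mathcal H_3$ is the same map.
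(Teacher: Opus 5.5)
Your necessity argument is correct up to a slip noted at the end. Your reduction of sufficiency is also correct: it suffices to find one factorization $\mathcal H_4\cong\mathcal H_1\otimes\mathbb C^{m_{41}}$ with $T_{41}\psi=\psi\otimes e_1$, $P_2=\mathbb 1\otimes p_2$ and $P_3=\mathbb 1\otimes p_3$. Once you have it, $\varphi_{21}$ and $\varphi_{31}$ are fixed by $T_{21}=T_{42}^\dagger T_{41}$ and $T_{31}=T_{43}^\dagger T_{41}$, and Proposition~\ref{Tphi} applied to the codes $\mathbb C^{m_{21}}\to\mathbb C^{m_{41}}$ and $\mathbb C^{m_{31}}\to\mathbb C^{m_{41}}$ finishes. (The paper defers its own proof to the companion work, so this is judged on its own.)

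\textbf{The gap is in how steps 3--4 produce that factorization.} Let $A_\lambda$ be the $\lambda$-eigenspace of $P_2P_3P_2$ in $A$, with $d_\lambda=\dim A_\lambda$. For an aligned copy $S_j:\mathcal H_1\to A_\lambda$ with $0<\lambda<1$, the transported copy satisfies $S_j^\dagger VS_j=\sqrt{\lambda}\,\mathbb 1$. So your copies in $B$ are not orthogonal to those in $A$, and steps 1--4 yield a spanning family of copies of $\mathcal H_1$, not a product basis. In particular, no orthogonal copies are ever built for the sector $\bigoplus_{0<\lambda<1}(\mathbb 1-P_2)P_3A_\lambda$, of dimension $\sum_{0<\lambda<1}d_\lambda$. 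This sector lies neither in $A$, nor in $B\cap A^\perp$, nor in $\mathrm{span}(A,B)^\perp$. It also cannot be tiled ``arbitrarily'' as in your step 4: a generic tiling there makes the cross terms $S_j^\dagger P_3R_l$ non-scalar, which destroys the form $P_3=\mathbb 1\otimes p_3$.

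\textbf{The fix is to make the two-subspace (Halmos) normal form explicit.} Set $S_j^\perp=(\mathbb 1-P_2)P_3S_j/\sqrt{\lambda(1-\lambda)}$. Using $P_2P_3S_j=\lambda S_j$, you can check the following:
\begin{itemize}
\item the $S_j^\perp$ are isometries with mutually orthogonal ranges, orthogonal to $A$, to $B\cap A^\perp$ and to $\mathrm{span}(A,B)^\perp$;
\item on $\mathrm{span}\{S_j,S_j^\perp\}\cong\mathcal H_1\otimes\mathbb C^2$ one has $P_2=\mathbb 1\otimes\mathrm{diag}(1,0)$ and $P_3=\mathbb 1\otimes\left(\begin{smallmatrix}\lambda&\sqrt{\lambda(1-\lambda)}\\ \sqrt{\lambda(1-\lambda)}&1-\lambda\end{smallmatrix}\right)$.
\end{itemize}
Adding arbitrary tilings of $A\cap B$ (with $T_{41}\mathcal H_1$ as the first copy), $A\cap B^\perp$, $B\cap A^\perp$ and $\mathrm{span}(A,B)^\perp$ gives the required orthogonal decomposition of $\mathcal H_4$. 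Alternatively, you can observe that your spanning family has Gram matrix $\mathbb 1_{\mathcal H_1}\otimes G$ and take the polar decomposition of its synthesis map, but you have to say this.

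Conceptually, the content is that $M_{n_1}$ embeds unitally, with $T_{41}$ as a copy, into
\[
\{P_2,P_3\}'=B(A\cap B)\oplus B(A\cap B^\perp)\oplus B(B\cap A^\perp)\oplus B(\mathrm{span}(A,B)^\perp)\oplus\bigoplus_{0<\lambda<1}\mathbb 1_2\otimes B(\mathbb C^{d_\lambda})
\]
if and only if $n_1$ divides the first four dimensions and every $d_\lambda$. By comparison, your step 2, which you flag as the main obstacle, is immediate. Compatibility with $T_{31}$ is automatic.

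Two smaller corrections:
\begin{itemize}
\item $\ker(P_2-P_2P_3P_2)\cap P_2\mathcal H_4$ is $A\cap B$, not $A\cap B^\perp$. You want $\ker(P_2P_3P_2)\cap P_2\mathcal H_4$; the conclusion is unaffected.
\item In step 2, divisibility of $\dim(A\cap B^\perp)$ needs the subtraction argument from $n_1\mid n_2$. The hypothesis only controls $\min(\dim A\cap B^\perp,\dim B\cap A^\perp)$.
\end{itemize}
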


\subsubsection{Inclusion for tensor network algebras}

We would now like to understand when the condition on principal angles holds for the code subspaces corresponding to regions in a tensor network with a diamond inclusion structure. Consider extremal regions in a tensor network with inclusion structure
\[
C \subset A,B \subset D \; .
\]
By the same argument as for generalized entanglement wedges in \cite{Sahu:2025upe}, for any extremal regions $A$ and $B$, the intersection $A \cap B$ is also extremal (see Appendix \ref{app:intersection} for more details in the tensor network context). From this it follows that the intersection $E$ of all extremal regions containing $A \cup B$ is also extremal, giving a smallest extremal region containing $A \cup B$. We therefore have an inclusion 
\[
C \subset A \cap B \subset A,B \subset E \subset D \; 
\]
of extremal regions. In the limit of large bond dimension, the tensor network maps from $C$ to $A \cap B$ and from  $D$ to $E$ become isometries, so if we can promote the code network associated with regions 
\[
A \cap B \subset A,B \subset E
\]
to a factor code network, the same should be true for the larger network of inclusions including $C$ and $D$.

In order to investigate whether the necessary and sufficient conditions of Proposition \ref{diamond} are likely to be satisfied, we need to understand the spectrum of $P_A P_B P_A$, where $P_A = T_AT_A^{\dagger}$ and $P_B=T_{B}T_B^{\dagger}$. The information about the spectrum is contained in the moments $\tr\left[(P_A P_B)^n\right]$. Indeed, assuming that $P_A$ and $P_B$ are projectors we have:
\begin{equation} \label{eq:PAPBdegeneracies}
\tr\left[(P_AP_B)^n\right] = \tr\left[(P_AP_BP_A)^n\right]=\sum_{i}d_i\lambda_i^n~,
\end{equation}
where $d_i$ is the multiplicity of the eigenvalue $\lambda_i=\cos^2 \theta_i$. To meet the conditions of Proposition \ref{diamond}, the multiplicities $d_i$ need to be an integer multiple of $D_{A\cap B}=\dim \mathcal{H}_{A\cap B}$. 

We can deduce information about the spectrum of $P_A P_B P_A$ for a typical instance of the tensor network by computing its averaged moments. We will now see that in the case where $A\cup B$ is extremal, these calculations suggest that in the large $D$ limit, $P_A P_B$ tends to a projector with a rank that is a multiple of $D_{A \cap B}$, in this case, the conditions of Proposition \ref{diamond} are satisfied. In the case where the intersection geometry is as in Figure \ref{fig:ABoverlap}, with no edges connecting $A - A\cap B$ and $B - A \cap B$ directly, we show directly using Corollary \ref{overlap} that the conditions of Proposition \ref{diamond} are satisfied so that (assuming an isometric tensor network) we have boundary algebras with the same inclusion structure as the regions.

In the case where $A \cup B$ is not extremal, our calculations do not provide evidence that the conditions of Proposition \ref{diamond} are satisfied. 

\subsubsection*{Spectral moment calculations}

In order to perform the desired calculations for averaged traces of projectors, we will make use of the following general result, described earlier in \cite{Penington:2022dhr,Hayden:2016cfa} and reviewed in \cref{app:haar}:

\paragraph{Master formula for tensor network averages}

Consider the quantity
\begin{equation} \label{eq:proj_moments}
\prod_{\alpha=1}^{m}
\operatorname{tr}\!\left(
P_{A_{i_{\alpha,1}}}
P_{A_{i_{\alpha,2}}}\cdots
P_{A_{i_{\alpha,k_\alpha}}}
\right)
\end{equation}
where 
\[
(i_{1,1}, \dots, i_{1, k_1},i_{2,1}, \dots, i_{2, k_2},\dots,i_{m,1}, \dots, i_{m, k_m}) = (1,2,\dots,n) \; ,
\]
and $P_A$ is the projector associated with the region $A$; the regions $A_i$ may be distinct or equal. In the limit of large bond dimension $D$, the leading order contribution to the average of this quantity can be computed by minimizing the energy of a certain ``spin'' model, where the spins are elements of the permutation group $S_n$ where $n = \sum_\alpha k_\alpha$ is the total number of projectors. The permutations live on the vertices of the tensor network graph which we denote as $\verts$ as well as a set of external vertices $\extv$ at the ends of the external legs. For each external vertex, the permutations is constrained to be 
\[
\kappa = \prod_{\alpha=1}^m (i_{\alpha,1}, \dots, i_{\alpha, k_\alpha}) \; ,
\]
determined by the trace structure in \eqref{eq:proj_moments}. The permutation at vertex $v$ is constrained to be an element of the subgroup
\[
H_v
=
\left\{
\sigma\in S_n:
\sigma(i)=i
\ \text{whenever}\ 
v\in A_i
\right\} \; .
\]
The energy of a permutation configuration is defined to be the sum over all edges of the ``Cayley distance''
$$
d_{\mathrm C}(\sigma,\rho)
\equiv
n-\#\operatorname{cyc}\!\left(\sigma^{-1}\rho\right),
$$
between the permutations $\sigma$ and $\rho$ at the two endpoints of the edge, where \(\#\operatorname{cyc}(\sigma)\) denotes the number of cycles of \(\sigma\), including fixed points. Equivalently, \(d_{\mathrm C}(\sigma,\rho)\) is the minimum number of transpositions required to transform \(\sigma\) into \(\rho\).

The minimum possible energy is defined by
$$
\mathcal E_{\kappa}(A_1,\ldots,A_n)
=
\min_{\substack{\pi_v\in H_v,\;v\in \verts\\
                 \pi_b=\kappa,\;b\in \extv}}
\;
\sum_{(x,y)\in\widehat E}
d_{\mathrm C}(\pi_x,\pi_y)~, 
$$
where the sum is over the extended set of edges $\widehat E$ which includes both the edges between internal vertices as well as external dangling edges. 

We have that
\begin{equation}
\begin{aligned}
&\overline{
\prod_{\alpha=1}^{m}
\operatorname{tr}\!\left(
P_{A_{i_{\alpha,1}}}
P_{A_{i_{\alpha,2}}}\cdots
P_{A_{i_{\alpha,k_\alpha}}}
\right)}
\\[2mm]
&\hspace{2cm}
=
D^{\,\sum_{i=1}^{n}|\partial A_i|
   -\mathcal E_{\kappa}(A_1,\ldots,A_n)}
\left(
g_\kappa+O(D^{-1})
\right)~, 
\end{aligned} 
\label{eq:master-formula1}
\end{equation}
where \(g_\kappa\) is the number of minimizing permutation configurations.

\paragraph{Note on normalisation.}
From the definition of the isometry $\hat T_A$ in \eqref{eq:normalized_TA},
the normalisation of each projector also depends on the random tensors. In
writing \eqref{eq:master-formula1} we have assumed that the average of the
trace of the normalized projectors can be replaced by the product of the
average of the unnormalized trace and the average of the normalization. As
we discuss in \cref{app:ratio-of-averages approximation}, this is justified because the covariance
of the unnormalized trace and the normalization is subleading in $D$.

\subsubsection*{Diamond inclusions for $A \cap B \subset A,B \subset A \cup B$ with extremal $A \cup B$}

Using the general result for averages we can now show the following in the case that $A\cup B$ is {\it strongly extremal}, which we define to mean that no region containing it has a smaller or equal number of boundary legs: 
\begin{proposition}
\label{prop:overlap-projectors-extremality}
Let \(A\)
and \(B\) be strongly extremal regions with associated operators \(P_A=T_AT_A^{\dagger}\) and \(P_B=T_BT_B^{\dagger}\). If the region \(A\cup B\) is strongly extremal, then for every \(n\in\mathbb N\) we have 
\begin{equation}
\overline{\tr\!\left[(P_AP_B)^n\right]}
=\frac{D_AD_B}{D_{A\cup B}}\left[1+\mathcal{O}(1/D)\right]~,
\label{eq:projector-overlap-extremal}
\end{equation}
where for example $D_A = \dim {\cal H}_A$ and $\mathcal{O}(1/D)$ denotes terms that are subleading in the large bond dimension limit.
\end{proposition}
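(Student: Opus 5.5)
We apply the master formula \eqref{eq:master-formula1} with $2n$ projectors, $P_{A_1}=P_A, P_{A_2}=P_B, P_{A_3}=P_A,\dots$, alternating, inside a single trace. The boundary permutation is then the full cycle $\kappa=(1\,2\,\cdots\,2n)\in S_{2n}$. The prefactor exponent is $\sum_i|\partial A_i| = n(|\partial A|+|\partial B|)$. The target $D_AD_B/D_{A\cup B}=D^{|\partial A|+|\partial B|-|\partial(A\cup B)|}$ therefore requires two things: the minimal energy must be
\[
\mathcal E_\kappa=(n-1)(|\partial A|+|\partial B|)+|\partial(A\cup B)|,
\]
and the minimizer must be unique, so that $g_\kappa=1$.

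The vertex constraints split the network into four zones. On $A\cap B$ all indices are fixed, so $\pi_v=e$. On $A\setminus B$ the odd indices are fixed, so $\pi_v$ permutes only the even labels $\{2,4,\dots,2n\}$. On $B\setminus A$, symmetrically, $\pi_v$ permutes only the odd labels. Outside $A\cup B$, $\pi_v$ is unconstrained.

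The proposed minimizer is as follows. Take $\pi=e$ on all of $A\cup B$ and $\pi=\kappa$ on the complement. Its energy is $|\partial(A\cup B)|\,d_C(e,\kappa)=(2n-1)|\partial(A\cup B)|$. This does not match the formula above, so the naive guess is wrong, and I need to reconsider. A better candidate sets the zones to specific permutations:
\begin{itemize}
\item on $A\cap B$, $e$;
\item on $A\setminus B$, $\tau_A$, the $n$-cycle on the even labels;
\item on $B\setminus A$, $\tau_B$, the $n$-cycle on the odd labels;
\item outside $A\cup B$, $\kappa$.
\end{itemize}
This is the standard structure from non-crossing partitions: $\kappa$ factorizes geodesically through a pair of Kreweras-complementary permutations. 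The relevant distances are:
\begin{itemize}
\item $d(e,\tau_A)=d(e,\tau_B)=n-1$;
\item $d(\tau_A,\tau_B)=2n-2$;
\item $d(\tau_A,\kappa)=d(\tau_B,\kappa)=n$, since a permutation fixing all odd labels differs from the full cycle by $n$ transpositions;
\item $d(e,\kappa)=2n-1$.
\end{itemize}
Summing these over the edge types of the cut structure, and using the decomposition $|\partial A|+|\partial B|=|\partial(A\cup B)|+|\partial(A\cap B)|+2\,e(A\setminus B,B\setminus A)$, gives exactly $(n-1)(|\partial A|+|\partial B|)+|\partial(A\cup B)|$. This confirms the candidate.

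The main step is the lower bound, which I plan to prove by a "cut decomposition" argument, as in the proofs of the RT-type results of \cite{Hayden:2016cfa}. Since $d_C$ is a metric and is geodesic along chains of permutations, I write the energy of an arbitrary configuration as a sum of layered contributions. Concretely, I use the standard lower bound for Cayley distance,
\[
d(\sigma,\rho)\ge \sum_k\bigl|f_k(\sigma)-f_k(\rho)\bigr|,
\]
for suitable 1-Lipschitz functions $f_k$. The natural choices are
\begin{itemize}
\item $f(\sigma)=d(e,\sigma)$;
\item $g_A(\sigma)=d(\sigma,\kappa)$ restricted against the set of even-label permutations;
\item $g_B(\sigma)$, the analogous function for the odd-label permutations.
\end{itemize}
These turn the energy into a combination of cut functions with level sets that are regions containing $A$, $B$, or $A\cap B$. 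Strong extremality of $A$, $B$, and $A\cup B$ then bounds each level-set cut below by $|\partial A|$, $|\partial B|$, or $|\partial(A\cup B)|$. Extremality of $A\cap B$, which follows from Appendix \ref{app:intersection}, handles the inner layers; alternatively, those layers can be bounded using submodularity of the cut function.

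I expect the main obstacle to be choosing the Lipschitz potentials so that the weights reproduce exactly $n-1$, $n-1$, and $1$. An alternative is to induct on $n$, or to reduce to the case where each vertex takes values only on the geodesic interval between $e$ and $\kappa$. Uniqueness, $g_\kappa=1$, comes from strictness: because the extremality is strong, any deviation of a level set from $A$, $B$, or $A\cup B$ strictly increases the cut. One must also check that $\tau_A$ and $\tau_B$ are the unique permutations in their constrained subgroups saturating the triangle inequalities. Finally, the ratio-of-averages normalization is handled as in \cref{app:ratio-of-averages approximation}, and it contributes only to the $\mathcal{O}(1/D)$ correction.
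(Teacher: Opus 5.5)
Your candidate configuration ($\Id$ on $A\cap B$, $c_{\rm even}$ (your $\tau_A$) on $A\setminus B$, $c_{\rm odd}$ on $B\setminus A$, $\kappa$ outside) and its energy $(n-1)(|\partial A|+|\partial B|)+|\partial(A\cup B)|$ agree with the paper. The gap is the lower bound, which is the actual content of the proposition: you leave it as a plan, and the plan as stated fails. The inequality $d_{\mathrm C}(\sigma,\rho)\ge\sum_k|f_k(\sigma)-f_k(\rho)|$ does not hold for an arbitrary family of $1$-Lipschitz potentials, and it fails for exactly the ones you list. Take $f=d_{\mathrm C}(\Id,\cdot)$ and $g=d_{\mathrm C}(\cdot,\kappa)$, with $\sigma=\Id$ and $\rho=c_{\rm even}$ (or any $\rho$ on a geodesic from $\Id$ to $\kappa$). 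Then $|f(\sigma)-f(\rho)|+|g(\sigma)-g(\rho)|=2(n-1)=2\,d_{\mathrm C}(\sigma,\rho)$. Summing potentials double-counts. You would need an $\ell^1$-type decomposition of the Cayley metric with exactly the weights $n-1$, $n-1$, $1$, and you do not have one; you flag this yourself as the obstacle. Relatedly, extremality of $A\cap B$ is never needed. The only cuts that enter are regions containing $A\cup B$ and regions sandwiched between $B$ (or $A$) and $A\cup B$.

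The paper avoids any global decomposition by changing one zone at a time, controlling each change with a single potential and a co-area (layer-cake) lemma. The lemma says: if $C$ is extremal and $f\ge0$ on $\verts\setminus C$ vanishes at the external vertices, then $\sum_{\{v,w\}\in\edge,\ v,w\notin C}|f(v)-f(w)|\ge\sum_{\{u,v\}\in\edge,\ u\in C,\ v\notin C}f(v)$. It follows by applying $|\partial(C\cup\{f>t\})|\ge|\partial C|$ to each level set and integrating over $t$.

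First, reset $\overline{A\cup B}$ to $\kappa$, with $f=d_{\mathrm C}(\sigma_v,\kappa)$ and $C=A\cup B$. By the triangle inequality, interior edges lose at least $|f(v)-f(w)|$ and edges into $A\cup B$ gain at most $f(v)$.

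Only then reset $A\setminus B$ to $c_{\rm even}$, with $a_x=n-1-d_{\mathrm C}(\Id,\sigma_x)\ge0$. Interior edges of $A\setminus B$ lose at least $|a_x-a_{x'}|$. On edges into $B$ the cost rises by exactly $a_x$, because an even-supported and an odd-supported permutation satisfy $d_{\mathrm C}(\sigma,\rho)=d_{\mathrm C}(\Id,\sigma)+d_{\mathrm C}(\Id,\rho)$. On edges into $\overline{A\cup B}$, the already-fixed $\kappa$ gives a loss of at least $a_x$, since $d_{\mathrm C}(\sigma_x,\kappa)\ge 2n-1-d_{\mathrm C}(\Id,\sigma_x)$. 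The co-area lemma with $C=B$, $f=a$ on $A\setminus B$ and $f=0$ elsewhere, then shows the net change is non-positive. Repeat for $B\setminus A$ with $C=A$.

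The co-area lemma, this ordering of the replacements, and the disjoint-support additivity are what turn your level-set intuition into a proof. There are also two small slips: $c_{\rm odd}$ is not the Kreweras complement of $c_{\rm even}$, and $d_{\mathrm C}(\sigma,\kappa)=n$ holds for $c_{\rm even}$ itself, not for every permutation fixing the odd labels.
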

\begin{proof}
Making use of the general result above, in order to compute the average, we need to find an assignment of permutations to the vertices of the network such that the Cayley distance energy defined above is minimized. In this case, since we have only a single trace, the fixed permutation on the boundary vertices is 
\[
\kappa = c \equiv (1 2 \cdots 2n) \; .
\]
The subgroups $H_v$ that restrict our choice of permutations are the $H_v = \{1\}$ for $v \in A \cap B$, and $H_v$ equal to the permutations acting only on even and odd elements for $v \in A$  and $v \in B$ respectively. As we show in \Cref{app:averageTN}, the energy-minimizing configuration sets all permutations in $\overline{A \cup B}$ (the complement of $A\cup B$ in the tensor network) to $\kappa$, all permutations in $A \setminus B$ to $c_\text{even} = (2 \, 4 \cdots 2n)$, all permutations in $B \setminus A$ to $c_\text{odd} = (1\, 3 \cdots 2n-1)$, and those in $A \cap B$ to the identity, as shown in Figure \ref{fig:extremal-nonextremal-comparison}. Using that
\begin{align*}
d_C(c,c_\text{even}) &= d_C(c,c_\text{odd}) = n~, \quad d_C(c,\mathbb{1}) = 2n-1~, \\
d_C(\mathbb{1},c_\text{even}) &= d_C(\mathbb{1},c_\text{odd}) = n - 1~, \quad d_C(c_\text{odd},c_\text{even}) = 2n-2~,
\end{align*}
and defining $|\partial(A_1,A_2)|$ to be the number of tensor network edges connecting adjacent non-overlapping regions $A_1$ and $A_2$, we then have 
\begin{align*}
\mathcal{E}_{\kappa} &= n |\partial(A \setminus B,\overline{A \cup B})| + n |\partial(B \setminus A,\overline{A \cup B})| + (2n-1) |\partial(A \cap B,\overline{A \cup B})| + (2n-2) |\partial(A \setminus B, B \setminus A)| \\
& \hspace{10pt} + (n-1) |\partial(A \setminus B,A \cap B)| + (n-1) |\partial(B \setminus A,A \cap B)|~.
\end{align*}
Using the general result above we find that up to order $1/D$ corrections, 
\begin{align}
\overline{\tr\!\left[(P_AP_B)^n\right]}
    &=
    D^{n|\partial A| + n|\partial B| - \mathcal{E}_\kappa}  \cr
   &= D^{|\partial(A \cap B,\overline{A \cup B})| + 2 |\partial(A \setminus B, B \setminus A)| + |\partial(A \setminus B,A \cap B)| + |\partial(B \setminus A,A \cap B)|} \cr
   &= D^{|\partial A| + |\partial B| - |\partial(A \cup B)|}~,
    \label{eq:pApB-n-leading}
\end{align}
where in going to the second line, we have used the geometrical result that
\begin{eqnarray*}
|\partial A| &=& |\partial(A \setminus B,\overline{A \cup B})| + |\partial(A \setminus B,B \setminus A)| + |\partial(A \cap B,B \setminus A)| + |\partial(A \cap B,\overline{A \cup B})| \cr
|\partial B| &=& |\partial(B \setminus A,\overline{A \cup B})| + |\partial(B \setminus A,A \setminus B)| + |\partial(A \cap B,A \setminus B)| + |\partial(A \cap B,\overline{A \cup B})|  
\end{eqnarray*}
and in going to the third line, we have used these results again together with
\[
|\partial(A \cup B)| = |\partial(B \setminus A,\overline{A \cup B})| + |\partial(A \setminus B,\overline{A \cup B})| + |\partial(A \cap B,\overline{A \cup B})| \; .
\]
These geometrical results can be verified from Figure \ref{fig:extremal-nonextremal-comparison}. The claim in the proposition now follows since 
\[
D_A = D^{|\partial A|}~, \qquad D_B = D^{|\partial B|}~, \qquad D_{A \cup B} = D^{|\partial (A \cup B)|}~.
\]
\end{proof}

\begin{figure}[t]
    \centering

    \begin{subfigure}[t]{0.4\textwidth}
        \centering
        \includegraphics[width=\textwidth]{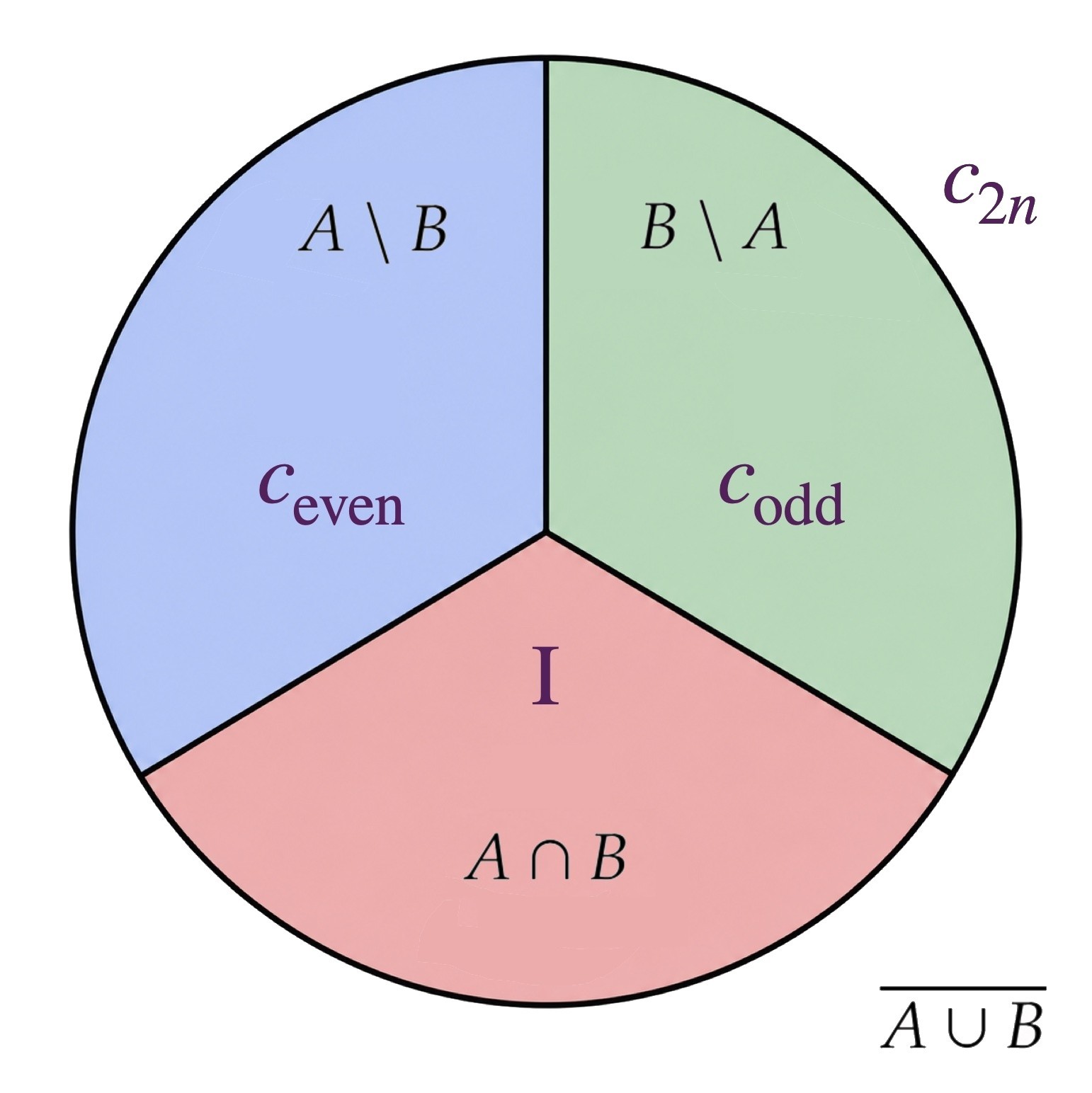}
    \end{subfigure}
    \hfill
    \begin{subfigure}[t]{0.48\textwidth}
        \centering
        \includegraphics[width=\textwidth]{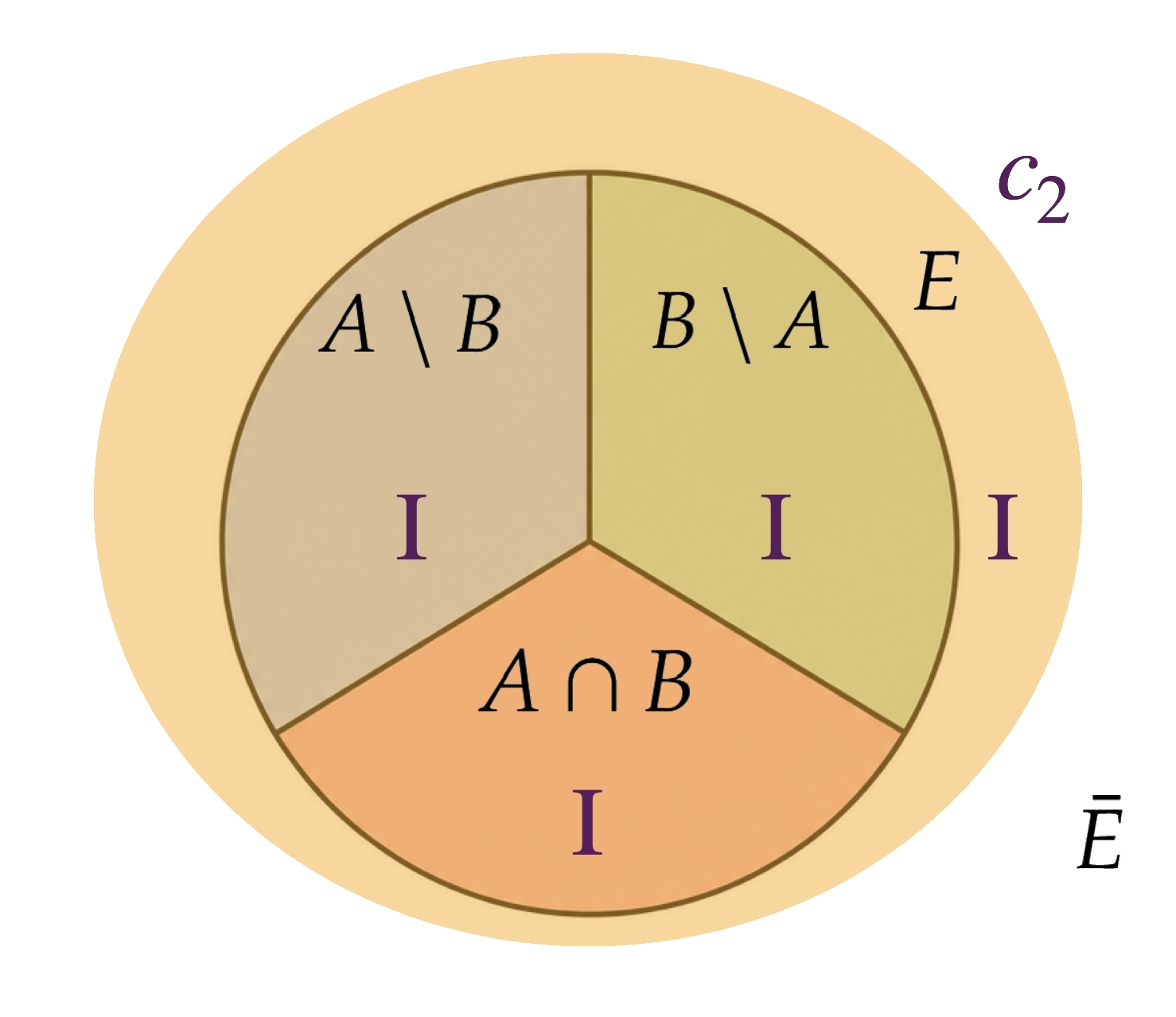}
    \end{subfigure}

    \caption{
        Dominant permutation configurations in calculating $\overline{\tr\!\left[(P_AP_B)^n\right]}$ for extremal and non-extremal
        unions. Left: for an extremal union and general $n$, the regions
        $A \setminus B$ and $B \setminus A$ are filled with the even and odd cyclic
        permutations, respectively, while $A\cap B$ carries $\Id$ and the
        exterior carries $c_{2n}$. Right: for a non-extremal union at $n=1$,
        $c_{\mathrm{odd}}=c_{\mathrm{even}}=\Id$, so the identity domain
        extends to the minimal containing region $E$, and the domain wall
        lies on $\partial E$ rather than on $\partial(A\cup B)$.
    }
    \label{fig:extremal-nonextremal-comparison}
\end{figure}
We can use this result to understand the relative position of $T_A\mathcal{H}_A$ and $T_{B}\mathcal{H}_B$ in $\mathcal{H}_{\partial}$. For extremal $A\cup B$, Proposition \ref{prop:overlap-projectors-extremality} shows that up to order $1/D$ corrections,\footnote{In particular, when $A \setminus B$ and $B \setminus A$ do not share a boundary, i.e. $\partial(A \setminus B,B \setminus A)=\varnothing$, we have the identity
$D_A D_B=D_{A\cup B}D_{A\cap B}$. In that case, the result reduces to $\overline{\tr\!\left[(P_AP_B)^n\right]}
=
D_{A\cap B}$.}
$$
\overline{\tr\!\left[(P_AP_B)^n\right]}=\frac{D_A D_B}{D_{A\cup B}} = D_{A\cap B} D^{2 |\partial(A \setminus B, B \setminus A)|}
$$
is independent of $n$. Similar calculations can be used to establish that the variance of these quantities is small. For example, a replica computation carried out in Appendix \ref{app:averageTN}, shows that the variance of $\tr((P_A P_B)^n)$
goes to zero as $D\to \infty$, i.e. 
$$
\frac{\overline{\tr[(P_AP_B)^n]\tr[(P_AP_B)^n]}}{\left(\overline{\tr[(P_AP_B)^n]}\right)^2} - 1=\mathcal{O}(1/D)~.
$$
Thus, the result for $\tr((P_A P_B)^n)$ in a single instance of the tensor network should be close to the average for any $n$. 

For a single tensor network with $\tr((P_A P_B)^n) = D_{A\cap B}D^{2 |\partial(A \setminus B, B \setminus A)|}$ for all $n$, it follows from \eqref{eq:PAPBdegeneracies} that $P_A P_B$ has a single non-zero eigenvalue $\lambda=1$ with multiplicity $d= D_{A\cap B}D^{2 |\partial(A \setminus B, B \setminus A)|}$, which is divisible by $D_{A\cap B}$. Thus, the conditions of Proposition \ref{diamond} are satisfied for this example.
Our results show that for extremal $A\cup B$ the results for $\tr((P_A P_B)^n)$  should, in a typical instance of the tensor network for large $D$ be close to one satisfying the conditions of Proposition \ref{diamond} such that we have (in some approximate sense for large but finite $D$) algebra inclusions of the form 
$$
\mathcal{M}_{A\cap B}\subset \mathcal{M}_A,\mathcal{M}_B\subset \mathcal{M}_{A\cup B}~.
$$
If we assume an isometric tensor network and a region geometry as in Figure \ref{fig:ABoverlap}, the existence of boundary algebras with these inclusion properties can be shown directly, as we will do below Corollary \ref{overlap}. 

\subsubsection*{Results for non-extremal $A \cup B$}

We now consider the case that $A \cup B$ is \emph{not} extremal, which means that there is a region \(E\) with a smaller number of boundary legs than $A\cup B$ that properly contains $A\cup B$. We have the following result:
\begin{proposition}
\label{prop:overlap-projectors-variance}
Let \(A\)
and \(B\) be extremal regions with associated operators \(P_A=T_AT_A^{\dagger}\) and \(P_B=T_BT_B^{\dagger}\). Let $E$ be the extremal region defined by the intersection of all extremal regions containing $A\cup B$. Assuming that $E$ is strongly extremal, we have
\begin{equation}
\overline{\tr\!\left[P_AP_B\right]}
=\frac{D_AD_B}{D_{E}}\left[1+\mathcal{O}(1/D)\right]~,
\label{eq:projector-overlap-extremal}
\end{equation}
where $\mathcal{O}(1/D)$ denotes terms that are subleading in the large bond dimension limit.
\end{proposition}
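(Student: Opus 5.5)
We apply the master formula \eqref{eq:master-formula1} with $n=2$ projectors, $A_1=A$, $A_2=B$, and the single-trace boundary permutation $\kappa=(12)$. The group $S_2$ has only two elements, $\mathbb{1}$ and $(12)$, so the spin model reduces to an Ising model. The Cayley distance is $1$ between unequal spins and $0$ between equal ones. The constraints are as follows:
\begin{itemize}
\item for $v\in A\cup B$, $H_v$ fixes the label $1$ or $2$, so $H_v=\{\mathbb{1}\}$ and the spin there is pinned to $\mathbb{1}$;
\item the spins on the external vertices $\extv$ are pinned to $(12)$;
\item the spins on the remaining vertices are free.
\end{itemize}
The energy $\mathcal{E}_\kappa(A,B)$ is therefore the minimum, over regions $R\supseteq A\cup B$ of vertices carrying $\mathbb{1}$, of the number of edges (including dangling legs) separating $R$ from its complement. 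That number is exactly $|\partial R|$, so $\mathcal{E}_\kappa=\min_{R\supseteq A\cup B}|\partial R|$.

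The key step is to show that this minimum is attained uniquely at $R=E$, so that $\mathcal{E}_\kappa=|\partial E|$ and $g_\kappa=1$. The argument runs in three parts.
\begin{itemize}
\item Let $R^*$ be any minimizer. Since $E$ is extremal and $A\cup B\subseteq E$, the set $R^*$ satisfies $|\partial R^*|\le|\partial E|$.
\item Use submodularity of the edge-cut function, $|\partial(R^*\cap E)|+|\partial(R^*\cup E)|\le|\partial R^*|+|\partial E|$, which is the same inequality used in Appendix~\ref{app:intersection} to show that intersections of extremal regions are extremal. The region $R^*\cup E$ contains $E$, so extremality of $E$ gives $|\partial(R^*\cup E)|\ge|\partial E|$. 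Hence $|\partial(R^*\cap E)|\le|\partial R^*|$, and since $R^*\cap E$ still contains $A\cup B$, it is also a minimizer.
\item Then $R^*\cap E$ is a minimal-area region containing $A\cup B$. It is therefore extremal: any region containing it also contains $A\cup B$, so it cannot have smaller area.
\end{itemize}
By definition $E$ is the intersection of all extremal regions containing $A\cup B$, so $E\subseteq R^*\cap E$, which gives $R^*\cap E=E$ and $E\subseteq R^*$. Strong extremality of $E$ now says that any $R^*\supsetneq E$ has $|\partial R^*|>|\partial E|$, contradicting minimality unless $R^*=E$. This gives both $\mathcal{E}_\kappa=|\partial E|$ and uniqueness, $g_\kappa=1$.

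Plugging into \eqref{eq:master-formula1} gives $\overline{\tr[P_AP_B]}=D^{|\partial A|+|\partial B|-|\partial E|}(1+\mathcal{O}(1/D))=\frac{D_AD_B}{D_E}(1+\mathcal{O}(1/D))$. Here we use $D_X=D^{|\partial X|}$, and we rely on the normalization discussion in \cref{app:ratio-of-averages approximation} to treat the random normalizations of $T_A$ and $T_B$ as factorizing at leading order.

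The main obstacle is the uniqueness step: establishing $g_\kappa=1$ rigorously through the interplay between minimality, the definition of $E$, and strong extremality. If degenerate minimizers were allowed, $g_\kappa$ would only contribute an $O(1)$ multiplicative factor rather than $1$.
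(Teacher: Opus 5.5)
Your proposal is correct and takes the same approach as the paper: an $n=1$ (two-replica) computation in which $A\cup B$ is pinned to $\Id$ and the external legs to the swap, so that the dominant identity domain is exactly $E$. The paper only asserts that the identity domain ``extends to $E$''; your submodularity argument fills in what it leaves implicit, namely that any minimizer $R^*$ satisfies $R^*\cap E=E$ and that strong extremality then forces $R^*=E$ and $g_\kappa=1$.
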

\begin{proof}

To see this, note that for $n=1$ the odd- and even-replica permutations are
both trivial, $c_{\rm odd} = c_{\rm even} = \Id$ and one can extend the identity domains beyond
\(A\) and \(B\) towards the minimal  region \(E\). Thus, the leading saddle will consist of an identity domain inside $E$, and a cyclic permutation domain outside. The result follows via calculations analogous to those above.
\end{proof}

\begin{figure}[t]
    \centering
    \includegraphics[
        width=0.4\textwidth,
        trim={0 0 0 0},
        clip
    ]{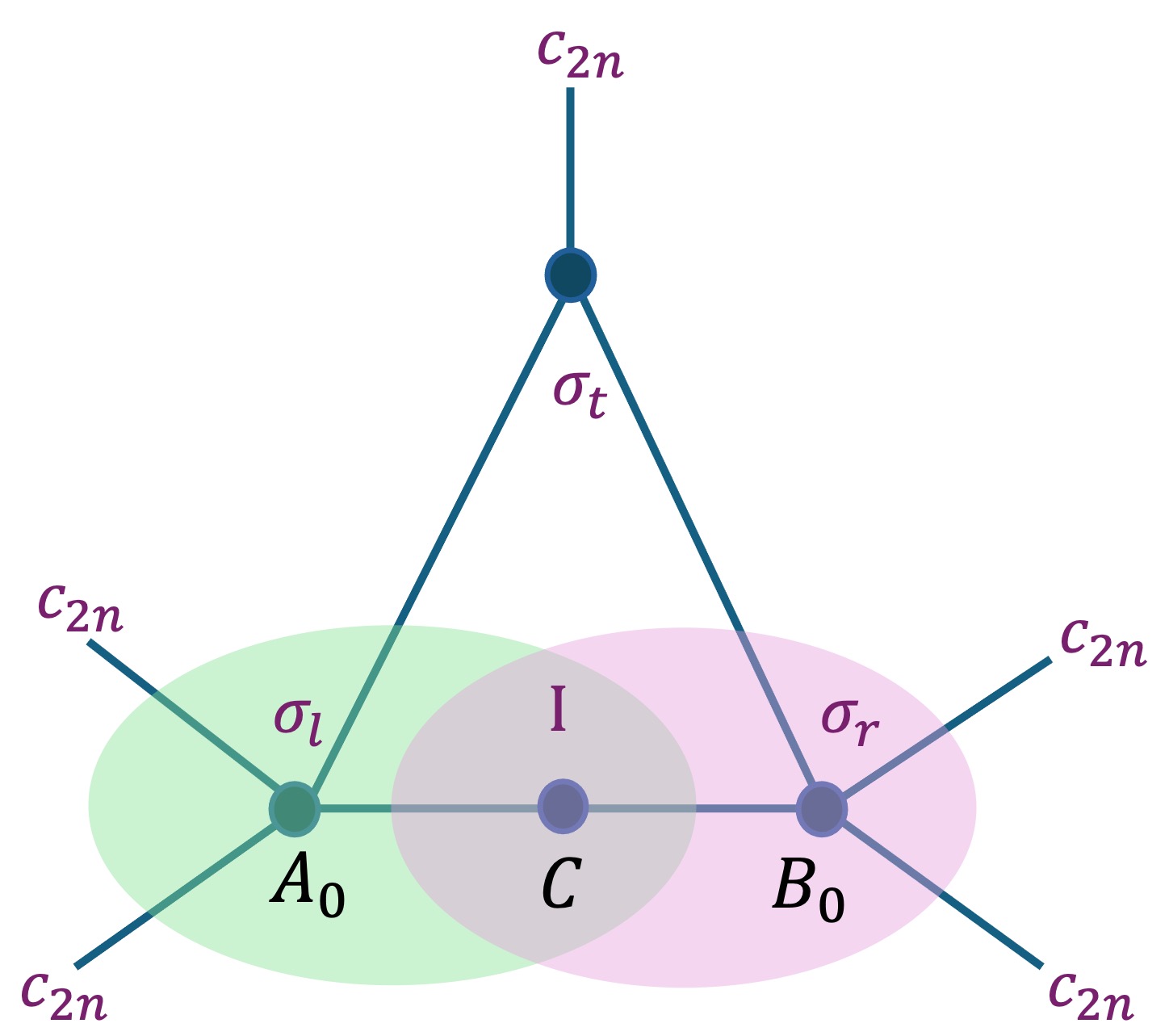}

    \caption{
        An example of a tensor network with extremal regions $A$ (green) and $B$ (purple) such that $A\cup B$ is not extremal. We have denoted $A_{0}=A \setminus B$ and $B_0 = B\setminus A$. We assume that all legs contribute the same bond dimension $D$. We have furthermore labeled the vertices to indicate the allowed permutation assignments used in the calculation of $\overline{\tr\!\left[(P_AP_B)^n\right]}$ (see Appendix \ref{app:nonextremal-counterexample}): The physical boundary is fixed to be $c_{2n}$ and the intersection vertex is fixed to the identity $\Id$. The top permutation $\sigma_t$ is summed over $S_{2n}$, while $\sigma_\ell$ and $\sigma_r$ are summed over the subgroups acting on the even and odd replicas, respectively.
    }
    \label{fig:nonextremal-counterexample}
\end{figure}

Furthermore, when $A\cup B$ is not extremal, the higher moments $\overline{\tr\!\left[(P_AP_B)^n\right]}$ need not be independent of $n$. To demonstrate explicitly that the moments can depend on \(n\), we consider the simple tensor network example shown in Figure \ref{fig:nonextremal-counterexample}. We assume that all bond dimensions are equal to \(D\). The regions $A$ and $B$ each cut four legs, while $A\cup B$ cuts six. The larger region $E\supset A\cup B$, obtained by including the top tensor, cuts only five legs. Thus
\[
D_A=D_B=D^4~,
\qquad
D_E=D^5<D_{A\cup B}=D^6~,
\]
which shows that $A\cup B$ is not extremal.

For this network, equation~\eqref{eq:projector-overlap-extremal} gives
\[
\overline{\tr\left[P_AP_B\right]} \sim \frac{D_AD_B}{D_E} = \frac{D^8}{D^5} = D^3~, 
\]
to leading order in the large bond dimension limit. Using a replica computation presented in Appendix~\ref{app:nonextremal-counterexample}, we show that the dominant contribution to the second moment is given by \[
\overline{\tr\!\left[(P_AP_B)^2\right]} \sim 5D^2~. 
\]
We therefore find \begin{equation} \overline{\tr\left[P_AP_B\right]}\neq \overline{\tr\!\left[(P_AP_B)^2\right]}~, \end{equation} showing that the leading large-\(D\) moments are \(n\)-dependent in this non-extremal example. Hence, the \(n\)-independence found when \(A\cup B\) is extremal does not extend to general non-extremal unions.

Given the $n$-dependence of the moments, the argument that we used to verify the conditions of Proposition \ref{diamond} when $A\cup B$ is extremal no longer applies. Therefore, there could possibly be an obstruction to finding a consistent inclusion of the algebra $\mathcal{M}_{A\cap B}$ in $\mathcal{M}_E$ through $\mathcal{M}_A$ and $\mathcal{M}_B$ respectively. At this point, we have not found conclusive evidence that the conditions in Proposition \ref{diamond} are generally satisfied. 

It would be interesting to see if we could get more detailed information about the principal angles and their degeneracies of by a more thorough investigation of the tensor network averaged quantity $\tr\left[(P_AP_B)^n\right]$.\footnote{For example, since all eigenvalues satisfy  $\lambda_i\in [0,1]$ the $n$-th moment for large $n$ will be dominated by the largest eigenvalue, i.e. $\overline{\tr\left[(P_A P_B)^{n}\right]}\sim d_{\max }\lambda_{\max}^n$. We thus find that the following ratio of moments extracts the leading order eigenvalue:
$$
\frac{\overline{\tr\left[(P_A P_B)^{n+1}\right]}}{\overline{\tr\left[(P_A P_B)^{n}\right]}} \sim \lambda_{\max}~, \qquad n\to \infty~.
$$} We leave such an investigation for future work. 

\subsection{The union property}

For generalized entanglement wedges $W_1$ and $W_2$, the union $W_1 \cup W_2$ is not generally a GW, but the intersection of all GWs containing $W_1 \cup W_2$ is a GW that we call $W_1 \vee W_2$. By its definition, $W_1 \vee W_2$ is contained in any GW that includes $W_1$ and $W_2$. Almost identical statements apply to subalgebras $\mathcal{A}_1$ and $\mathcal{A}_2$ of an algebra: the union $\mathcal{A}_1 \cup \mathcal{A}_2$ is not generally an algebra, but the intersection of all algebras containing $\mathcal{A}_1 \cup \mathcal{A}_2$ is an algebra that that is usually denoted $\mathcal{A}_1 \vee \mathcal{A}_2$ and called the join of $\mathcal{A}_1$ and $\mathcal{A}_2$. This is also the algebra generated by $\mathcal{A}_1$ and $\mathcal{A}_2$.

In \cite{Sahu:2025upe}, we suggested that a natural property for our proposed GW to subalgebra map to have is
\[
W_1 \vee W_2 \Leftrightarrow \mathcal{A}_{W_1} \vee \mathcal{A}_{W_2} \; .
\]
We would like to understand whether this holds in our tensor network context for some choices of operator boundary algebras.

Suppose that we have extremal subsystems $A$ and $B$ of a tensor network and that $A \vee B$ is the extremal region containing $A \cup B$ with the smallest boundary area. If the region $T$ represents the full network, the sets $A,B \subset A \vee B \subset T$ have a tree-like inclusion structure, so (assuming isometries for the various tensor network maps) we can define operator maps and associate boundary factor algebras in a consistent way for the various regions to give a factor code network. We would like to understand if the algebra associated with $A \vee B$ is the one generated by the algebras associated with $A$ and $B$, 
\begin{equation}
\label{union1}
\mathcal{M}_{A\vee B} = \mathcal{M}_A\vee \mathcal{M}_B~.
\end{equation}

In \cite{FactorCode}, we prove the following general result:

\begin{proposition}
    Suppose we have Hilbert spaces ${\cal H}_A$, ${\cal H}_B$, and ${\cal H}$ with isometries $T_A : {\cal H}_A \to {\cal H}$ and  $T_B : {\cal H}_B \to {\cal H}$. 
    In the space of possible extensions of the isometries to operator maps $\varphi_A$, $\varphi_B$ satisfying the properties of Proposition \ref{tree}, a Haar random choice will have the property that the subalgebras $\mathcal{M}_A = \varphi_A(B(\mathcal{H}_A))$ and $\mathcal{M}_B= \varphi_B(B(\mathcal{H}_B))$ together generate $B(\mathcal{H})$ unless \begin{itemize}
        \item $\dim \mathcal{H}_A = 1$ and $  \dim \mathcal{H}_B < \dim \mathcal{H}$~.
        \item $\dim \mathcal{H}_B = 1$ and $ \dim \mathcal{H}_A < \dim \mathcal{H}$~.
        \item
        $T_A \mathcal{H}_A = T_B \mathcal{H}_B \ne {\cal H}$~.
        \item
        $T_A \mathcal{H}_A$ is the orthogonal complement of $T_B \mathcal{H}_B$~.
    \end{itemize}
     In all of these cases, no choices for $\mathcal{M}_A$ and $\mathcal{M}_B$ generate $B(\mathcal{H})$.    
\end{proposition}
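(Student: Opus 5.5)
The plan is to work with commutants. In finite dimensions a unital $*$-subalgebra equals its double commutant, so $\mathcal{M}_A \vee \mathcal{M}_B = B(\mathcal{H})$ holds iff $\mathcal{M}_A' \cap \mathcal{M}_B' = \mathbb{C}\mathbb{1}$. Equivalently, no nontrivial subspace of $\mathcal{H}$ is invariant under both algebras.

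By Proposition \ref{factor} we may write $\varphi_A(\mathcal{O}) = W_A(\mathcal{O}\otimes \mathbb{1}_{m_A})W_A^\dagger$, and similarly for $B$, with $m_A = \dim\mathcal{H}/\dim\mathcal{H}_A$. Then $\mathcal{M}_A' = W_A(\mathbb{1}\otimes M_{m_A}(\mathbb{C}))W_A^\dagger$, and the $\mathcal{M}_A$-invariant subspaces are exactly $W_A(\mathcal{H}_A\otimes L)$ for subspaces $L\subset\mathbb{C}^{m_A}$. By Proposition \ref{Tphi}, the admissible choices of $(\varphi_A,\varphi_B)$ form a product of connected homogeneous spaces $U(N-a)/U(m_A-1)\times U(N-b)/U(m_B-1)$, where $N=\dim\mathcal{H}$, $a=\dim\mathcal{H}_A$, $b=\dim\mathcal{H}_B$. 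Concretely, we fix reference maps and conjugate by unitaries $1\oplus U_\perp$ acting on the complements of the code subspaces.

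First I would dispose of the exceptional cases, which give the ``no choice works'' half of the statement.
\begin{itemize}
\item If $\dim\mathcal{H}_A=1$, then $\mathcal{M}_A=\mathbb{C}\mathbb{1}$. The generated algebra is $\mathcal{M}_B\cong B(\mathcal{H}_B)$, which is proper when $\dim\mathcal{H}_B<\dim\mathcal{H}$. The case $\dim\mathcal{H}_B=1$ is symmetric.
\item Each $\mathcal{M}$ leaves its code subspace invariant (Proposition \ref{Tphi}), and, being a $*$-algebra, also its orthogonal complement. If $T_A\mathcal{H}_A = T_B\mathcal{H}_B = K\neq\mathcal{H}$, then $K$ is a proper invariant subspace for both algebras.
\item If $T_B\mathcal{H}_B = (T_A\mathcal{H}_A)^\perp$, then $T_A\mathcal{H}_A$ is invariant under both algebras.
\end{itemize}
In each case the join commutes with a nontrivial projection and so cannot be all of $B(\mathcal{H})$.

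For the generic statement, I would show that the bad set is a proper real-analytic subvariety. The condition $\dim(\mathcal{M}_A'\cap\mathcal{M}_B')>1$ says that the linear system $[X,\varphi_A(e_{ij})]=0=[X,\varphi_B(e_{kl})]$ has solution space of dimension at least $2$. This is the vanishing of all minors of a certain size of a matrix whose entries depend polynomially on $(U_\perp^A,U_\perp^B)$ and their conjugates. On a connected real-analytic manifold, the common zero set of real-analytic functions is either the whole manifold or has Haar measure zero. The problem therefore reduces to exhibiting a single good pair $(\varphi_A,\varphi_B)$ outside the exceptional cases.

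The existence of one good pair is the step I expect to be the main obstacle. I would first try the following construction.
\begin{itemize}
\item Fix $\varphi_A$ and pick $h\in B(\mathcal{H}_A)$ with simple spectrum. The spectral projections of $\varphi_A(h)$ lie in the join and project onto the subspaces $W_A(e_i\otimes\mathbb{C}^{m_A})$.
\item Do the same for $B$. Then use the freedom $U_\perp^B$, which rotates everything in $\mathcal{M}_B$ outside $T_B\mathcal{H}_B$, to put the $B$-spectral subspaces in generic position relative to the $A$-invariant subspaces $W_A(\mathcal{H}_A\otimes L)$.
\item Show that no $W_A(\mathcal{H}_A\otimes L)$ with $0\neq L\neq\mathbb{C}^{m_A}$ can then be $\mathcal{M}_B$-invariant. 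For this I would use a dimension count comparing the Grassmannian of candidate invariant subspaces with the codimension imposed by invariance, separately for $T_A\mathcal{H}_A$-related and generic $L$.
\end{itemize}
The exceptional configurations are precisely those in which the fixed pieces $T_A\mathcal{H}_A$ and $T_B\mathcal{H}_B$ are rigid enough that the unitary freedom cannot break a common invariant subspace. Small cases, such as $m_A=1$ or $\dim\mathcal{H}_A=1$ with $\dim\mathcal{H}_B=N$, where one algebra is already $B(\mathcal{H})$, would be checked by hand.
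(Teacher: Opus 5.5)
Your reduction is sound, and so is your proof that the four listed configurations admit no generating pair. This includes the commutant criterion, the description of the $\mathcal{M}_A$-invariant subspaces as $W_A(\mathcal{H}_A\otimes L)$, and the real-analytic zero--one law on the connected space $U(N-a)\times U(N-b)$.

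The gap is that the real content of the proposition sits entirely in the step you leave open. That content is that these four cases are the \emph{only} obstructions, and it amounts to exhibiting one good pair for every non-exceptional configuration. The construction you sketch for this fails as stated. You fix $\varphi_A$ and vary only $U_\perp^B$. Take $N=4$ and $\dim\mathcal{H}_A=\dim\mathcal{H}_B=2$. Suppose the unitary $W_A$ of your fixed $\varphi_A$ identifies $\mathcal{H}\cong\mathcal{H}_A\otimes\mathbb{C}^2$ with $T_A\mathcal{H}_A=\mathcal{H}_A\otimes e_1$ and $\mathcal{M}_A=B(\mathcal{H}_A)\otimes\mathbb{1}$. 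Let $T_B\mathcal{H}_B=\mathcal{H}_A\otimes(\cos\theta\, e_1+\sin\theta\, e_2)$ with $0<\theta<\pi/2$. This configuration is not exceptional. Yet $T_B\mathcal{H}_B$ is $\mathcal{M}_A$-invariant, and it is invariant under every admissible $\mathcal{M}_B$, so no choice of $U_\perp^B$ yields $B(\mathcal{H})$. Good pairs exist here only after $\varphi_A$ itself is changed.

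The same rigidity undermines your ``generic position'' step. Every $W_B(e_k\otimes\mathbb{C}^{m_B})$ contains the fixed vector $T_Be_k$. The family $\{W_A(\mathcal{H}_A\otimes L)\}$ always contains $T_A\mathcal{H}_A$ and its complement. These pieces cannot be moved, and they are exactly where the exceptional cases come from. Your spectral-projection bullets are never used, and your third bullet just restates irreducibility.

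What is missing is a two-stage argument.

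\emph{Stage one.} Assume $2\le\dim\mathcal{H}_A<N$, $\dim\mathcal{H}_B<N$, and $K_B:=T_B\mathcal{H}_B\notin\{T_A\mathcal{H}_A,(T_A\mathcal{H}_A)^\perp\}$. Show that the completion of $T_A$ can then be chosen so that $K_B$ (equivalently $K_B^\perp$) is not of the form $W_A(\mathcal{H}_A\otimes L)$. Since $K_B$ and $K_B^\perp$ are invariant under every admissible $\mathcal{M}_B$, this is precisely where cases 1, 3 and 4 enter, and case 2 enters by symmetry.

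\emph{Stage two.} For such a $\varphi_A$, prove that a generic completion of $T_B$ makes no other $S=W_A(\mathcal{H}_A\otimes L)$ invariant under $\mathcal{M}_B$. This needs a genuine count, because the completion freedom acts only through $\mathbb{1}_{K_B}\oplus U(K_B^\perp)$. That group preserves $T_B^\dagger P_ST_B$ and so is not transitive on the Grassmannian. Write $P_S=W_B(\mathbb{1}\otimes q)W_B^\dagger$ and $T_B^{(a)}=W_B(\,\cdot\,\otimes e_a)$. Invariance forces two conditions:
\begin{itemize}
\item $T_B^\dagger P_ST_B=q_{11}\mathbb{1}$, which is a condition on $S$ alone;
\item $P_{K_B^\perp}P_ST_B=\sum_{a\ge 2}q_{a1}T_B^{(a)}$, which is a condition on the completion.
\end{itemize}
So the count has to be organized by strata of fixed $\dim S$ and fixed $q_{11}$. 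The strata $q_{11}\in\{0,1\}$, i.e.\ $S\perp K_B$ or $S\supset K_B$, must be treated separately. $U(K_B^\perp)$ acts transitively on each stratum. One way to finish is a transversality count on each stratum, comparing the dimension of the relevant part of the $A$-family with those of the $B$-family and of the stratum.

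As written, your proposal proves the ``no choice works'' half and the zero--one reduction, but not the generic claim.
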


In the tensor network context, assuming an isometric tensor network, we can apply this result to the case where ${\cal H}_A$ and ${\cal H}_B$ are Hilbert spaces associated with regions $A$ and $B$, $\mathcal{H}$ is the Hilbert space associated with the extremal region $A \vee B$, and $T_A$ and $T_B$ are the tensor network isometries from $A$ and $B$ to $A \vee B$. 

For non-trivial regions, we certainly have $\dim \mathcal{H}_A, \dim \mathcal{H}_B > 1$. The fourth special case requires $\dim \mathcal{H} = \dim \mathcal{H}_A + \dim \mathcal{H}_B$, which will never hold in a tensor network for bond dimension greater than 2 since all the Hilbert space dimensions are some power of the bond dimension and $D^a + D^b = D^c$ has no solutions for $D > 2$ and positive integer $a,b,c$. The third special case in the proposition corresponds to $\tr((P_A P_B)^n) = \tr(P_A) = \tr(P_B)$. We can use our results above to show that this will typically not hold in the case of large bond dimension.

If $A \vee B \ne A \cup B$, we have that on average to leading order in large bond dimension (see Proposition \ref{prop:overlap-projectors-variance}), 
\[
\overline{\tr(P_A P_B)} = {D_A D_B \over D_{A \vee B}}~.
\]
For isometries, $\tr(P_A) = D_A$ and $\tr(P_B) = D_B$, so $\tr(P_A P_B) = \tr(P_A) = \tr(P_B)$ gives $\dim {\cal H}_A = \dim {\cal H}_B = \dim {\cal H}_{A \vee B}$. In this case, we must have $T_A \mathcal{H}_A = T_B \mathcal{H}_B = \mathcal{H}$, violating the last part of condition 3. 

If $A \vee B = A \cup B$, then on average to leading order in large bond dimension, 
\[
\overline{\tr(P_A P_B)} = D_{A \cap B} \; .
\]
Then $\tr(P_A P_B) = \tr(P_A) = \tr(P_B)$ gives $D_A = D_B = D_{A \cap B}$ implying that the boundary lengths for regions $A$, $B$, and $A \cap B$ are equal. Using the general relation $|\partial A| + |\partial B| \ge |\partial A \cap B| + |\partial A \cup B|$ on boundary lengths, we then have
\[
|\partial A| = |\partial B| \ge |\partial A \cup B|
\]
but we must have equality since $A,B \subset A \cup B$ and $A$ and $B$ are extremal. Thus, $D_{A \cup B} = D_A = D_B$, and again we violate the last part of condition 3. 

We conclude that for extremal regions 
\[
A, B \subset A \vee B \subset T
\]
in a tensor network, typical choices of operator maps giving rise to boundary von Neumann factor algebras $\mathcal{M}_A, \mathcal{M}_B \subset
\mathcal{M}_{A\vee B} \subset \mathcal{M}_\partial$ will have the union property (\ref{union1}).

\subsection{The intersection property}

In \cite{Sahu:2025upe}, we showed that the intersection of any two GWs is also a GW. Similarly, the intersection of any two subalgebras of an algebra is another subalgebra. Thus, we suggested in \cite{Sahu:2025upe} that a natural property of the proposed GW to subalgebra map would be the intersection property
\[
\mathcal{A}_{W_1 \cap W_2} = \mathcal{A}_{W_1} \cap \mathcal{A}_{W_2} \; .
\]
In this section, we will understand when the analogous statement is true in the tensor network situation.

Suppose that we have extremal subsystems $A$ and $B$ of a tensor network. Then as we argued earlier, the intersection $A \cap B$ is also an extremal region. In cases covered by Proposition \ref{diamond} above, we can associate subalgebras $\mathcal{M}_A$, $\mathcal{M}_B$, and $\mathcal{M}_{A \cap B}$ of the boundary algebra $\mathcal{M}_\partial$ with the inclusions $\mathcal{M}_{A \cap B} \subset \mathcal{M}_A$, $\mathcal{M}_B \subset \mathcal{M}_\partial$.

We would like to understand when for some choice of the auxiliary isometries that define these boundary algebras, it is possible to obtain additionally the intersection property
\[
\mathcal{M}_A \cap \mathcal{M}_B = \mathcal{M}_{A \cap B} \; .
\]
In order to understand this, we can make use of the following general result that we prove in \cite{FactorCode}:
\begin{proposition}
\label{intersection1}
    Suppose we have Hilbert spaces $\mathcal{H}_i$, $i = 1,..,4$ and isometries $T_{ji}$ satisfying the conditions of Proposition \ref{diamond}. Then by an appropriate choice of isometries, we can ensure that $\mathcal{M}_1 = \mathcal{M}_2 \cap \mathcal{M}_3$ unless one of following holds:
    \begin{itemize}
    \item $\dim \mathcal{H}_1 < \dim \mathcal{H}_2 \le \dim  \mathcal{H}_3 = \dim \mathcal{H}_4$~.
    \item $\dim \mathcal{H}_1 < \dim\mathcal{H}_3 \le \dim \mathcal{H}_2 = \dim \mathcal{H}_4$~.
    \item $\dim \mathcal{H}_1 < \dim \mathcal{H}_2 = \dim \mathcal{H}_3 = \dim \mathcal{H}_4/2$ with $T_{42} \mathcal{H}_2=T_{43} \mathcal{H}_3$ or $T_{42} \mathcal{H}_2=T_{43} \mathcal{H}_3^\perp$~.
    \end{itemize}
    Furthermore, in all of these cases, generic choices of auxiliary isometries lead to the desired intersection property.
\end{proposition}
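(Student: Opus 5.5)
The plan is to reduce the intersection $\mathcal{M}_2\cap\mathcal{M}_3$ to a computation inside the relative commutant of $\mathcal{M}_1$, using the factor structure of all the algebras. First, I fix a factor code network for the diamond; it exists by Proposition~\ref{diamond}. Write $\mathcal{H}_4\cong\mathcal{H}_1\otimes K$ with $\mathcal{M}_1=B(\mathcal{H}_1)\otimes\mathbb{1}_K$. Because $\mathcal{M}_1\subset\mathcal{M}_2,\mathcal{M}_3$ are factors, each algebra splits as
\[
\mathcal{M}_2=\mathcal{M}_1\otimes\mathcal{R}_2,\qquad \mathcal{M}_3=\mathcal{M}_1\otimes\mathcal{R}_3,
\]
where $\mathcal{R}_j=\mathcal{M}_1'\cap\mathcal{M}_j\subset B(K)$ are factors of dimension $(d_j/d_1)^2$, with $d_j\equiv\dim\mathcal{H}_j$. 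Then
\[
\mathcal{M}_2\cap\mathcal{M}_3=\mathcal{M}_1\otimes(\mathcal{R}_2\cap\mathcal{R}_3),
\]
since both algebras contain $\mathcal{M}_1$. So the target $\mathcal{M}_1=\mathcal{M}_2\cap\mathcal{M}_3$ is equivalent to $\mathcal{R}_2\cap\mathcal{R}_3=\mathbb{C}\mathbb{1}_K$.

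Next, I identify the freedom. Fix the auxiliary isometries for $1\to 2$ and $1\to 3$, and keep the constraint $T_{41}=T_{42}T_{21}=T_{43}T_{31}$. The remaining freedom in the completions of $T_{42}$ and $T_{43}$ (Proposition~\ref{Tphi}) acts on $\mathcal{R}_2$ and $\mathcal{R}_3$ by conjugation. Concretely, $\mathcal{R}_3\mapsto U\mathcal{R}_3U^\dagger$, where $U$ ranges over unitaries that fix $\mathcal{M}_3$'s action on the code subspace $T_{43}\mathcal{H}_3$. This includes, at least, unitaries of $K$ supported on the part of the $K$-factor orthogonal to the image of $T_{43}\mathcal{H}_3$, and these are compatible with the diamond constraints.

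The core lemma is then a genericity statement about two subfactors $\mathcal{R}_2\cong M_{a}$ and $\mathcal{R}_3\cong M_{b}$ of $B(K)\cong M_k$, where $a=d_2/d_1$, $b=d_3/d_1$, $k=d_4/d_1$. The claim is that for Haar-generic admissible $U$, $\mathcal{R}_2\cap U\mathcal{R}_3U^\dagger=\mathbb{C}$. I would prove it in two steps.
\begin{enumerate}
\item The set of $U$ giving a nontrivial intersection is a real-algebraic subset. It consists of the $U$ for which some nonscalar $X\in\mathcal{R}_2$ satisfies $U^\dagger XU\in\mathcal{R}_3$, a rank condition on a linear map. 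So it either is everything or has measure zero.
\item It therefore suffices to exhibit one admissible $U$ with trivial intersection. Any common element $X$ lies in the factor $\mathcal{R}_2$ and also commutes with $\mathcal{R}_3'\cap B(K)\cong M_{k/b}$. I would take $U$ such that $\mathcal{R}_2$ and $U\mathcal{R}_3U^\dagger$ are in "general position": for instance, $\mathcal{R}_2=M_a\otimes\mathbb{1}$ and $U\mathcal{R}_3U^\dagger=V(M_b\otimes\mathbb{1})V^\dagger$ with $V$ chosen so that $V(M_b\otimes\mathbb{1})V^\dagger$ together with $\mathbb{1}\otimes M_{k/a}$ generates $B(K)$. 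This would be an application of the union proposition in the previous subsection. Then $X$ commutes with $\mathbb{1}\otimes M_{k/a}$ and with $V(\mathbb{1}\otimes M_{k/b})V^\dagger$, and these together generate $B(K)$ unless we are in the degenerate cases of that proposition. Hence $X$ is scalar.
\end{enumerate}
Passing from $X\in\mathcal{R}_2\cap\mathcal{R}_3$ to this statement uses the commutant identities $\mathcal{R}_j=(\mathcal{R}_j')'$, valid since these are factors in finite dimensions.

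Finally, I match the exceptions. The general-position argument fails exactly when the complementary algebras $\mathcal{R}_2'\cong M_{k/a}$ and $\mathcal{R}_3'\cong M_{k/b}$ cannot generate $B(K)$ for any admissible $U$. By the union proposition this happens when one of them is trivial, or when the relevant code subspaces coincide or are orthogonal complements. Translating back:
\begin{itemize}
\item $k/b=1$ means $d_3=d_4$, with $d_1<d_2$ so that $\mathcal{R}_2$ is nontrivial. This gives the first exceptional case; the second is the symmetric one.
\item The coincide/complement case with equal dimensions summing to the whole space forces $d_2=d_3=d_4/2$ and $T_{42}\mathcal{H}_2=T_{43}\mathcal{H}_3$ or its orthogonal complement. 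This gives the third case.
\end{itemize}

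The main obstacle is step 2. The admissible $U$'s are constrained to agree on $T_{43}\mathcal{H}_3$ and to respect the diamond, so not every unitary of $K$ is available. The hardest part is showing this restricted family is still rich enough to reach general position, given the principal-angle structure of Proposition~\ref{diamond}. I would first try to reduce to the block decomposition by principal angles and handle each angle block separately.
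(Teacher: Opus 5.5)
Your reduction is sound, but step 2, which carries the whole content of the proposition, is not supplied, and the tool you propose for it does not apply. (The paper defers its proof to the companion paper, so I am judging your argument on its own.)

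\textbf{What works.} Write $T_{42}\mathcal{H}_2=\mathcal{H}_1\otimes K_2$ and $T_{43}\mathcal{H}_3=\mathcal{H}_1\otimes K_3$. Then $\mathcal{M}_2\cap\mathcal{M}_3=\mathcal{M}_1\otimes(\mathcal{R}_2\cap\mathcal{R}_3)$. The residual freedom is $\mathcal{R}_3\mapsto u\mathcal{R}_3u^\dagger$, with $u$ unitary on $K_3^\perp$ and the identity on $K_3$. Step 1 then correctly reduces genericity to finding one good $u$.

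\textbf{Why the union proposition does not give step 2.} That proposition concerns two factors that are arbitrary unital extensions of two \emph{fixed isometries}, i.e.\ factors with fixed invariant code subspaces. The algebras you need to generate $B(K)$ are the commutants $\mathcal{R}_2'\cong M_{k/a}$ and $\mathcal{R}_3'\cong M_{k/b}$. As $u$ varies, no $k/b$-dimensional subspace stays invariant under $u\mathcal{R}_3'u^\dagger$; what stays fixed is the rank-$b$ minimal projection $P_{K_3}\in\mathcal{R}_3'$. So neither the hypotheses nor the degenerate cases transfer. Applied formally to the commutants, the ``orthogonal complement'' case would require $k/a+k/b=k$ (e.g.\ $d_2=d_3=2d_1$), not $d_2=d_3=d_4/2$.

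\textbf{Consequences for the exceptions.} Your matching of the exceptional cases is therefore asserted rather than derived. The failure in those cases also needs a direct argument:
\begin{itemize}
\item For $b=k$, $\mathcal{M}_3=B(\mathcal{H}_4)$, so the intersection is $\mathcal{M}_2\supsetneq\mathcal{M}_1$.
\item For $a=b=k/2$ with $K_2=K_3$, write $\mathcal{R}_j=\{X\oplus w_jXw_j^\dagger\}$ on $K_2\oplus K_2^\perp$. The intersection is the commutant of the unitary $w_3^\dagger w_2$ in $B(K_2)$, which is never $\mathbb{C}$ for $a>1$.
\end{itemize}
Also, the correct criterion is $\mathcal{R}_2'\vee\mathcal{R}_3'=B(K)$. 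Your first formulation pairs $\mathcal{R}_3$ with $\mathcal{R}_2'$, which is the wrong pair.

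\textbf{How to close the gap.} You need an argument adapted to exactly this constrained freedom; a dimension count works.
\begin{itemize}
\item Take $1<a\le b<k$ (the other cases are trivial or exceptional) and fix $\mathcal{R}_2$.
\item A nontrivial intersection contains a projection $Q\in\mathcal{R}_2\cap\{P_{K_3}\}'$ of rank $s$ in $\mathcal{R}_2\cong M_a$, with $0<s<a$, satisfying $Q|_{K_3^\perp}=u\,\iota_0(Q|_{K_3})u^\dagger$ for a reference extension $\iota_0$. This restriction has rank $r=s(k-b)/a$.
\item For fixed $Q$ the solutions $u$ form a coset of $U(r)\times U(k-b-r)$. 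So the bad set has dimension at most $2s(a-s)+r^2+(k-b-r)^2$. This is below $(k-b)^2$ exactly when $k-b>a$.
\item Since $b\mid k$ and $b<k$, we have $k-b\ge b\ge a$, with equality only when $a=b=k/2$.
\item In that remaining case, $K_2\cap K_3\ne 0$ (both contain the image of $\mathcal{H}_1$), so $P_{K_3}\in\mathcal{R}_2'\cong M_2$ would force $K_3=K_2$. Hence if $K_2\neq K_3$, $\mathcal{R}_2\cap\{P_{K_3}\}'$ is a proper subalgebra of $\mathcal{R}_2$. Its rank-$s$ projections form a set of dimension strictly less than $2s(a-s)$, and the count closes again.
\end{itemize}
The same observation shows the alternative $T_{42}\mathcal{H}_2=(T_{43}\mathcal{H}_3)^\perp$ is vacuous here. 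This count gives genericity directly, by Fubini over the remaining choices, without needing step 1 or the union proposition.
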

Now suppose that we have an isometric tensor network with extremal regions $A$ and $B$ and take ${\cal H}_1 = {\cal H}_{A \cap B}$, ${\cal H}_2 = {\cal H}_A$, ${\cal H}_3 = {\cal H}_B$, and ${\cal H}_4$ to be the boundary Hilbert space. In this case, the third special case in the proposition will never hold for bond dimension greater than 2, since all dimensions are powers of the bond dimension and thus $\dim \mathcal{H}_4/ \dim \mathcal{H}_3$ cannot be 2. The first two conditions can only hold if one of the regions $A$ or $B$ has a Hilbert space dimension that is already equal to the full Hilbert space dimension. Thus, we conclude that:
\begin{itemize}
    \item For extremal regions $A$ and $B$ in an isometric tensor network with bond dimension larger than 2 and $|\partial A|, |\partial B|$ smaller than the full boundary, generic choices of the associated boundary algebras satisfying the inclusion property $\mathcal{M}_{A \cap B} \subset \mathcal{M}_A , \mathcal{M}_B$ also satisfy the intersection property
    \[
\mathcal{M}_A \cap \mathcal{M}_B = \mathcal{M}_{A \cap B} \; .
\]
\end{itemize}

\subsection{Algebra commutation}

In a quantum field theory, subalgebras associated with spacelike separated regions commute. For a potential GW to subalgebra map in the gravitational context, one possibility is that this commutation property continues to hold, that is, the subalgebras associated with disjoint GWs on the same spatial slice commute with each other. In this section, we will investigate whether this holds in the tensor network context, finding that a different stronger condition is required for commutation.

In the context of an isometric tensor network, we can make use of the following result that we prove in \cite{FactorCode}:
\begin{proposition}
\label{commuting1}
Let $\mathcal H$ be an $n$-dimensional Hilbert space, and let subspaces
$A,B\subset \mathcal H$ have dimensions $a$ and $b$. Let $P_A$ and $P_B$ be the orthogonal
projections onto $A$ and $B$. There exist commuting von Neumann factor subalgebras
\[
  \mathcal{M}_A \cong M_a(\mathbb{C})~, \quad \mathcal{M}_A A = A~, \qquad \mathcal{M}_B \cong M_b(\mathbb{C})~, \quad \mathcal{M}_B B = B~,
\]
if and only if one of the following holds 
\begin{itemize}
    \item 
     $n = abr, r=1$ and $\text{rank}  (P_AP_B)=1$ and $\|P_AP_B\|_1=1$~, \item 
     $n = abr, r=2,3,\dots$ and $\text{rank}  (P_AP_B) \le r$ and $\|P_AP_B\|_1 \le 1$~,
\end{itemize}
where $\|\cdot\|_1$ denotes the trace norm.
\end{proposition}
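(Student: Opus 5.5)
The plan is to reduce everything to a normal form for a pair of commuting factors, and then read off both directions from that normal form.

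\textbf{Step 1 (normal form for commuting factors).} Suppose $\mathcal{M}_A\cong M_a(\mathbb{C})$ and $\mathcal{M}_B\cong M_b(\mathbb{C})$ are commuting unital factors in $B(\mathcal H)$. Then $\mathcal{M}_A\vee\mathcal{M}_B\cong M_a(\mathbb{C})\otimes M_b(\mathbb{C})=M_{ab}(\mathbb{C})$ is again a unital factor. By the structure theorem used for Proposition \ref{factor}, there is a unitary identification
\[
\mathcal H\cong\mathbb{C}^a\otimes\mathbb{C}^b\otimes\mathbb{C}^r, \qquad n=abr,
\]
under which $\mathcal{M}_A=M_a\otimes\mathbb 1\otimes\mathbb 1$ and $\mathcal{M}_B=\mathbb 1\otimes M_b\otimes\mathbb 1$.

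\textbf{Step 2 (form of the invariant subspaces).} Next I use the invariance conditions. A subspace invariant under $M_a\otimes\mathbb 1$ has the form $\mathbb{C}^a\otimes K$ with $K\subset\mathbb{C}^b\otimes\mathbb{C}^r$. Since $\dim A=a$, $K$ must be spanned by a single unit vector $u\in\mathbb{C}^b\otimes\mathbb{C}^r$. Likewise $B=\mathbb{C}^b\otimes w$ for a unit vector $w\in\mathbb{C}^a\otimes\mathbb{C}^r$, with the factors placed in the appropriate slots. Hence
\[
P_A=\mathbb 1_a\otimes|u\rangle\langle u|, \qquad P_B=\mathbb 1_b\otimes|w\rangle\langle w| .
\]
Conversely, any $u,w$ of this kind give subspaces $A,B$ of the required dimensions together with commuting factors leaving them invariant. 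So the whole question becomes: which pairs of subspaces, up to a global unitary, arise from some $u,w$ and some $r$ with $n=abr$?

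\textbf{Step 3 (spectral invariants of the normal form).} Expand $u=\sum_k u_k\otimes f_k$ and $w=\sum_k w_k\otimes f_k$ in an orthonormal basis $\{f_k\}$ of $\mathbb{C}^r$, with $u_k\in\mathbb{C}^b$ and $w_k\in\mathbb{C}^a$. A direct computation should show that $P_AP_B$ factors through $\mathbb{C}^r$, so that $\operatorname{rank}(P_AP_B)\le r$. It should also show that the nonzero singular values of $P_AP_B$ (the cosines $\cos\theta_i$ of the non-right principal angles) are the singular values of an $r\times r$ matrix built from the overlaps $\langle u_k|u_l\rangle$ and $\langle w_k|w_l\rangle$. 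Cauchy--Schwarz together with $\|u\|=\|w\|=1$ should then give $\sum_i\cos\theta_i=\|P_AP_B\|_1\le 1$.

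For $r=1$ the calculation is explicit: $u$ and $w$ are products, and $P_AP_B$ is the rank-one operator
\[
|w\rangle\langle w|\otimes|u\rangle\langle u|\otimes|f\rangle\langle f|,
\]
so $\operatorname{rank}(P_AP_B)=1$ and $\|P_AP_B\|_1=1$ are forced. This proves necessity in both cases.

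\textbf{Step 4 (sufficiency).} Two pairs of subspaces with the same dimensions and the same principal angles, including multiplicities, are unitarily equivalent. So it suffices to show that every admissible list of angles is realized by some $u,w$.

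Take $r\ge2$, and suppose the admissible list consists of at most $r$ nonzero cosines $c_i$ with $\sum_i c_i\le1$. The plan is to choose $u$ and $w$ with Schmidt-type components along a common set of $f_k$ so that the overlap matrix is diagonal with entries $c_i$. One natural route is to put weight $\sqrt{c_i}$ on matched components in both vectors, and to park the leftover norm, $1-\sum_i c_i$, on mutually orthogonal directions. The dimension counts needed for this parking are where $r\ge2$ and $n=abr$ enter.

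\textbf{Main obstacle.} I expect this realization step to be the hardest part. It requires checking that the available orthogonal room in $\mathbb{C}^a$, $\mathbb{C}^b$ and $\mathbb{C}^r$ suffices in every configuration, including the degenerate ones. In particular I would need to understand why $r=1$ forces $\|P_AP_B\|_1=1$ exactly, rather than allowing $\|P_AP_B\|_1<1$: there is no spare $f$-direction in which to hide the deficit. I would handle this first in the extreme cases, $c=(1)$ and $c=\emptyset$, and then interpolate.
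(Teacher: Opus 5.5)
Your necessity argument and the reduction in Step 4 are sound. In the normal form, write $w=\sum_k w_k\otimes f_k$ and $u=\sum_k u_k\otimes f_k$. Then $P_AP_B$ has the same nonzero singular values as $M=\sum_{k=1}^r|w_k\rangle\langle u_k|:\mathbb{C}^b\to\mathbb{C}^a$. This gives $\operatorname{rank}(P_AP_B)\le r$ and $\|P_AP_B\|_1\le\sum_k\|w_k\|\,\|u_k\|\le 1$, and for $r=1$ the single singular value is exactly $\|w_1\|\,\|u_1\|=1$. Moreover, $(A,B)$ is fixed up to a unitary by $n,a,b$ and the nonzero singular values of $P_AP_B$. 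So sufficiency is exactly the problem of writing a prescribed $M=\sum_{i=1}^s c_i|p_i\rangle\langle q_i|$ (with $p_i$, $q_i$ orthonormal, $s\le r$, $\sum_i c_i\le 1$) in this form with $\sum_k\|w_k\|^2=\sum_k\|u_k\|^2=1$.

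The gap is the realization you propose. Putting weight $\sqrt{c_i}$ on both sides and parking the deficit in orthogonal directions cannot work when $s=r$ and $\sum_i c_i<1$; a concrete case is $a=b=r=2$, $c_1=c_2=0.3$. If $\operatorname{rank}M=r$, the maps $e_k\mapsto w_k$ and $e_k\mapsto u_k$ must be injective, with ranges exactly $\operatorname{ran}M$ and $\operatorname{ran}M^\dagger$. So there is no orthogonal room: any extra component of $u$ or $w$ creates a cross term in $M$. Meanwhile equal weights leave $\|u\|^2=\|w\|^2=\sum_i c_i<1$.

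The missing idea is the gauge freedom of the factorization, $M=WU^\dagger=(WG)(UG^{-\dagger})^\dagger$ for $G\in GL(r)$. Concretely, take $w_k=\alpha_k p_k$ and $u_k=\beta_k q_k$ with $\alpha_k\beta_k=c_k$ but $\alpha_k\neq\beta_k$.
\begin{itemize}
\item For $s=2$, take $\alpha=(\cos\phi,\sin\phi)$ and $\beta=(\cos\psi,\sin\psi)$ with $\cos(\phi-\psi)=c_1+c_2$ and $\cos(\phi+\psi)=c_1-c_2$. This has a solution with $0\le\psi\le\phi\le\pi/2$. In the example this gives $\alpha\approx(0.32,0.95)$ and $\beta\approx(0.95,0.32)$.
\item For $s\ge3$, fix $\alpha_i=\beta_i=\sqrt{c_i}$ for $i\ge3$. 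Then apply the $s=2$ solution rescaled to total weight $1-\sum_{i\ge3}c_i$.
\item For $s=1$, use $w=c_1p_1\otimes f_1+\sqrt{1-c_1^2}\,x\otimes f_2$ and $u=q_1\otimes f_1$.
\item For $s=0$, use $w=x\otimes f_1$ and $u=y\otimes f_2$.
\end{itemize}
The last two cases are where $r\ge2$ is genuinely needed. This also corrects your diagnosis of $r=1$. There $G$ is a scalar, so $\|w_1\|\,\|u_1\|$ is gauge invariant and must equal $1$; the issue is not the lack of a spare $f$-direction.
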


We can now apply this result to the tensor network situation. Suppose we have two extremal regions $A$ and $B$ with isometries from the associated Hilbert space to the boundary Hilbert space. In the case where $A$ and $B$ have a non-trivial intersection, Proposition \ref{intersection1} tells us that the algebras $\mathcal{M}_A$ and $\mathcal{M}_B$ typically have a non-scalar intersection and thus do not commute. 

If $A$ and $B$ are disjoint and also non-adjacent, the Hilbert space associated with $A \cup B$ is the tensor product $\mathcal{H}_A \otimes \mathcal{H}_B$.
   We have isometries
   \[
   T_{A \to AB}:|\psi \rangle_A \to  |\psi \rangle_A \otimes |B \rangle~, \qquad T_{B \to AB}:|\psi \rangle_b \to |B \rangle \otimes |\psi \rangle_B~,
   \]
   where $|A \rangle$ and $|B \rangle$ are the states on $\mathcal{H}_A$ and $\mathcal{H}_B$ produced by the tensor network. In this case, we can define operator maps $\phi_{\cup A}:{\cal O}_A \to {\cal O}_A \otimes 1_b$ and $\phi_{\cup B}:{\cal O}_B \to 1_b \otimes {\cal O}_B$ that give commuting algebras $\mathcal{M}_A^{\cup} = M_a(\mathbb{C}) \otimes 1_b$ and $\mathcal{M}_B^{\cup} = 1_a  \otimes M_b(\mathbb{C})$. When $A \cup B$ is extremal, we have an isometry $T_{AB}$ mapping $\mathcal{H}_A \otimes \mathcal{H}_B$ to the boundary Hilbert space $\mathcal{H}$. Extending this to an operator map $\phi_{\partial \cup}$ according to Proposition \ref{tree}, we have boundary subalgebras $\phi_{\partial \cup}(\mathcal{M}_A^{\cup})$ and $\phi_{\partial \cup}(\mathcal{M}_B^{\cup})$ that commute. We have thus shown that:
   \begin{itemize}
       \item If $A$ and $B$ are non-intersecting and non-adjacent extremal regions of an isometric tensor network and $A \cup B$ is extremal, it is possible to define associated operator maps defining commuting boundary von Neumann factor algebras $\mathcal{M}_A$ and $\mathcal{M}_B$. 
   \end{itemize}
   
   If $A \cup B$ is not extremal, there is some extremal region $C$ containing $A \cup B$ whose associated Hilbert space has smaller dimension than the Hilbert space associated with $A \cup B$. Applying Proposition \ref{commuting1} for $\mathcal{H} = \mathcal{H}_C$, $A = T_{CA} \mathcal{H}_A$ and $B = T_{CB} \mathcal{H}_B$, we find that the conditions cannot hold since
 $\dim \mathcal{H}_C < \dim \mathcal{H}_A \dim \mathcal{H}_B$. Thus, if we define operator maps satisfying the properties of Proposition \ref{tree} for the four Hilbert spaces associated with $A,B,C$ and the full tensor network, the algebras associated with $A$ and $B$ will not commute. 

 We can also ask whether it is possible to define commuting von Neumann factor subalgebras of the boundary algebra if we do not require compatibility with the intermediate Hilbert space $\mathcal{H}_C$. In the limit of large bond dimension, we have from our earlier tensor network calculations that $\overline{\tr(P_A P_B)} = {\dim {\cal H}_A \dim {\cal H}_B \over \dim {\cal H}_C} > 1$. Also, since all the eigenvalues of $P_A P_B$ are in $[0,1]$ we have $||P_A P_B||_1 \ge \tr(P_A P_B)$. Thus, typically we will violate the conditions of Proposition \ref{commuting1} and defining commuting algebras will not be possible.

These results motivate a proposal for the potential GW to subalgebra map in the gravity context, that for disjoint wedges $W_1$ and $W_2$, the associated subalgebras $\mathcal{A}_{W_1}$ and $\mathcal{A}_{W_2}$ commute if and only if $W_1 \cup W_2$ is also a GW.

\subsection{Non-degenerate commuting squares}
\label{sec:CommutingSquare}

In \cite{Sahu:2025upe}, it was pointed out that the strong subadditivity relation for GWs would follow given a GW-subalgebra map, the relation (\ref{Sdiff}) between generalized entropies and algebraic entropies, and the assumption that the algebras associated with $W_1 \vee W_2$, $W_1$, $W_2$, and $W_1 \cap W_2$ form a \emph{non-degenerate commuting square}.

In this section, we would like to understand more generally, given regions $A_1 \subset A_2,A_3 \subset A_4$ such that the conditions of Proposition \ref{diamond} are satisfied, when it is possible to choose the corresponding boundary algebras ${\cal M}_1 \subset {\cal M}_2,{\cal M}_3 \subset {\cal M}_4$
to form a non-degenerate commuting square.

The commuting square condition is a property of the conditional expectations
\[
E_{j \to i} : {\cal M}_j \to {\cal M}_i
\]
which in this finite-dimensional setting can be understood  geometrically as a projection to ${\cal M}_i$ considered as a subspace of ${\cal M}_j$. For a commuting square, we require
\[
E_{4 \to 3} \circ E_{4 \to 2} = E_{4 \to 2} \circ E_{4 \to 3} = E_{4 \to 1} \; .
\]
The commuting square is \emph{non-degenerate} if the commutant of the algebras within ${\cal M}_4$ also form a commuting square. 

Thinking in terms of the conditional expectations as projectors to subspaces, we can see that the commuting square condition is equivalent to the requirement that
\[
{\cal M}_2 \ominus {\cal M}_1 \perp {\cal M}_3 \ominus {\cal M}_1 \; .
\]
That is, we require that the orthogonal complement of ${\cal M}_1$ in ${\cal M}_2$ is perpendicular to the orthogonal complement of ${\cal M}_1$ in ${\cal M}_3$. This implies in particular that ${\cal M}_1$ is the intersection of ${\cal M}_2$ and ${\cal M}_3$.

When the conditions of Proposition \ref{diamond} are satisfied, we can obtain further necessary and sufficient conditions on the isometries such that the diamond of algebras form a non-degenerate commuting square, making use of the following proposition that we prove in \cite{FactorCode}:

\begin{proposition}[Diamond criterion for a non-degenerate commuting square]
\label{prop:diamond_commuting_square}
Consider Hilbert spaces
\[
\mathcal H_1,\mathcal H_2,\mathcal H_3,\mathcal H_4
\]
ordered as \(1<2,3<4\).  Let
\[
T_{ji}:\mathcal H_i\to \mathcal H_j
\]
be isometries satisfying
\[
T_{41}=T_{42}T_{21}=T_{43}T_{31}~.
\]
Let $P_i$ denote the orthogonal projector onto \(T_{4i}\mathcal H_i\subset \mathcal H_4\), for \(i=1,2,3\).  Set
\[
n=\dim\mathcal H_1~,\qquad
k=\frac{\dim\mathcal H_2}{\dim\mathcal H_1}~,\qquad
m=\frac{\dim\mathcal H_3}{\dim\mathcal H_1}~.
\]
Then the isometries can be completed by auxiliary isometries so that the associated factor algebras
\[
\mathcal M_1\subset \mathcal M_2,\mathcal M_3
\subset \mathcal M_4=B(\mathcal H_4)
\]
form a non-degenerate commuting square if and only if
\[
\dim\mathcal H_4
=
\frac{\dim\mathcal H_2\,\dim\mathcal H_3}{\dim\mathcal H_1}
=
nkm
\]
and
\[
P_2 P_3=P_3P_2=P_1~.
\]
Equivalently, the subspaces
\[
T_{42}\mathcal H_2
\qquad\text{and}\qquad
T_{43}\mathcal H_3
\]
have intersection $T_{41} \mathcal{H}_1$ and the orthogonal complements $T_{42}\mathcal H_2 \ominus T_{41} H_1$ and $T_{43}\mathcal H_3 \ominus T_{41} H_1$ are orthogonal to one another. 
\end{proposition}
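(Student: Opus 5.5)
We prove both directions by working entirely inside finite-dimensional factor structure, using Proposition \ref{factor} to write every unital operator map as $\varphi(\mathcal O)=W(\mathcal O\otimes I)W^\dagger$.

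\textbf{Sufficiency.} Assume $\dim\mathcal H_4=nkm$ and $P_2P_3=P_3P_2=P_1$. We construct an explicit tensor-product model $\mathcal H_4\cong\mathcal H_1\otimes\mathbb C^k\otimes\mathbb C^m$ in which
\[
\mathcal M_1=B(\mathcal H_1)\otimes 1\otimes 1,\quad
\mathcal M_2=B(\mathcal H_1\otimes\mathbb C^k)\otimes 1,\quad
\mathcal M_3=B(\mathcal H_1)\otimes 1\otimes B(\mathbb C^m)
\]
(with the obvious reordering). This is the standard non-degenerate commuting square, since the trace-preserving conditional expectations are partial traces over the different factors and commute, and the relative commutants form the analogous square. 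It remains to choose the unitary $W$ so that the isometries are reproduced.

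Choose orthonormal bases so that $T_{21}$ is $\psi\mapsto\psi\otimes e_1$ in $\mathcal H_1\otimes\mathbb C^k\cong\mathcal H_2$ and $T_{31}$ is $\psi\mapsto\psi\otimes f_1$ in $\mathcal H_1\otimes\mathbb C^m\cong\mathcal H_3$. These choices are possible by Proposition \ref{factor}, since $k$ and $m$ are integers. The subspaces $T_{42}(\mathcal H_1\otimes e_a)$ for $a\ge2$ and $T_{43}(\mathcal H_1\otimes f_b)$ for $b\ge2$ are then mutually orthogonal, because $P_2P_3=P_1$ means $(\mathrm{ran}P_2\ominus\mathrm{ran}P_1)\perp(\mathrm{ran}P_3\ominus\mathrm{ran}P_1)$. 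Together with $T_{41}\mathcal H_1$ they span a subspace of dimension $n(k+m-1)$. We complete this to a decomposition of $\mathcal H_4$ into $km$ mutually orthogonal copies of $\mathcal H_1$, labeled $(a,b)$, by choosing the remaining $(k-1)(m-1)$ blocks arbitrarily in the orthogonal complement. This is possible because the dimension count gives exactly $nkm$.

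Define $W(\psi\otimes e_a\otimes f_b)=\text{block}_{(a,b)}(\psi)$. The identifications of the blocks must be chosen carefully: the block $(a,1)$ is $T_{42}(\psi\otimes e_a)$, the block $(1,b)$ is $T_{43}(\psi\otimes f_b)$, and the remaining blocks are arbitrary isometric copies. With these choices, $T_{42}=W(\,\cdot\otimes f_1)$ and $T_{43}=W(\,\cdot\,\text{ with } e_1)$, and the operator maps $\varphi_{42},\varphi_{43},\varphi_{41}$ obtained by conjugating by $W$ intertwine correctly. Compatibility $\varphi_{42}\circ\varphi_{21}=\varphi_{41}$ holds by construction.

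\textbf{Necessity.} Suppose auxiliary isometries yield a non-degenerate commuting square of factors. For inclusions of finite-dimensional factors, the trace-preserving conditional expectations form a commuting square, and non-degeneracy means $\mathcal M_2\vee\mathcal M_3=\mathcal M_4$ (equivalently, the commutant square is also commuting). By the standard Popa/Jones classification of such squares of factors, this forces the index relation $[\mathcal M_4:\mathcal M_3]=[\mathcal M_2:\mathcal M_1]$. In terms of dimensions this reads $\dim\mathcal H_4/\dim\mathcal H_3=k$, which gives $\dim\mathcal H_4=nkm$. Equivalently, the commuting-square relation $\mathcal M_2\ominus\mathcal M_1\perp\mathcal M_3\ominus\mathcal M_1$ together with non-degeneracy makes $\mathcal M_2$ and $\mathcal M_1'\cap\mathcal M_3$ commuting factors that generate $\mathcal M_4$. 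Hence $\mathcal M_4\cong\mathcal M_1\otimes(\mathcal M_1'\cap\mathcal M_2)\otimes(\mathcal M_1'\cap\mathcal M_3)$, and $\mathcal H_4\cong\mathcal H_1\otimes\mathbb C^k\otimes\mathbb C^m$ in exactly the model above.

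In this model the code subspaces are constrained. $T_{42}\mathcal H_2$ is invariant under $\mathcal M_2$, hence of the form $\mathcal H_1\otimes\mathbb C^k\otimes v$ for some vector $v$. Likewise $T_{43}\mathcal H_3=\mathcal H_1\otimes u\otimes\mathbb C^m$, and $T_{41}\mathcal H_1=\mathcal H_1\otimes u'\otimes v'$. The compatibility conditions $T_{41}=T_{42}T_{21}=T_{43}T_{31}$ force $u'=u$ and $v'=v$. Projectors onto such product subspaces commute, and $P_2P_3$ is the projector onto $\mathcal H_1\otimes u\otimes v$, which is $P_1$.

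\textbf{Main obstacle.} The delicate step is the necessity direction: showing that a non-degenerate commuting square of full matrix-algebra factors in $B(\mathcal H_4)$ is conjugate to the triple tensor-product model, and in particular that it forces $\dim\mathcal H_4=nkm$. I would first try to prove directly that $\mathcal M_1'\cap\mathcal M_2$ and $\mathcal M_1'\cap\mathcal M_3$ commute, using the orthogonality $\mathcal M_2\ominus\mathcal M_1\perp\mathcal M_3\ominus\mathcal M_1$ and multiplicativity of the conditional expectations. I would then obtain the dimension count from non-degeneracy, since $\mathcal M_2\vee\mathcal M_3=\mathcal M_4$ implies $\dim\mathcal M_4=\dim\mathcal M_2\cdot\dim\mathcal M_3/\dim\mathcal M_1$.
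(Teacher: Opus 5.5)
Your sufficiency construction is correct. Given $P_2P_3=P_1$, the blocks $T_{42}(\mathcal H_1\otimes e_a)$ and $T_{43}(\mathcal H_1\otimes f_b)$ are orthogonal, $\dim\mathcal H_4=nkm$ leaves room for the remaining $(k-1)(m-1)$ blocks, and the result is the standard tensor-product square. Your dimension count via $[\mathcal M_4:\mathcal M_3]=[\mathcal M_2:\mathcal M_1]$ is also fine.

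The necessity direction, however, rests on a false structural claim. A non-degenerate commuting square of matrix factors need \emph{not} have $\mathcal M_2$ commuting with $\mathcal M_1'\cap\mathcal M_3$, so it need not be conjugate to your model $\mathcal H_1\otimes\mathbb C^k\otimes\mathbb C^m$. Take $n=1$, $\mathcal H_4=\mathbb C^2\otimes\mathbb C^2$, $\mathcal M_2=M_2(\mathbb C)\otimes 1$, and
\[
\mathcal M_3=\mathrm{span}\{1,\ \sigma_x\otimes\sigma_x,\ \sigma_x\otimes\sigma_y,\ 1\otimes\sigma_z\}\cong M_2(\mathbb C),
\]
whose three generators obey the Pauli relations. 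This square has every required property:
\begin{enumerate}
\item The traceless Pauli strings of $\mathcal M_2$ and $\mathcal M_3$ are distinct, so $\mathcal M_2\ominus\mathbb C\perp\mathcal M_3\ominus\mathbb C$ and it is a commuting square.
\item The 16 products exhaust all Pauli strings, so $\mathrm{span}(\mathcal M_2\mathcal M_3)=\mathcal M_4$.
\item The relative commutants $1\otimes M_2(\mathbb C)$ and $\mathrm{span}\{1,\sigma_x\otimes 1,\sigma_y\otimes\sigma_z,\sigma_z\otimes\sigma_z\}$ are again orthogonal.
\end{enumerate}
Yet $\sigma_z\otimes 1$ anticommutes with $\sigma_x\otimes\sigma_x$. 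Unitary conjugation preserves commutation, so this square is not conjugate to your model.

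It is also realized by a legitimate code: $V_2=\mathbb C^2\otimes|0\rangle$, $V_3=\mathrm{span}\{w\otimes|0\rangle,\sigma_x w\otimes|1\rangle\}$, $V_1=\mathbb C\,w\otimes|0\rangle$. So your step ``hence $\mathcal H_4\cong\mathcal H_1\otimes\mathbb C^k\otimes\mathbb C^m$ in exactly the model above'' is unjustified. So is the subsequent reading-off of $P_2P_3=P_1$ from product subspaces. The plan in your last paragraph to prove that commutation directly cannot succeed. In this example $P_2P_3=P_1$ does hold, so the statement survives; only your route fails.

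What you need instead is a trace argument that uses non-degeneracy in its co-commuting form. Because $V_i=T_{4i}\mathcal H_i$ is invariant under the $*$-algebra $\mathcal M_i$, we have $P_i\in\mathcal M_i'\cap\mathcal M_4$, and in particular $P_2,P_3\in\mathcal M_1'\cap\mathcal M_4$. Co-commutation of the relative-commutant square gives $E_{\mathcal M_3'\cap\mathcal M_4}(P_2)=\tau(P_2)1$, hence $\tau(P_2P_3)=\tau(P_2)\tau(P_3)$. Using your dimension identity, this reads
\[
\mathrm{Tr}(P_2P_3)=\frac{(nk)(nm)}{\dim\mathcal H_4}=n=\mathrm{Tr}(P_1).
\]
Since $P_1\le P_2$ and $P_1\le P_3$, we have $P_2P_3P_2\ge P_1$ with equal traces, so $P_2P_3P_2=P_1$. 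Then $(P_3P_2-P_1)^\dagger(P_3P_2-P_1)=P_2P_3P_2-P_1=0$, giving $P_3P_2=P_2P_3=P_1$.

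Relatedly, you must read non-degeneracy as $\mathrm{span}(\mathcal M_2\mathcal M_3)=\mathcal M_4$, or as co-commutation (for a commuting square of factors these are equivalent). It is not enough that $\mathcal M_2$ and $\mathcal M_3$ merely generate $\mathcal M_4$. Your count $\dim\mathcal M_4=\dim\mathcal M_2\dim\mathcal M_3/\dim\mathcal M_1$ needs surjectivity of the multiplication map $\mathcal M_2\otimes_{\mathcal M_1}\mathcal M_3\to\mathcal M_4$, and generation alone does not provide it.
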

In terms of the principal angles between $T_{42}\mathcal H_2$ and $T_{43}\mathcal H_3$, the condition is that we have $\dim \mathcal{H}_1$ principal angles equal to zero and the rest equal to $\pi/2$.

\begin{figure}
    \centering
    \includegraphics[width=0.4\linewidth]{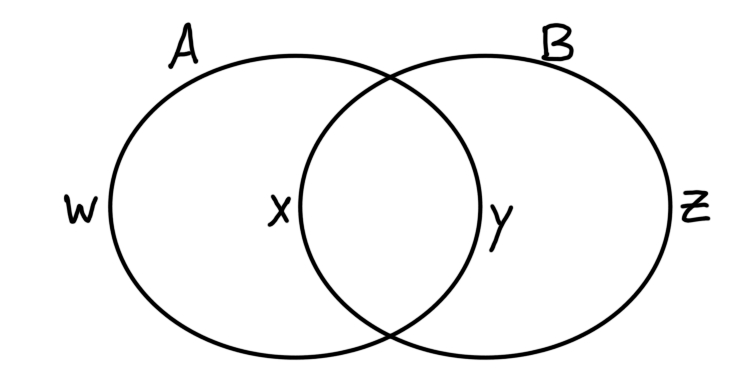}
    \caption{Overlapping regions $A$ and $B$ in a tensor network. Here $w$,$x$,$y$,and $z$ represent the number of edges in the various boundary components.}
    \label{fig:ABoverlap}
\end{figure}

Based on this result, we can immediately show that in an isometric tensor network, the non-degenerate commuting square condition will hold for extremal regions $A$ and $B$ intersecting as shown in Figure \ref{fig:ABoverlap} where $A \cup B$ is also extremal. We make use of the following:

\begin{corollary}
    Suppose $\mathcal{H}_1 = \mathcal{H}_L \otimes \mathcal{H}_R$, $\mathcal{H}_2 = \hat{\mathcal{H}}_L \otimes \mathcal{H}_R$, $\mathcal{H}_3 = \mathcal{H}_L \otimes \hat{\mathcal{H}}_R$, and $\mathcal{H}_4 = \hat{\mathcal{H}}_L \otimes \hat{\mathcal{H}}_R$ with $\dim \hat{\mathcal{H}}_L = k \dim \mathcal{H}_L$ and $\dim \hat{\mathcal{H}}_R = m \dim \mathcal{H}_R$ for integers $k,m$. Assume we have isometries $T_L : \mathcal{H}_L \to \hat{\mathcal{H}}_L$ and $T_R : \mathcal{H}_R \to \hat{\mathcal{H}}_R$. Then we can use these to define isometries $T_{ij}$ and operator maps $\varphi_{ij}$ satisfying the properties of Proposition \ref{diamond}, and the resulting algebras $\mathcal{M}_i$ form a commuting and co-commuting square. 
    \label{overlap}
\end{corollary}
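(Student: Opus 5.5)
The construction is a tensor product of two one-sided factor codes. First I apply Proposition \ref{factor} (equivalently Proposition \ref{tree} for a single inclusion) separately to $T_L:\mathcal{H}_L\to\hat{\mathcal{H}}_L$ and $T_R:\mathcal{H}_R\to\hat{\mathcal{H}}_R$. This gives unitaries $W_L:\mathcal{H}_L\otimes\mathbb{C}^k\to\hat{\mathcal{H}}_L$ and $W_R:\mathcal{H}_R\otimes\mathbb{C}^m\to\hat{\mathcal{H}}_R$ with $T_L|\psi\rangle=W_L(|\psi\rangle\otimes e_1)$ and $T_R|\psi\rangle=W_R(|\psi\rangle\otimes e_1)$. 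The associated unital operator maps are $\varphi_L(\mathcal{O})=W_L(\mathcal{O}\otimes I_k)W_L^\dagger$ and $\varphi_R$ defined analogously.

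I then define the isometries and operator maps of the diamond as tensor products with identities:
\[
T_{21}=T_L\otimes\Id_R,\quad T_{31}=\Id_L\otimes T_R,\quad T_{42}=\Id_{\hat L}\otimes T_R,\quad T_{43}=T_L\otimes\Id_{\hat R},
\]
with $\varphi_{21}=\varphi_L\otimes\mathrm{id}$, $\varphi_{31}=\mathrm{id}\otimes\varphi_R$, $\varphi_{42}=\mathrm{id}\otimes\varphi_R$, $\varphi_{43}=\varphi_L\otimes\mathrm{id}$, and $\varphi_{41}=\varphi_L\otimes\varphi_R$. The required properties then follow directly.
\begin{itemize}
\item The compatibility $T_{41}=T_{42}T_{21}=T_{43}T_{31}=T_L\otimes T_R$ holds trivially.
\item The maps $\varphi_{42}\circ\varphi_{21}=\varphi_{43}\circ\varphi_{31}=\varphi_L\otimes\varphi_R$ agree because operators acting on different tensor factors commute.
\item Each map is a unital injective $*$-homomorphism.
\item The intertwining relation $\varphi(\mathcal{O})T=T\mathcal{O}$ holds on simple tensors and extends by linearity.
\end{itemize}
So the conditions of Proposition \ref{tree} (and hence of Proposition \ref{diamond}) hold.

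Next I identify the algebras inside $B(\mathcal{H}_4)$. After conjugating by the unitary $W_L\otimes W_R$ and reordering factors, $\mathcal{H}_4\cong\mathcal{H}_L\otimes\mathbb{C}^k\otimes\mathcal{H}_R\otimes\mathbb{C}^m$. The four algebras become
\[
\mathcal{M}_1=B(\mathcal{H}_L)\otimes 1\otimes B(\mathcal{H}_R)\otimes 1,\qquad \mathcal{M}_2=B(\mathcal{H}_L\otimes\mathbb{C}^k)\otimes B(\mathcal{H}_R)\otimes 1,
\]
\[
\mathcal{M}_3=B(\mathcal{H}_L)\otimes 1\otimes B(\mathcal{H}_R\otimes\mathbb{C}^m),\qquad \mathcal{M}_4=B(\mathcal{H}_4).
\]
The trace-preserving conditional expectations are partial normalized traces. $E_{4\to2}$ is $\tau$ on the $\mathbb{C}^m$ factor, and $E_{4\to3}$ is $\tau$ on the $\mathbb{C}^k$ factor. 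These partial traces act on different tensor factors, so they commute, and their composition is the trace over both factors, which is $E_{4\to1}$. This gives the commuting square.

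For co-commutation (non-degeneracy), the relative commutants in $\mathcal{M}_4$ are $\mathcal{M}_1'=1\otimes B(\mathbb{C}^k)\otimes 1\otimes B(\mathbb{C}^m)$, $\mathcal{M}_2'=1\otimes B(\mathbb{C}^m)$, $\mathcal{M}_3'=B(\mathbb{C}^k)\otimes 1$, and $\mathcal{M}_4'=\mathbb{C}$. These again form a tensor-product diamond, so the same partial-trace argument applies. As a cross-check I would verify against Proposition \ref{prop:diamond_commuting_square}. The dimension count gives $\dim\mathcal{H}_4=nkm$. The projectors are $P_2=\Id\otimes T_RT_R^\dagger$ and $P_3=T_LT_L^\dagger\otimes\Id$; they commute and their product is $T_LT_L^\dagger\otimes T_RT_R^\dagger=P_1$.

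The main obstacle is getting the correct conditional expectations. They must be the unique trace-preserving ones, since $B(\mathcal{H}_4)$ has a unique tracial state. I also need to check that the tensor-factor reordering is implemented by a single global unitary, so that all four algebras are simultaneously in the product form above. Once that is settled the rest is bookkeeping.
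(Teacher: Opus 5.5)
Your proof is correct, but it argues differently from the paper. The paper writes down only the four isometries and checks two things: $\dim\mathcal{H}_4=\dim\mathcal{H}_2\dim\mathcal{H}_3/\dim\mathcal{H}_1$, and $P_2P_3=P_3P_2=P_1$. It then invokes the diamond criterion, Proposition~\ref{prop:diamond_commuting_square} (proved in the companion paper), to conclude that \emph{some} choice of auxiliary isometries gives a non-degenerate commuting square. It never exhibits the operator maps $\varphi_{ij}$. What you present as a cross-check is therefore the paper's entire proof.

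Your main argument instead builds the $\varphi_{ij}$ explicitly as tensor products of single-code factor maps from Proposition~\ref{factor}. It puts all four algebras simultaneously into product form and verifies the commuting and co-commuting squares directly with commuting normalized partial traces. This buys self-containment: it needs only Proposition~\ref{factor} and elementary tensor-product identities, not the nontrivial sufficiency direction of the diamond criterion. It also shows that the natural product completions work for \emph{any} choice of $W_L$, $W_R$. The paper's route is shorter and places the example inside the general necessary-and-sufficient characterization.

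The obstacle you flag at the end dissolves. With the ordering you wrote, $W_L\otimes W_R:(\mathcal{H}_L\otimes\mathbb{C}^k)\otimes(\mathcal{H}_R\otimes\mathbb{C}^m)\to\hat{\mathcal{H}}_L\otimes\hat{\mathcal{H}}_R$ is itself the single global unitary, so no extra reordering is needed.

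Two small wording fixes. First, the equality $\varphi_{42}\circ\varphi_{21}=\varphi_{43}\circ\varphi_{31}$ follows from $(f\otimes g)\circ(h\otimes k)=(f\circ h)\otimes(g\circ k)$, not from operators on different factors commuting. Second, your commutants should carry identities on the suppressed factors: $\mathcal{M}_2'=\mathbb{1}_L\otimes\mathbb{1}_k\otimes\mathbb{1}_R\otimes B(\mathbb{C}^m)$ and $\mathcal{M}_3'=\mathbb{1}_L\otimes B(\mathbb{C}^k)\otimes\mathbb{1}_R\otimes\mathbb{1}_m$.
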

\begin{proof}
We have
\[
\dim \mathcal{H}_4 = \dim \mathcal{H}_{\hat{L}} \dim \mathcal{H}_{\hat{R}} = {(\dim \mathcal{H}_{\hat{L}} \dim \mathcal{H}_R) (\dim \mathcal{H}_L \dim \mathcal{H}_{\hat{R}}) \over \dim \mathcal{H}_L \dim \mathcal{H}_R} = {\dim \mathcal{H}_2 \dim \mathcal{H}_3 \over \dim \mathcal{H}_1} \; .
\]
Also, 
\[
T_{21} = T_L \otimes \mathbb{1}_R~, \qquad T_{31} = \mathbb{1}_L \otimes T_R~, \qquad T_{42} = \mathbb{1}_{\hat{L}} \otimes T_R~, \qquad T_{43} = T_L \otimes \mathbb{1}_{\hat{R}}~,
\]
and
\[
P_2 =  \mathbb{1}_{\hat{L}} \otimes T_R T_R^\dagger~, \qquad P_3 =   T_L T_L^\dagger \otimes \mathbb{1}_{\hat{R}}~, \qquad P_1 =   T_L T_L^\dagger \otimes T_R T_R^\dagger \; .
\]
It follows immediately that $T_{42} T_{21} = T_{43} T_{31}$ and that $P_2 P_3 = P_3 P_2 = P_1$. Thus, the necessary and sufficient conditions of Proposition \ref{prop:diamond_commuting_square} are satisfied.
\end{proof}

For extremal regions $A$ and $B$ as shown in Figure \ref{fig:ABoverlap}, the condition that $A$ and $B$ are extremal give $w \ge x$ and $z \ge y$, so the conditions of the proposition are satisfied with $\mathcal{H}_1 = \mathcal{H}_{A \cap B} = \mathcal{H}_x \otimes \mathcal{H}_y$, $\mathcal{H}_2 = \mathcal{H}_A = \mathcal{H}_w \otimes \mathcal{H}_y$, $\mathcal{H}_3 = \mathcal{H}_B = \mathcal{H}_x \otimes \mathcal{H}_z$ and $\mathcal{H}_4 = \mathcal{H}_{A \cup B} = \mathcal{H}_w \otimes \mathcal{H}_z$, so we can define boundary algebras $\mathcal{M}_{A \cap B} \subset \mathcal{M}_A, \mathcal{M}_B \subset \mathcal{M}_{A \cup B}$ associated with the regions $A \cap B \subset A,B \subset A \cup B$, and these algebras will form a non-degenerate commuting square.

For an isometric tensor network with extremal regions $A$ and $B$ and $A\cup B$ \emph{not} extremal, the non-degenerate commuting square condition for the algebras $\mathcal{M}_{A \cap B} \subset \mathcal{M}_A, \mathcal{M}_B \subset \mathcal{M}_{A \vee B}$ (assuming that the inclusion property is satisfied) associated with the regions $A \cap B \subset A,B \subset A \vee B$ is violated. Indeed, Proposition \ref{prop:overlap-projectors-variance} shows that on average to leading order in $1/D$ we have
\[
\overline{\tr(P_AP_B)} =\frac{D_AD_B}{D_{A\vee B}} > D_{A\cap B}~,
\]
where we have used the general relation $|\partial A| + |\partial B| \ge |\partial A \cap B| + |\partial A \cup B| > |\partial A \cap B| + |\partial A \vee B|$ for $A\cup B$ not extremal. This contradicts the statement that we have $D_{A\cap B}$ principal angles equal to zero and the rest equal to $\pi/2$. Hence, from Proposition \ref{prop:diamond_commuting_square} we learn that the boundary algebras cannot form a non-degenerate commuting square. 

\subsection{Generalized entropy}
\label{subsec:generalized-entropy}

In \cite{Sahu:2025upe}, we proposed that the generalized entropy associated to a wedge $W$ can be expressed in terms of an entropic quantity associated to the corresponding algebra $\mathcal{A}_W$. Our generalized entropy prescription is given by
\begin{equation}
S_{\mathrm{gen}}(W)= S(\omega_W|\Acal_W)-\log \operatorname{Ind}E_{\Omega\rightarrow \Acal_W}+K_\Omega~,
\label{eq:generalized-entropy-tensor-network}
\end{equation}
where \(S(\omega_W|\Acal_W)\) denotes the algebraic entropy for the pair \((\Acal_W,\omega_W)\), and \(K_\Omega\) is a state-independent normalization constant associated with the ambient algebra \(\Omega\). This formula gives the result (\ref{Sdiff}) for entropy differences, and can also be obtained from (\ref{Sdiff}) by taking the smaller region to be empty. We now evaluate the formula \eqref{eq:generalized-entropy-tensor-network} in our finite-dimensional tensor network model.

Suppose we have an extremal subregion $A$ in the tensor network with associated boundary von Neumann factor algebra $\mathcal{M}_A\subset \mathcal{M}_{\partial}=B(\Hcal_\partial)$. Its construction depends on a choice of mutually orthogonal isometries $T^{(a)}:\mathcal{H}_A\to \mathcal{H}_{\partial}$ satisfying 
\[
T^{(a)\dagger}T^{(b)}
=
\delta_{ab}\mathbb 1_A~,
\qquad
\sum_{a=1}^{m}T^{(a)}T^{(a)\dagger}
=
\mathbb 1_\partial~,
\]
where 
\[
m = D_{\partial}/D_A~, \qquad 
D_\partial=\dim\Hcal_\partial~,
\qquad
D_A=\dim\Hcal_A~.
\]
A natural choice for the algebra $\Omega$ in the tensor network is the algebra associated to the boundary system $\Omega = B(\mathcal{H}_{\partial})$. We now give an explicit expression for conditional expectation from $\Omega$ to $\mathcal{M}_A$:

\begin{proposition}
\label{prop:tensor-network-expectation}
The unique trace-preserving conditional expectation $E_{\Omega\rightarrow \mathcal{M}_A}:
B(\Hcal_\partial)\longrightarrow\mathcal M_A$
is given by
\begin{equation}
E_{\Omega\rightarrow \mathcal{M}_A}(\mathcal{O}_{\partial})
=
\frac{1}{m}
\sum_{a,b=1}^{m}
T^{(b)}T^{(a)\dagger}
\mathcal{O}_{\partial}
T^{(a)}T^{(b)\dagger}~.
\label{eq:explicit-conditional-expectation}
\end{equation}

\end{proposition}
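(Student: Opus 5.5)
The cleanest route is to move to the tensor-product picture supplied by the unitary $U_A=(T^{(1)}\cdots T^{(m)}):\mathcal{H}_A\otimes\mathbb{C}^m\to\mathcal{H}_\partial$. In that picture $\mathcal{M}_A=U_A(B(\mathcal{H}_A)\otimes\mathbb{1}_m)U_A^\dagger$ and $\Omega\cong B(\mathcal{H}_A)\otimes M_m(\mathbb{C})$. The unique trace-preserving conditional expectation onto $B(\mathcal{H}_A)\otimes\mathbb{1}_m$ is the normalized partial trace $X\mapsto \frac{1}{m}\Tr_{\mathbb{C}^m}(X)\otimes\mathbb{1}_m$. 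The plan is to conjugate this by $U_A$ and show that it equals \eqref{eq:explicit-conditional-expectation}.

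\textbf{Step 1 (reduction).} Write $\mathcal{O}_\partial=U_A X U_A^\dagger$. Using $T^{(a)}=U_A(\mathbb{1}_A\otimes|e_a\rangle)$, we get
\[
T^{(a)\dagger}\mathcal{O}_\partial T^{(a)}=(\mathbb{1}\otimes\langle e_a|)X(\mathbb{1}\otimes|e_a\rangle).
\]
Summing over $a$ gives $\Tr_{\mathbb{C}^m}X$. The outer factors $\sum_b T^{(b)}(\cdot)T^{(b)\dagger}$ are just $\varphi$ from \eqref{opmap}, which is $U_A(\,\cdot\otimes\mathbb{1}_m)U_A^\dagger$. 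Hence the right-hand side of \eqref{eq:explicit-conditional-expectation} equals $\varphi\big(\frac{1}{m}\sum_a T^{(a)\dagger}\mathcal{O}_\partial T^{(a)}\big)$, which is $U_A\big(\frac{1}{m}\Tr_{\mathbb{C}^m}X\otimes\mathbb{1}_m\big)U_A^\dagger$. This expression already lies in $\mathcal{M}_A$.

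\textbf{Step 2 (conditional expectation properties).} Next we verify the defining properties directly, so the argument does not depend on quoting the partial trace result.
\begin{itemize}
\item \emph{Unitality and positivity.} The map is a composition of the completely positive maps $\mathcal{O}\mapsto \frac1m\sum_a T^{(a)\dagger}\mathcal{O}T^{(a)}$ and $\varphi$, and it is unital because $T^{(a)\dagger}T^{(a)}=\mathbb{1}_A$.
\item \emph{Idempotence on $\mathcal{M}_A$.} For $\mathcal{O}_\partial=\varphi(Y)$, orthogonality $T^{(a)\dagger}T^{(c)}=\delta_{ac}\mathbb{1}_A$ gives $T^{(a)\dagger}\varphi(Y)T^{(a)}=Y$. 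Averaging over the $m$ values of $a$ then returns $\varphi(Y)$.
\item \emph{Bimodule property.} For $E(\varphi(Y_1)\mathcal{O}\varphi(Y_2))$ we use $\varphi(Y)T^{(a)}=T^{(a)}Y$, which follows from \eqref{opmap} and orthogonality. This pulls $Y_1$ and $Y_2$ outside, giving $\varphi(Y_1)E(\mathcal{O})\varphi(Y_2)$.
\item \emph{Trace preservation.} We have $\Tr\, E(\mathcal{O})=m\cdot\frac1m\sum_a\Tr(T^{(a)\dagger}\mathcal{O}T^{(a)})=\Tr(\mathcal{O}\sum_a P^{(a)})=\Tr\mathcal{O}$.
\end{itemize}

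\textbf{Step 3 (uniqueness).} This is the only step where some care is needed. Suppose $E'$ is another trace-preserving conditional expectation onto $\mathcal{M}_A$. For any $M\in\mathcal{M}_A$, the bimodule property and trace preservation give
\[
\Tr(E'(\mathcal{O})M)=\Tr(E'(\mathcal{O}M))=\Tr(\mathcal{O}M).
\]
So $E'(\mathcal{O})$ is the orthogonal projection of $\mathcal{O}$ onto $\mathcal{M}_A$ in the Hilbert–Schmidt inner product. That projection is unique because the trace is faithful, so $E'$ coincides with the map constructed above.
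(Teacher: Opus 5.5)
Your proof is correct, but it takes a somewhat different route from the paper.

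The paper's proof is a single computation. It notes that the image of the formula lies in $\mathcal{M}_A$ and checks that $\tr\bigl(\varphi(\mathcal{O}_A)E(\mathcal{O}_\partial)\bigr)=\tr\bigl(\varphi(\mathcal{O}_A)\mathcal{O}_\partial\bigr)$ for all $\mathcal{O}_A$. This shows the formula is the Hilbert--Schmidt orthogonal projection onto $\mathcal{M}_A$. The paper then tacitly uses the standard fact that this projection is the unique trace-preserving conditional expectation.

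You proceed differently:
\begin{itemize}
\item You first rewrite the formula as $\varphi\bigl(\frac1m\sum_a T^{(a)\dagger}(\cdot)\,T^{(a)}\bigr)$, which is the normalized partial trace over $\mathbb{C}^m$ conjugated by $U_A$.
\item You then verify the conditional-expectation axioms directly: unitality, positivity, idempotence, the bimodule property, and trace preservation.
\item In Step 3 you prove that any trace-preserving conditional expectation must be the Hilbert--Schmidt projection. This is exactly the fact the paper leaves implicit.
\end{itemize}

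Your version is more self-contained. It establishes that the formula really is a conditional expectation, which the paper never checks (positivity and the bimodule property in particular), and it actually proves uniqueness. It also makes explicit the partial-trace description that the paper only states after the proof.

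The paper's version is shorter. It stays entirely at the level of the isometries $T^{(a)}$, without passing through the factorization $\mathcal{H}_\partial\simeq\mathcal{H}_A\otimes\mathbb{C}^m$. Its orthogonality identity is precisely what your Step 3 yields when applied to your own map.
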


\begin{proof}

We verify that \eqref{eq:explicit-conditional-expectation} is the orthogonal projection to the subspace $\mathcal{M}_A$ with respect to the inner product $\langle B|A\rangle = \tr(B^\dagger A)$. It is clear from the definition of $E$ that its image is in $\mathcal{M}_A$. It remains to check that \( \mathcal{O}_{\partial} - E(\mathcal{O}_{\partial}) \) is orthogonal to $\mathcal{M}_A$. To see this, note that
\begin{align*}
    \tr\big(\varphi(\mathcal{O}_{A})E(\mathcal{O}_{\partial})\big) &= \frac{1}{m}\sum_{a,b,c} \tr\left( T^{(c)}\mathcal{O}_{A}T^{(c)\dagger}T^{(a)}T^{(b)\dagger}\mathcal{O}_{\partial} T^{(b)}T^{(a)\dagger}\right) \\
    &= \frac{1}{m}\sum_{a,b,c} \tr\left( T^{(c)}\mathcal{O}_{A}\delta^{ac}T^{(b)\dagger}\mathcal{O}_{\partial} T^{(b)}T^{(a)\dagger}\right) \\
    &= \frac{1}{m}\sum_{a,b} \tr\left( T^{(a)}\mathcal{O}_{A}T^{(b)\dagger}\mathcal{O}_{\partial} T^{(b)}T^{(a)\dagger}\right) \\
    &= \sum_{b} \tr\left( T^{(b)} \sum_a \frac{1}{m}T^{(a)\dagger}T^{(a)}\mathcal{O}_{A}T^{(b)\dagger}\mathcal{O}_{\partial} \right) = \tr\left(\sum_b T^{(b)}\mathcal{O}_{A} T^{(b)\dagger}\mathcal{O}_{\partial}\right)\\
    &= \tr(\varphi(\mathcal{O}_{A})\mathcal{O}_{\partial})~.
\end{align*}
Thus \( \mathcal{O}_{\partial} - E(\mathcal{O}_{\partial}) \) is orthogonal to the entire algebra generated by the operator map $\varphi$ and hence $E(\mathcal{O}_{\partial})$ is the projection of $\mathcal{O}_{\partial}$ onto this algebra. 

\end{proof}

One can equivalently express the conditional expectation \eqref{eq:explicit-conditional-expectation} in terms of a more standard partial trace operation. Under the unitary identification
\[
\Hcal_\partial
\simeq
\Hcal_A\otimes\mathbb C^m
\]
determined by the isometries \(T^{(a)}\), the algebra \(\mathcal{M}_A\) takes the form
\[
\mathcal{M}_A
\simeq
B(\Hcal_A)\otimes\mathbb 1_m~.
\]
In this representation, the conditional expectation takes the standard form
\[
E_{\Omega \to \mathcal{M}_A}(\mathcal{O}_{\partial})\simeq \tau_m(\mathcal{O}_{\partial})\otimes \mathbb{1}_m~,
\]
where $\tau_m(Y)=\frac{1}{m}\tr_m(Y)$ is the normalized partial trace over $\mathbb{C}^m$. The expression \eqref{eq:explicit-conditional-expectation} is written directly in terms of the isometry $T_A$ and its orthogonal counterparts, without assuming an explicit Hilbert space factorization. 

The index of the conditional expectation agrees with the Jones index, which for the finite-dimensional tensor network algebras is simply given by a ratio of dimensions:
\[
\operatorname{Ind}E_{\Omega\rightarrow \mathcal{M}_A}
=\left(\frac{D_{\partial}}{D_A} \right)^2 =m^2~.
\]

We next evaluate the generalized entropy formula. Recall that the algebraic entropy in \eqref{eq:generalized-entropy-tensor-network} is defined as \cite{Sahu:2025upe}:
\[
S(\omega | \mathcal{A}) :=  - \tau(\rho_\mathcal{A} \log \rho_\mathcal{A})~,
\] 
where $\tau$ is a normalized trace, i.e. $\tau(\mathbb{1})=\mathbb{1}$, and $\rho_\mathcal{A}=E(\rho_{\Omega})$ is the density matrix of the state $\omega$ coarse-grained to the subalgebra $\mathcal{A}$ through the conditional expectation $E$. A natural choice for the state $\omega$ in the entropy formula is the code state 
\[
|\Psi\rangle =
T_A|\psi_A\rangle \in \mathcal{H}_{\partial}~,
\]
where $|\psi_A\rangle$ is the state constructed from part of the tensor network corresponding to the subregion $A$. Note that the state $|\Psi\rangle$ is determined by the full tensor network geometry and therefore independent of the choice of subregion. We can extend this to an orthonormal set of states
\[
|\Psi^{(a)}\rangle
=
T^{(a)}|\psi_A\rangle~, \qquad a=1,..., m~,
\]
with $|\Psi^{(1)}\rangle= |\Psi\rangle$. We use that the normalized trace on $\Omega$ is given by:
\[
\tau(\mathcal{O}_{\partial})
=
\frac{1}{D_\partial}
\tr(\mathcal{O}_{\partial})~.
\]
Note that the ordinary density matrix $\rho_\Psi
=
|\Psi\rangle\langle\Psi|$
is not normalized with respect to \(\tau_\partial\), so we define $\widehat\rho_\Psi
=
D_\partial\rho_\Psi$, which the property $\tau(\widehat\rho_\Psi)=1$. We show that:

\begin{proposition}
\label{prop:tensor-network-generalized-entropy}
For the code state \(|\Psi\rangle=T_A|\psi_A\rangle\), the generalized entropy formula
\begin{equation} \label{eq:SgenA}
S_{\mathrm{gen}}(A)= S(\Psi_A|\mathcal{M}_A)-\log \operatorname{Ind}E_{\Omega\rightarrow \mathcal{M}_A}+K_\Omega~,
\end{equation}
with the choice $K_\Omega=2\log D_\partial$, reduces to
\[
S_{\mathrm{gen}}(A)=\log D_A~.
\]
\end{proposition}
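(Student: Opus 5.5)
The plan is a direct computation in the tensor-product picture $\Hcal_\partial \simeq \Hcal_A\otimes\mathbb C^m$ fixed by the unitary $U_A=(T^{(1)}\cdots T^{(m)})$. In this picture $\mathcal M_A\simeq B(\Hcal_A)\otimes\mathbb 1_m$, the code state is $|\Psi\rangle\simeq|\psi_A\rangle\otimes e_1$, and the conditional expectation of Proposition \ref{prop:tensor-network-expectation} is $E(\mathcal O_\partial)\simeq\tau_m(\mathcal O_\partial)\otimes\mathbb 1_m$.

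First, I will compute the coarse-grained density operator. Set $\widehat\rho_\Psi=D_\partial|\Psi\rangle\langle\Psi|$. Then
\[
\rho_{\mathcal M_A}=E(\widehat\rho_\Psi)=\frac{D_\partial}{m}\,|\psi_A\rangle\langle\psi_A|\otimes\mathbb 1_m=D_A\,|\psi_A\rangle\langle\psi_A|\otimes\mathbb 1_m .
\]
The same result follows from the explicit formula (\ref{eq:explicit-conditional-expectation}): using $T^{(a)\dagger}T^{(1)}=\delta_{a1}\mathbb 1_A$, only $a=1$ survives, which gives $\frac{1}{m}\sum_b T^{(b)}|\psi_A\rangle\langle\psi_A|T^{(b)\dagger}\cdot D_\partial$. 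As a consistency check, $\tau(\rho_{\mathcal M_A})=1$.

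Next, I will evaluate the algebraic entropy with the normalized trace $\tau=\frac{1}{D_\partial}\tr$. The operator $\rho_{\mathcal M_A}$ is $D_A$ times a projector of rank $m$, so
\[
S(\Psi_A|\mathcal M_A)=-\tau(\rho\log\rho)=-\frac{1}{D_\partial}\,m\,D_A\log D_A=-\log D_A .
\]
The key point is that $|\psi_A\rangle$ is pure, so its von Neumann entropy contributes nothing. The only contribution is the normalization factor $D_A$.

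Finally, I will assemble the pieces. Using $\log\operatorname{Ind}E=2\log m=2\log D_\partial-2\log D_A$ and $K_\Omega=2\log D_\partial$,
\[
S_{\rm gen}(A)=-\log D_A-2\log D_\partial+2\log D_A+2\log D_\partial=\log D_A .
\]

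There is no real obstacle here. The one step needing care is the normalization conventions: $\tau$ is normalized on $\Omega$ and is used to evaluate the entropy of an element of $\mathcal M_A$, and the density is taken relative to $\tau$ rather than $\tr$. I will state these explicitly so that the signs in $-\log D_A$ and in the index term come out correctly. It is also worth remarking that the result is independent of the choice of auxiliary isometries, because only the $a=1$ block enters.
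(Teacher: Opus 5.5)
Your proposal is correct and follows the paper's proof essentially step for step. You collapse the conditional expectation using $T^{(a)\dagger}T^{(1)}=\delta_{a1}\mathbb 1_A$ to obtain $D_A$ times a rank-$m$ projector, evaluate $-\tau(\rho\log\rho)=-\log D_A$, and combine this with $\log\operatorname{Ind}E=2\log m$ and $K_\Omega=2\log D_\partial$. One small correction to your closing remark: the surviving $b$-sum still involves all the auxiliary isometries, so $\rho_{\mathcal M_A}$ itself depends on that choice, and only its spectrum (hence the entropy) is choice-independent.
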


\begin{proof}
The density operator representing the restricted state is obtained by applying
the conditional expectation:
\begin{align}
\rho_{\mathcal M_A}
&=
E_{\Omega\rightarrow \mathcal{M}_A}(\widehat\rho_\Psi)
\nonumber\\
&=
\frac{D_\partial}{m}
\sum_{a,b=1}^{m}
T^{(b)}T^{(a)\dagger}
T^{(1)}
|\psi_A\rangle\langle\psi_A|
T^{(1)\dagger}
T^{(a)}T^{(b)\dagger}~. \nonumber
\end{align}
Using that $T^{(a)\dagger}T^{(1)}
=
\delta_{a1}\mathbb 1_A$ we obtain
\[
\rho_{\mathcal M_A}
=
\frac{D_\partial}{m}
\sum_{b=1}^{m}
T^{(b)}
|\psi_A\rangle\langle\psi_A|
T^{(b)\dagger}=
\frac{D_\partial}{m}
\sum_{b=1}^{m}
|\Psi_b\rangle\langle\Psi_b|~.
\]
Since the vectors \(|\Psi_b\rangle\) are orthonormal,
\[
\Pi
=
\sum_{b=1}^{m}
|\Psi_b\rangle\langle\Psi_b|
\]
is a
rank\(-m\) projection, and
\[
\rho_{\mathcal M_A}\log\rho_{\mathcal M_A}
=
\frac{D_\partial}{m}
\log\left(\frac{D_\partial}{m}\right)\Pi~.
\]
The algebraic entropy therefore evaluates to
\begin{align}
S(\Psi_A|\mathcal{M}_A)
&=
-\tau(\rho_{\mathcal M_A}\log\rho_{\mathcal M_A})\nonumber \\
&=
-
\frac{D_\partial}{m}
\log\left(\frac{D_\partial}{m}\right)
\tau(\Pi) = -\log\left(\frac{D_\partial}{m}\right)~, \nonumber
\end{align}
where we used that the normalized trace of a rank$-m$ projection is $\tau(\Pi)
=
\frac{m}{D_\partial}$.

Substituting this result and $\log\operatorname{Ind}E_{\Omega\rightarrow \mathcal{M}_A}
=
2\log m$ into \eqref{eq:SgenA} gives
\begin{align*}
S_{\mathrm{gen}}(A)
&=
\log\frac{m}{D_\partial}
-
2\log m
+
K_\Omega \\
&=
K_\Omega-2\log D_\partial+\log D_A~. \nonumber
\end{align*}
The choice \(K_\Omega=2\log D_\partial\) therefore yields
\[
S_{\mathrm{gen}}(A)
=
\log D_A~.
\]
\end{proof}

Note that the negative value of algebraic entropy is a consequence of using
density operators normalized with respect to the normalized trace
\(\tau_\partial\), rather than the ordinary matrix trace. The index term and
the state-independent constant \(K_\Omega\) combine with this \(S(\Psi_A\vert \mathcal{M}_A)\) to reproduce the expected positive contribution
\(\log D_A\), the tensor network analogue of the area of the subregion $A$.

Instead of considering the pure code state $|\Psi\rangle=T_A|\psi_A\rangle$, we may start from a general
mixed state $\rho_A\in B(\Hcal_A)$
and define the corresponding boundary state by \[
\rho_\Psi
=
T_A\rho_AT_A^\dagger~.
\]
Such a situation arises naturally when we extend our tensor network to include additional
bulk legs. These bulk legs can be used to feed in a mixed state. We prove the following:

\begin{proposition}
\label{prop:tensor-network-generalized-entropy-mixed}
Let $\rho_\Psi=T_A\rho_AT_A^\dagger$ be the boundary state associated with a density operator \(\rho_A\) on
\(\Hcal_A\). Then the generalized entropy formula
\begin{equation}
S_{\mathrm{gen}}(A)
=
S(\Psi_A|\mathcal{M}_A)
-
\log\operatorname{Ind}
E_{\Omega\rightarrow\mathcal{M}_A}
+
K_\Omega
\label{eq:SgenA-mixed}
\end{equation}
with the choice $K_\Omega=2\log D_\partial$ reduces to
\begin{equation}
S_{\mathrm{gen}}(A)
=
\log D_A
+
S(\rho_A)~,
\label{eq:SgenA-mixed-final}
\end{equation}
where $S(\rho_A)
=
-\Tr(\rho_A\log\rho_A)$ is the von Neumann entropy of the state $\rho_A$.
\end{proposition}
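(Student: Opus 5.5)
I will follow the proof of Proposition \ref{prop:tensor-network-generalized-entropy} step by step. The only change is that the pure state $|\psi_A\rangle\langle\psi_A|$ is replaced by the general density operator $\rho_A$. First I normalize with respect to $\tau$ by setting $\widehat\rho_\Psi = D_\partial T_A\rho_A T_A^\dagger = D_\partial T^{(1)}\rho_A T^{(1)\dagger}$. Then I apply the explicit conditional expectation \eqref{eq:explicit-conditional-expectation} of Proposition \ref{prop:tensor-network-expectation}. The orthogonality relations $T^{(a)\dagger}T^{(1)}=\delta_{a1}\mathbb 1_A$ collapse the double sum to a single sum:
\[
\rho_{\mathcal M_A} = \frac{D_\partial}{m}\sum_{b=1}^m T^{(b)}\rho_A T^{(b)\dagger} = \frac{D_\partial}{m}\,\varphi(\rho_A)~.
\]
Equivalently, under the identification $\Hcal_\partial\simeq\Hcal_A\otimes\mathbb C^m$ this is $\rho_{\mathcal M_A}\simeq \frac{D_\partial}{m}\,\rho_A\otimes\mathbb 1_m$.

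The key step is computing $\rho_{\mathcal M_A}\log\rho_{\mathcal M_A}$. Since $\varphi$ is a unital $*$-homomorphism, it commutes with the functional calculus, so $\log\varphi(\rho_A)=\varphi(\log\rho_A)$ on the support. This is manifest in the tensor-product picture, where the operator is block diagonal with $m$ identical blocks $\frac{D_\partial}{m}\rho_A$. One detail needs care: if $\rho_A$ is not full rank, I use the convention $0\log 0=0$, and the support issue is harmless because the block structure is preserved. The result is
\[
\rho_{\mathcal M_A}\log\rho_{\mathcal M_A} = \frac{D_\partial}{m}\,\varphi\!\left(\rho_A\log\rho_A + \log\!\left(\tfrac{D_\partial}{m}\right)\rho_A\right)~.
\]
Taking $\tau=\frac{1}{D_\partial}\tr$ and using $\tr\,\varphi(X)=m\,\tr X$, I obtain
\[
S(\Psi_A|\mathcal M_A) = -\tr(\rho_A\log\rho_A) - \log\frac{D_\partial}{m}\,\tr\rho_A = S(\rho_A) - \log\frac{D_\partial}{m}~.
\]
This reduces to the pure-state result when $S(\rho_A)=0$, which serves as a check.

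Finally I substitute $\log\operatorname{Ind}E_{\Omega\to\mathcal M_A}=2\log m$ and $K_\Omega=2\log D_\partial$, exactly as in the pure case:
\[
S_{\rm gen}(A)=S(\rho_A)+\log\frac{m}{D_\partial}-2\log m+2\log D_\partial = S(\rho_A)+\log\frac{D_\partial}{m}~.
\]
With $m=D_\partial/D_A$ this equals $S(\rho_A)+\log D_A$, as claimed.

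I do not expect a real obstacle. The only step needing justification is the functional-calculus step, namely that $\log$ commutes with the homomorphism $\varphi$. I will justify it via the unitary $U_A$ of \eqref{opmap}, since $f(U X U^\dagger)=U f(X) U^\dagger$ and $f(X\otimes\mathbb 1_m)=f(X)\otimes\mathbb 1_m$.
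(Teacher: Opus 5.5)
Your proposal is correct and follows essentially the same route as the paper: collapse the double sum in the conditional expectation using $T^{(a)\dagger}T^{(1)}=\delta_{a1}\mathbb{1}_A$, determine the spectrum of $\rho_{\mathcal M_A}$, and substitute $\log\operatorname{Ind}E_{\Omega\rightarrow\mathcal{M}_A}=2\log m$ and $K_\Omega=2\log D_\partial$. The only cosmetic difference is how you get the spectrum: you use the fact that the functional calculus commutes with the unital $*$-homomorphism $\varphi$ (the block form $\frac{D_\partial}{m}\rho_A\otimes\mathbb{1}_m$), whereas the paper reads it off directly from the orthonormal vectors $T^{(b)}|i\rangle$, which give eigenvalues $\frac{D_\partial}{m}\lambda_i$, each with multiplicity $m$.
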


\begin{proof}
The argument proceeds as in
Proposition~\ref{prop:tensor-network-generalized-entropy}, with the rank-one
operator
\(
|\psi_A\rangle\langle\psi_A|
\)
replaced by the more general mixed density operator \(\rho_A\).

Since \(\rho_\Psi\) has unit trace with respect to the ordinary matrix trace,
the density operator normalized with respect to $\tau$
is given by $
\widehat\rho_\Psi
=
D_\partial T_A\rho_AT_A^\dagger$.
Applying the conditional expectation gives
\[
\rho_{\mathcal M_A}
=
E_{\Omega\rightarrow\mathcal M_A}
(\widehat\rho_\Psi)
\nonumber=
\frac{D_\partial}{m}
\sum_{b=1}^{m}
T^{(b)}\rho_AT^{(b)\dagger}~.
\]

Now let
\[
\rho_A
=
\sum_i\lambda_i|i\rangle\langle i|~, \qquad \lambda_i\geq0 ~, \qquad
\sum_i\lambda_i=1~,
\]
be the spectral decomposition of \(\rho_A\). Defining
\[
|\Psi_{b,i}\rangle
=
T^{(b)}|i\rangle~,
\]
we can write the coarse-grained density operator as
\[
\rho_{\mathcal M_A}
=
\frac{D_\partial}{m}
\sum_{b,i}
\lambda_i
|\Psi_{b,i}\rangle
\langle\Psi_{b,i}|~.
\]
The orthogonality relations for the isometries imply
\[
\langle\Psi_{a,i}|\Psi_{b,j}\rangle
=
\delta_{ab}\delta_{ij}~.
\]
Thus, each eigenvalue \(\lambda_i\) of \(\rho_A\) gives rise to the eigenvalue $\frac{D_\partial}{m}\lambda_i$ of \(\rho_{\mathcal M_A}\), with multiplicity \(m\). It follows that
\begin{align*}
S(\Psi_A|\mathcal M_A)
&=
-\tau\left(
\rho_{\mathcal M_A}
\log\rho_{\mathcal M_A}
\right)
\nonumber\\
&=
-
\frac{1}{D_\partial}
\sum_{b,i}
\frac{D_\partial}{m}\lambda_i
\log\left(
\frac{D_\partial}{m}\lambda_i
\right)
\nonumber\\
&=
-
\sum_i\lambda_i
\log\left(
\frac{D_\partial}{m}\lambda_i
\right)
\nonumber\\
&=
-\log\left(\frac{D_\partial}{m}\right)
-
\sum_i\lambda_i\log\lambda_i
\nonumber\\
&=
\log\frac{m}{D_\partial}
+
S(\rho_A)~.
\end{align*}

The remaining steps are the same as in the proof of 
Proposition~\ref{prop:tensor-network-generalized-entropy}. Using $\log\operatorname{Ind}
E_{\Omega\rightarrow\mathcal M_A}
=
2\log m$ and $K_\Omega=2\log D_\partial$ we obtain
\begin{align*}
S_{\mathrm{gen}}(A)
&=
\log\frac{m}{D_\partial}
+
S(\rho_A)
-
2\log m
+
K_\Omega\\
&=
\log D_A
+
S(\rho_A)~.
\end{align*}

\end{proof}

\section{Type $II_1$ algebras from tensor networks in the large bond dimension limit}
\label{sec:largeD}

In this section, we would like to define an algebra based on an infinite family of random tensor networks with the same geometry but with increasing bond dimension. Properties that are approximately true for the finite dimensional algebras with errors that vanish in an appropriate way in the limit where the bond dimension goes to infinity will be exactly true in the final algebra. 

The use of limiting operator algebras provides a natural framework for
describing infinite tensor networks and large-$N$ holographic systems
\cite{Faulkner:2022ada,Chandrasekaran:2022eqq,Chemissany:2025inf}.
In the present work, we implement this limit using a standard tool in the theory of operator algebras known as the ``ultraproduct'' \cite{Ando:2014ultraproducts}. Related applications
of ultraproduct constructions to large-$N$ limits and
holography have recently been discussed in
\cite{Chen:2025third,Bahiru:2026ultra}.

\subsection{Ultraproduct algebras}

Consider any infinite sequence of Hilbert spaces ${\cal H}_n$ of dimension
$D_n$. Let $M_n$ be the algebra of operators on ${\cal H}_n$.

We would like to define a new algebra by taking the Cartesian product of all the $M_n$. The elements of this correspond to infinite sequences ${\cal O} = ({\cal O}_n)$ with one operator from each $M_n$. Addition, multiplication by a constant, multiplication, and adjoint happen component-wise. We define a norm via
\[
||{\cal O}|| = \sup_n ||{\cal O}_n|| \; ,
\]
where $||{\cal O}_n||$ is the standard operator norm, the infimum of the set of all numbers $\lambda$ for which $|{\cal O}_n v| \le \lambda |v|$ for all $v \in {\cal H}_n$.
We restrict to the bounded operators according to this norm to define the new algebra $\mathcal{M}_\infty$:
\[
\mathcal{M}_\infty = \{(O_n)|\sup_n ||{\cal O}_n|| < \infty\} \; .
\]

\subsubsection*{Ultrafilter limits}

Next, we  define a quotient algebra by identifying elements whose sequences of operators coincide asymptotically. The point of this is that many properties that become true asymptotically for the initial sequence of algebras will be exactly true in this quotient algebra.

The standard tool to express this asymptotic convergence is the idea of an ultrafilter limit.
We begin with some definitions. 

\paragraph{Definition (Filter).}
A \emph{filter} $\mathcal{F}$ on $\mathbb{N}$ is a collection of subsets of $\mathbb{N}$ such that:
\begin{enumerate}
  \item $\varnothing \notin \mathcal{F}$ and $\mathbb{N} \in \mathcal{F}$.
  \item If $A,B \in \mathcal{F}$, then $A \cap B \in \mathcal{F}$.
  \item If $A \in \mathcal{F}$ and $A \subseteq B \subseteq \mathbb{N}$, then $B \in \mathcal{F}$.
\end{enumerate}

\paragraph{Definition (Ultrafilter).}
An \emph{ultrafilter} $\omega$ on $\mathbb{N}$ is a filter with the additional property that
for every subset $A \subseteq \mathbb{N}$, exactly one of the following holds:
\[
  A \in \omega \quad \text{or} \quad \mathbb{N}\setminus A \in \omega~.
\]
Equivalently, an ultrafilter is a filter that is maximal in the sense that no other filter includes it as a proper subset.

There are two types of ultrafilters.
\begin{itemize}
  \item A \emph{principal ultrafilter} at $n_0 \in \mathbb{N}$ is the collection
  \[
    \omega_{n_0} = \{ A \subseteq \mathbb{N} : n_0 \in A \}~.
  \]
  \item A \emph{nonprincipal ultrafilter} is an ultrafilter containing no finite set. 
        The existence of such ultrafilters requires the axiom of choice.
\end{itemize}

\paragraph{Definition (Ultrafilter Limit).}
Let $\omega$ be a nonprincipal ultrafilter on $\mathbb{N}$, and let
$(x_n)_{n \in \mathbb{N}}$ be a bounded sequence of complex numbers.
We say that $L \in \mathbb{C}$ is the \emph{ultrafilter limit} of $(x_n)$ along $\omega$,
and write
\[
  L = \lim_{n \to \omega} x_n~,
\]
if for every neighbourhood $U$ of $L$ in $\mathbb{C}$,
\[
  \{\, n \in \mathbb{N} : x_n \in U \,\} \in \omega~.
\]

We have two important properties of the ultrafilter limit:
\begin{itemize}
  \item If $(x_n)$ converges in the usual sense to $L$, then 
        $\lim_{n \to \omega} x_n = L$.
  \item Every bounded sequence admits an ultrafilter limit.
\end{itemize}
The latter property is crucial and is the key reason that we will consider the ultrafilter limit.

\subsubsection*{The ideal}

Now each matrix algebra has a normalized trace
\[
\tau_n(X_n)=\frac{1}{D_n}\Tr(X_n)
\]
and associated \(2\)-norm
\[
\|X_n\|_{2,n}
=
\bigl(\tau_n(X_n^*X_n)\bigr)^{1/2}.
\]
Define
\[
\mathcal I_\omega^{(2)}
=
\left\{
(X_n)\in\mathcal M_\infty:
\lim_{n\to\omega}\|X_n\|_{2,n}=0
\right\}.
\]
If \(X\in\mathcal I_\omega^{(2)}\) and
\(Y\in\mathcal M_\infty\), then
\[
\lim_{n\to\omega}\|X_nY_n\|_{2,n}
\leq
\|Y\|
\lim_{n\to\omega}\|X_n\|_{2,n}
=0~,
\]
and similarly \(YX\in\mathcal I_\omega^{(2)}\).
The triangle inequality, invariance under adjoint, and norm closure show
that \(\mathcal I_\omega^{(2)}\) is a norm-closed two-sided
\(*\)-ideal.

\subsubsection*{The quotient}

We can form the quotient ${\cal M}_\omega$, also denoted $\prod_\omega (M_n, \tau_n)$ as 
\[
\mathcal{M}_\omega \equiv \mathcal{M}_\infty / {\cal I}^{(2)}_\omega
\]
whose elements are equivalence classes
\[
[X] = \{Y | \lim_{n \to \omega} ||Y_n - X_n||_{2,n} = 0 \} \; .
\]
Addition, multiplication by a constant, multiplication, and adjoint are defined by taking these operations on any representative, i.e.
\begin{eqnarray*}
[X_1] + [X_2] &=& [X_1 + X_2] \cr
[X_1][X_2] &=& [X_1 X_2] \cr
c [X_1] &=& [c X_1] \cr
[X]^* &=& [X^*] 
\end{eqnarray*}
We can check that the right hand sides here do not depend on which representative we choose. The operators in $\mathcal{M}_\omega$ have a 2-norm
\[
||[X]||_{2,n} = \lim_{n \to \omega} ||X_n||_{2,n} \; .
\]
The algebra $\mathcal{M}_\omega$ is a finite von Neumann algebra, and for $D_n \rightarrow_\omega \infty$ is specifically a type $II_1$ factor  \cite{anantharaman2017introduction} with a faithful normal tracial state
\[
\tau_\omega([X]) = \lim_{n \to \omega} \tau_n (X_n) \; .
\]

\subsection{Tensor network application}

We now apply the tracial ultraproduct construction to a sequence of random
tensor networks with fixed geometry and increasing bond dimension. 

Let \(G\) be a fixed finite tensor-network geometry with $b = |\partial G|$. For each \(n\), assign a
Hilbert space of dimension \(d_n\) to every edge of \(G\), choose an independent
random tensor at each vertex,
and denote the resulting tensor network by \(T_n\). We assume that $d_n \to \infty$.
The boundary Hilbert
space is
\[
\cH_n^\partial \cong (\mathbb C^{d_n})^{\otimes b}~,
\]
and we write
\[
\cM_n^\partial = B(\cH_n^\partial)
\]
for the algebra of boundary operators. We equip \(\cM_n^\partial\) with its
normalized trace
\[
\tau_n^\partial(X)
=
\frac{1}{\dim \cH_n^\partial}\Tr(X)
\]
and the associated \(2\)-norm
\[
\|X\|_{2,n}
=
\bigl(\tau_n^\partial(X^*X)\bigr)^{1/2}.
\]
The boundary ultraproduct algebra is
\[
\cM_\omega^\partial
=
\prod_\omega (\cM_n^\partial,\tau_n^\partial)~.
\]

\subsubsection{Extremal regions}

For an extremal region \(A\), let
\[
a_A=|\partial A|~,
\qquad
\mathcal H_n^A
\cong
(\mathbb C^{d_n})^{\otimes a_A}~,
\qquad
D_{A,n}
\equiv
\dim\mathcal H_n^A
=d_n^{a_A}~,
\]
and define
\[
\mathcal N_n^A
=
B(\mathcal H_n^A)~.
\]
We equip \(\mathcal N_n^A\) with its normalized trace \(\tau_n^A\) and
associated \(2\)-norm \(\|\cdot\|_{2,n}\). Since the full graph \(G\)
contains \(A\), extremality implies \(a_A\leq b\).

Let
\[
T_n^A:\mathcal H_n^A\longrightarrow\mathcal H_n^\partial
\]
be the normalized tensor-network map defined as in
\eqref{eq:normalized_TA}. In particular,
\[
\Tr\bigl((T_n^A)^\dagger T_n^A\bigr)
=
D_{A,n}~.
\]
For any map \(S:\mathcal H_n^A\to\mathcal H_n^\partial\), we use the
normalized Hilbert--Schmidt norm
\[
\|S\|_{2,n}
=
\bigl(\tau_n^A(S^\dagger S)\bigr)^{1/2}.
\]

We now state explicitly the large-bond-dimension estimates that will be used
below. These are assumptions on the random tensor ensemble; the leading-order
averaged purity calculation motivates them but does not by itself prove the
uniform high-probability estimates.

\paragraph{Large-bond-dimension assumption.}
For every extremal region \(A\), assume that there is constant
\(C_A<\infty\) and functions
\[
f_A(D)\longrightarrow0~,
\qquad
g_A(D)\longrightarrow0~,
\qquad
\textrm{for} \qquad D\longrightarrow\infty~,
\]
such that, for every \(n\), with probability at least
\(1-g_A(d_n)\),
\begin{align}
\bigl\|(T_n^A)^\dagger T_n^A-\mathbb{1}_{A,n}\bigr\|_{2,n}
&\leq f_A(d_n)~.
\label{eq:largeD-isometry-defect}
\end{align}

Since \(G\) is finite, it has only finitely many extremal regions. Let
\[
g_{\mathrm{all}}(D)
=
\sum_{A\ {\rm extremal}}g_A(D)~.
\]
By the union bound, with probability at least
\(1-g_{\mathrm{all}}(d_n)\), the estimate
\eqref{eq:largeD-isometry-defect} holds for every extremal region
simultaneously.

Choose the sequence \(d_n\) to grow sufficiently rapidly that
\[
\sum_{n=1}^\infty g_{\mathrm{all}}(d_n)<\infty~.
\]
For each \(n\), let \(E_n\) be the event that at least one of the required
estimates fails for at least one extremal region. Then
\[
\mathbb P(E_n)
\leq
g_{\mathrm{all}}(d_n)~,
\qquad
\sum_{n=1}^\infty\mathbb P(E_n)<\infty~.
\]
The first Borel--Cantelli lemma\footnote{The first Borel--Cantelli lemma states that, for a sequence of events
\((E_n)_{n\in\mathbb N}\), if
\(\sum_{n=1}^{\infty}\mathbb P(E_n)<\infty\), then
\(\mathbb P(E_n\ \text{infinitely often})=0\). No independence assumption is
required.} therefore implies
\[
\mathbb P(E_n\ {\rm infinitely\ often})=0~.
\]
Consequently, with probability one, there is an \(n_0\) such that the
required estimates hold for every extremal region and every \(n\geq n_0\).
We henceforth fix a tensor-network realization with this property.

\subsubsection{The regional ultraproduct algebra}

For each \(n\), we can construct an exact isometry $V_n^A$ based on \(T_n^A\) as follows. We note that 
\((T_n^A)^\dagger T_n^A\) gives an invertible map from $(\ker T_n^A)^\perp$ to itself, and that on this domain
\[
\hat{V}_n^A = T_n^A ((T_n^A)^\dagger T_n^A)^{-1/2}
\]
is an isometry, since the definition immediately gives $(\hat{V}_n^A)^\dagger \hat{V}_n^A = \mathbb{1}$. We can extend this to an isometry 
\[
V_n^A:\mathcal H_n^A\longrightarrow\mathcal H_n^\partial~
\]
by mapping $\ker T_n^A$ into the orthogonal complement of the range of $T_n^A$. For any choice of extension, we have
\[
T_n^A
=
V_n^A\bigl((T_n^A)^\dagger T_n^A\bigr)^{1/2}.
\]
The inequality
\(\lvert1-\sqrt{x}\rvert\leq\lvert1-x\rvert\) for \(x\geq0\) then gives
\begin{align*}
\|V_n^A-T_n^A\|_{2,n}
&=
\left\|
\mathbb{1}_{A,n}
-
\bigl((T_n^A)^\dagger T_n^A\bigr)^{1/2}
\right\|_{2,n}
\\
&\leq
\bigl\|
\mathbb{1}_{A,n}
-
(T_n^A)^\dagger T_n^A
\bigr\|_{2,n}
\\
&\leq
f_A(d_n)
\end{align*}
for all sufficiently large \(n\). Thus \(V_n^A\) is an exact isometry that
agrees with the tensor-network map in normalized \(2\)-norm in the
large-bond-dimension limit.

Defining the integer
\[
r_{A,n}
=
\frac{D_{\partial,n}}{D_{A,n}}
=
d_n^{\,b-a_A}~,
\]
we can choose a unitary
\[
U_{A,n}:
\mathcal H_n^A\otimes\mathbb C^{r_{A,n}}
\longrightarrow
\mathcal H_n^\partial
\]
whose restriction to the first copy of \(\mathcal H_n^A\) is \(V_n^A\):
\[
U_{A,n}(\xi\otimes|1\rangle)
=
V_n^A\xi~.
\]
As in \eqref{opmap}, define
\[
\Phi_{A,n}:
\mathcal N_n^A
\longrightarrow
\mathcal M_n^\partial~,
\qquad
\Phi_{A,n}(O)
=
U_{A,n}
\bigl(O\otimes\mathbb{1}_{r_{A,n}}\bigr)
U_{A,n}^\dagger~.
\]
This map is a unital \(*\)-homomorphism and satisfies
\[
\Phi_{A,n}(O)V_n^A
=
V_n^A O~.
\]
It is also trace preserving:
\[
\tau_n^\partial(\Phi_{A,n}(O))
=
\tau_n^A(O)~ .
\]
The explicit form of the map gives
\begin{equation}
\|\Phi_{A,n}(O)\|=\|O\|~,
\label{ID1}
\end{equation}
while trace preservation gives
\begin{equation}
\|\Phi_{A,n}(O)\|_{2,n}
=
\|O\|_{2,n}~.
\label{ID2}
\end{equation}
Together with the estimate above, the intertwining relation also gives
\[
\|\Phi_{A,n}(O)T_n^A-T_n^A O\|_{2,n}
\leq
2\|O\|
\|V_n^A-T_n^A\|_{2,n}
\longrightarrow0
\]
uniformly for \(O\) in the operator-norm unit ball.

Define the regional tracial ultraproduct
\[
\mathcal N_\omega^A
=
\prod_\omega(\mathcal N_n^A,\tau_n^A)~.
\]
The maps \(\Phi_{A,n}\) induce
\[
\Phi_A:
\mathcal N_\omega^A
\longrightarrow
\mathcal M_\omega^\partial
\]
by
\[
\Phi_A([(O_n)])
=
[(\Phi_{A,n}(O_n))]~.
\]
The operator-norm identity (\ref{ID1}) shows that uniformly bounded sequences are mapped
to uniformly bounded sequences, while the \(2\)-norm identity in (\ref{ID2}) shows that the
map is well defined on equivalence classes.

To show that
\[
\mathcal M_\omega(A)
=
\Phi_A(\mathcal N_\omega^A)
\]
is a von Neumann subalgebra of
\(\mathcal M_\omega^\partial\), it is enough to show that
\(\Phi_A\) is a normal unital \( * \)-homomorphism. Unitality and the
\( * \)-homomorphism property ensure that the image is a unital
\( * \)-subalgebra, while normality implies that its image is
closed in the ultraweak topology. We will prove the stronger statement that \(\Phi_A\) is in addition
trace preserving and injective. Injectivity identifies its image with
\(\mathcal N_\omega^A\), allowing us to conclude that the image is a type
\(II_1\) factor whenever \(\mathcal N_\omega^A\) is.

The fact that \(\Phi_A\) is a
unital trace-preserving \(*\)-homomorphism follows directly from the fact that each \(\Phi_{A,n}\) has these properties. Moreover,
\[
\|\Phi_A(X)\|_{2,\omega,\partial}
=
\|X\|_{2,\omega,A}~,
\]
so $\Phi_A(X) = 0 \implies X=0$ and thus \(\Phi_A\) is injective. Since \(\Phi_A\) preserves the faithful normal tracial states, it is
normal. The range of a normal \(*\)-homomorphism between von Neumann
algebras is ultraweakly closed.

We have thus shown that $\mathcal M_\omega(A)
=
\Phi_A(\mathcal N_\omega^A)$ is a von Neumann subalgebra of
\(\mathcal M_\omega^\partial\). If \(a_A>0\) then \(D_{A,n}=d_n^{a_A}\to\infty\), so
\(\mathcal N_\omega^A\), and therefore \(\mathcal M_\omega(A)\), is a type
\(II_1\) factor. 

Different choices of the unitary extensions \(U_{A,n}\) give unitarily conjugate subalgebras. We conclude that, almost surely, every extremal region
with \(a_A>0\) determines a family of unitarily related type
\(II_1\) subfactors of the boundary ultraproduct algebra. 

As discussed above, properties that hold asymptotically at finite bond dimension become exact statements in the ultraproduct algebra $\mathcal M_\omega(A)$. The ultraproduct construction therefore provides a natural framework for describing the infinite bond dimension limit of the tensor network. 

\section{An alternative entropic quantity satisfying monotonicity and strong subadditivity}

In this section, we introduce an alternative formula for the entropy associated with a von Neumann subalgebra. As we will show, this quantity likewise satisfies the monotonicity and strong subadditivity properties, under suitable assumptions on the associated conditional expectations. We also discuss how this definition is related to, and differs from, our previous proposal.

\subsection{Definition and properties}

Let $\mathcal M$ be a von Neumann algebra and let $\omega$ be a faithful
normal state on $\mathcal M$. Suppose that $\mathcal A'\subset\mathcal M$
is a von Neumann subalgebra for which there exists a faithful normal
conditional expectation $E_{\mathcal A'}:\mathcal M\rightarrow\mathcal A'$. We define the {\it commutant entropy}
associated with $\mathcal A$ by
\begin{equation}
S_C(\mathcal A)
:=
S(\omega\|\omega\circ E_{\mathcal A'})~.
\label{SubalgebraEntropy}
\end{equation}
In this section, we will drop the $C$ subscript and refer to this simply as $S(\mathcal A)$.

The state $\omega\circ E_{\mathcal A'}$ represent a coarse-graining of the state $\omega$ to the subset of observables that commute with $\mathcal A$; these are observables that can be known at the same time as any observable in $\mathcal A$, i.e. have no uncertainty relation with any such observable. The relative entropy is a measure of how much information is lost in this coarse-graining process.

The above definition extends naturally beyond unital von Neumann
subalgebras. Indeed, if $\mathcal A\subset B(\mathcal H)$ is any (possibly
non-unital) $*$-subalgebra, then its commutant $\mathcal A'$ is automatically a von Neumann algebra, since $\mathcal A'''=\mathcal A'$.\footnote{Since every element of $\cal A$ commutes with every element of $\cal A'$, we have $\cal A\subset \cal A''$.
Taking commutants reverses inclusions, so $\cal A'''=(\cal A'')' \subset \cal A'$.
On the other hand, every element of $\cal A'$ commutes with every element of $\cal A''=(\cal A')'$, so $\cal A'\subset (\cal A'')'=\cal A'''$. Hence, $\cal A'''=\cal A'$.} In particular, $\mathbb 1\in\mathcal A'$, so whenever there exists a faithful normal conditional expectation $E_{\mathcal A'}:\mathcal M\rightarrow\mathcal A'$,
the entropy (\ref{SubalgebraEntropy}) remains well defined. This observation
allows one to associate entropies not only to von Neumann algebras, but also to
non-unital corner algebras, which arise naturally in the tensor network context.

We first show that the entropy as defined in \eqref{SubalgebraEntropy} is monotone under inclusion
of subalgebras.

\begin{proposition}[Monotonicity]
Let $\mathcal A\subset\mathcal B\subset\mathcal M$ be von Neumann subalgebras, and assume there exist faithful normal conditional expectations
\[
E_{\mathcal A'}:\mathcal M\rightarrow\mathcal A'~,
\qquad
E_{\mathcal B'}:\mathcal M\rightarrow\mathcal B'~,
\]
satisfying the compatibility condition
\[
E_{\mathcal B'}E_{\mathcal A'}
=
E_{\mathcal B'}~.
\]
Then
\begin{equation} \label{eq:Pythagorean}
S(\mathcal B)
=
S(\mathcal A)
+
S(\omega_{\mathcal A}\|\omega_{\mathcal B})~,
\end{equation}
where $\omega_{\mathcal A}
=
\omega\circ E_{\mathcal A'}, \omega_{\mathcal B}
=
\omega\circ E_{\mathcal B'}$. In particular, 
\[
S(\mathcal A)\le S(\mathcal B)~.
\]

\end{proposition}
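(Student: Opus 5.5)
The natural route is the chain rule for relative entropy under conditional expectations, the Petz/Accardi--Cecchini identity: if $E:\mathcal M\to\mathcal N$ is a faithful normal conditional expectation and $\varphi$ is a normal state with $\varphi = \varphi\circ E$, then for any normal state $\omega$
\[
S(\omega\|\varphi) = S(\omega\|\omega\circ E) + S(\omega\circ E\|\varphi)~.
\]
I would apply this with $E = E_{\mathcal A'}$ and $\varphi = \omega_{\mathcal B} = \omega\circ E_{\mathcal B'}$. The left-hand side is then $S(\omega\|\omega_{\mathcal B}) = S(\mathcal B)$ by definition, and the first term on the right is $S(\omega\|\omega\circ E_{\mathcal A'}) = S(\mathcal A)$. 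The last term is $S(\omega\circ E_{\mathcal A'}\|\omega_{\mathcal B}) = S(\omega_{\mathcal A}\|\omega_{\mathcal B})$, which gives (\ref{eq:Pythagorean}). Monotonicity $S(\mathcal A)\le S(\mathcal B)$ then follows from positivity of relative entropy.

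The key hypothesis to check is the invariance $\omega_{\mathcal B}\circ E_{\mathcal A'} = \omega_{\mathcal B}$. This amounts to $E_{\mathcal B'}E_{\mathcal A'} = E_{\mathcal B'}$, which is exactly the assumed compatibility condition. It is also consistent with $\mathcal B'\subset \mathcal A'$, which follows from $\mathcal A\subset\mathcal B$. Faithfulness of $\omega_{\mathcal B}$ follows from faithfulness of $\omega$ and $E_{\mathcal B'}$, so all the relative entropies are well defined (possibly $+\infty$, in which case the identity holds in the extended sense).

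The main step is justifying the chain rule itself, which I would either cite or prove. In finite dimensions (the tensor network setting) this is a direct computation. Write $\rho$, $\rho_{\mathcal A}$, $\sigma$ for the densities of $\omega$, $\omega\circ E_{\mathcal A'}$ and $\omega_{\mathcal B}$ with respect to a trace for which $E_{\mathcal A'}$ is trace preserving. Then $\rho_{\mathcal A} = E_{\mathcal A'}(\rho)\in\mathcal A'$, and since $\sigma$ is $E_{\mathcal A'}$-invariant we also have $\sigma\in\mathcal A'$. Hence $\log\rho_{\mathcal A}-\log\sigma\in\mathcal A'$, and the bimodule property of $E_{\mathcal A'}$ gives
\[
\mathrm{Tr}\big(\rho(\log\rho_{\mathcal A}-\log\sigma)\big) = \mathrm{Tr}\big(\rho_{\mathcal A}(\log\rho_{\mathcal A}-\log\sigma)\big)~.
\]
Adding $\mathrm{Tr}\,\rho(\log\rho-\log\rho_{\mathcal A})$ to both sides yields the identity.

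The genuinely delicate point is that in a general von Neumann algebra there is no trace to use. There I would use the Connes cocycle and Araki relative entropy formulation. The needed facts are that $E_{\mathcal A'}$-invariant states have Connes cocycles relative to each other that lie in $\mathcal A'$, and that the relative modular operator factorizes accordingly. This is the standard Petz result ("sufficiency"/Accardi--Cecchini), and I would cite it rather than reprove it.
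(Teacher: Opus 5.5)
Your proposal is correct and follows the paper's proof. Both apply Petz's Pythagorean identity $S(\omega\|\sigma)=S(\omega\|\omega\circ E)+S(\omega\circ E\|\sigma)$ with $E=E_{\mathcal A'}$ and $\sigma=\omega_{\mathcal B}$, use the compatibility condition $E_{\mathcal B'}E_{\mathcal A'}=E_{\mathcal B'}$ to get $\omega_{\mathcal B}\circ E_{\mathcal A'}=\omega_{\mathcal B}$, and finish with positivity of relative entropy. One caveat: your finite-dimensional sketch assumes $E_{\mathcal A'}$ preserves a trace, which is not automatic (e.g.\ $E(x)=\phi(x)\mathbb{1}$ with non-tracial $\phi$), but the Connes-cocycle fact you cite for the general case gives $\log\rho_{\mathcal A}-\log\sigma\in\mathcal A'$ directly and covers it.
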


\begin{proof}
Set
\[
E:=E_{\mathcal A'}~,\qquad
\omega_{\mathcal A}:=\omega\circ E~,
\qquad
\omega_{\mathcal B}:=\omega\circ E_{\mathcal B'}~.
\]
The compatibility condition implies
\[
\omega_{\mathcal B}\circ E
=
\omega\circ E_{\mathcal B'}E_{\mathcal A'}
=
\omega\circ E_{\mathcal B'}
=
\omega_{\mathcal B}~.
\]
Thus $\omega_{\mathcal B}$ is invariant under $E$.

By the Pythagorean identity for Araki relative entropy with respect to a
faithful normal conditional expectation \cite{petz1991certain}, for every faithful normal state
$\sigma$ satisfying $\sigma\circ E=\sigma$ one has
\[
S(\omega\|\sigma)
=
S(\omega\|\omega\circ E)
+
S(\omega\circ E\|\sigma)~.
\]
Applying this with $\sigma=\omega_{\mathcal B}$ gives
\[
S(\omega\|\omega_{\mathcal B})
=
S(\omega\|\omega_{\mathcal A})
+
S(\omega_{\mathcal A}\|\omega_{\mathcal B})~.
\]
Equivalently,
\[
S(\mathcal B)
=
S(\mathcal A)
+
S(\omega_{\mathcal A}\|\omega_{\mathcal B})~.
\]
Since relative entropy is nonnegative,
\[
S(\mathcal A)\le S(\mathcal B)~.
\]
\end{proof}

Our second result is a strong subadditivity inequality. It holds whenever
the conditional expectations associated with the commutant algebras form a
commuting square:

\begin{proposition}[Strong subadditivity]
Let $\mathcal A,\mathcal B\subset\mathcal M$
be von Neumann subalgebras and define
\[
\mathcal C:=\mathcal A\cap\mathcal B~,
\qquad
\mathcal D:=\mathcal A\vee\mathcal B~.
\]
Assume there exist faithful normal conditional expectations $E_{\mathcal A'},
E_{\mathcal B'},
E_{\mathcal C'},
E_{\mathcal D'}$.
and that
\[
E_{\mathcal A'}E_{\mathcal B'}
=
E_{\mathcal B'}E_{\mathcal A'}
=
E_{\mathcal D'}~.
\]
Assume further that 
\[
E_{\mathcal A'} E_{\mathcal C'} = E_{\mathcal A'} \qquad E_{\mathcal B'} E_{\mathcal C'} = E_{\mathcal B'} \; .
\]
This is automatic if $\mathcal{M}$ has a faithful normal tracial state $\tau$ and all conditional expectations are $\tau$-preserving.
Then
\[
S(\mathcal A)+S(\mathcal B)
\ge
S(\mathcal C)+S(\mathcal D)~.
\]
\end{proposition}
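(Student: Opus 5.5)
The plan is to reduce the inequality to the Pythagorean identity from the Monotonicity proposition, used twice, plus one application of the data-processing inequality for Araki relative entropy. Write $\omega_{\mathcal X}:=\omega\circ E_{\mathcal X'}$ for $\mathcal X=\mathcal A,\mathcal B,\mathcal C,\mathcal D$. The relevant inclusions are $\mathcal C\subset\mathcal A,\mathcal B\subset\mathcal D$. Taking commutants reverses them, so $\mathcal D'\subset\mathcal A',\mathcal B'\subset\mathcal C'$.

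\emph{Step 1 (two Pythagorean decompositions).} First I apply the Monotonicity proposition to the pair $\mathcal A\subset\mathcal D$. Its compatibility condition is $E_{\mathcal D'}E_{\mathcal A'}=E_{\mathcal D'}$, and this follows from the commuting-square hypothesis:
\[
E_{\mathcal D'}E_{\mathcal A'}=E_{\mathcal B'}E_{\mathcal A'}E_{\mathcal A'}=E_{\mathcal B'}E_{\mathcal A'}=E_{\mathcal D'}~,
\]
using idempotence of $E_{\mathcal A'}$. This gives $S(\mathcal D)=S(\mathcal A)+S(\omega_{\mathcal A}\|\omega_{\mathcal D})$.

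Next I apply the same proposition to $\mathcal C\subset\mathcal B$. Its compatibility condition $E_{\mathcal B'}E_{\mathcal C'}=E_{\mathcal B'}$ is exactly one of the stated assumptions. This gives $S(\mathcal B)=S(\mathcal C)+S(\omega_{\mathcal C}\|\omega_{\mathcal B})$.

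Subtracting the two identities, the claim becomes
\[
S(\mathcal A)+S(\mathcal B)-S(\mathcal C)-S(\mathcal D)=S(\omega_{\mathcal C}\|\omega_{\mathcal B})-S(\omega_{\mathcal A}\|\omega_{\mathcal D})\ \ge 0~.
\]

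\emph{Step 2 (data processing).} I will exhibit a single quantum channel that maps the pair $(\omega_{\mathcal C},\omega_{\mathcal B})$ to the pair $(\omega_{\mathcal A},\omega_{\mathcal D})$. The natural candidate is the dual action of $E_{\mathcal A'}$ on states, $\sigma\mapsto\sigma\circ E_{\mathcal A'}$. Since $E_{\mathcal A'}$ is a normal unital completely positive map, Uhlmann monotonicity of Araki relative entropy applies to it.

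For the first state, $E_{\mathcal A'}$ takes values in $\mathcal A'\subset\mathcal C'$, and $E_{\mathcal C'}$ acts as the identity on $\mathcal C'$. Hence $E_{\mathcal C'}E_{\mathcal A'}=E_{\mathcal A'}$, and so $\omega_{\mathcal C}\circ E_{\mathcal A'}=\omega\circ E_{\mathcal A'}=\omega_{\mathcal A}$.

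For the second state, the commuting square gives $\omega_{\mathcal B}\circ E_{\mathcal A'}=\omega\circ E_{\mathcal B'}E_{\mathcal A'}=\omega\circ E_{\mathcal D'}=\omega_{\mathcal D}$.

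Monotonicity of relative entropy under this channel then yields $S(\omega_{\mathcal A}\|\omega_{\mathcal D})\le S(\omega_{\mathcal C}\|\omega_{\mathcal B})$, which completes the argument.

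\emph{Main obstacle.} The main point to verify is that Step 2 uses the composition in the right order. The hypothesis $E_{\mathcal A'}E_{\mathcal C'}=E_{\mathcal A'}$ is not what is needed there; what is needed is $E_{\mathcal C'}E_{\mathcal A'}=E_{\mathcal A'}$, which holds automatically from the range inclusion. One should also check that the Pythagorean identity of \cite{petz1991certain} applies. This requires the states to be faithful, so that the relative entropies are well defined, possibly infinite. Faithfulness holds because $\omega$ is faithful and the conditional expectations are faithful. If some relative entropies are infinite, the inequality should be read in the extended sense. Finally, the remark that the compatibility conditions are automatic in the tracial setting follows because $\tau$-preserving conditional expectations are orthogonal projections in $L^2(\mathcal M,\tau)$, so nested ranges imply that they compose to the smaller one.
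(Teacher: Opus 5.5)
Your proof is correct and is essentially the paper's argument: you apply data processing under $E_{\mathcal A'}$ to obtain $S(\omega_{\mathcal C}\|\omega_{\mathcal B})\ge S(\omega_{\mathcal A}\|\omega_{\mathcal D})$, and you combine it with the Pythagorean identity for the inclusions $\mathcal C\subset\mathcal B$ and $\mathcal A\subset\mathcal D$. You also make explicit two points the paper leaves implicit: that $E_{\mathcal D'}E_{\mathcal A'}=E_{\mathcal D'}$ follows from the commuting square, and that of the two extra hypotheses only $E_{\mathcal B'}E_{\mathcal C'}=E_{\mathcal B'}$ is actually used, since $E_{\mathcal C'}E_{\mathcal A'}=E_{\mathcal A'}$ is automatic from $\mathcal A'\subset\mathcal C'$.
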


\begin{proof}
Introduce the coarse-grained states
\[
\omega_{\mathcal A}
=
\omega\circ E_{\mathcal A'}~,
\qquad
\omega_{\mathcal B}
=
\omega\circ E_{\mathcal B'}~,
\qquad
\omega_{\mathcal C}
=
\omega\circ E_{\mathcal C'}~,
\qquad
\omega_{\mathcal D}
=
\omega\circ E_{\mathcal D'}~.
\]
We note that 
\[
(\mathcal A' \vee \mathcal B')' = \mathcal A'' \cap \mathcal B'' = \mathcal A \cap \mathcal B = \mathcal C \; ,
\]
so that $\mathcal{C}' = \mathcal{A}' \vee \mathcal{B}'$. Thus, $\mathcal{A}', \mathcal{B}' \subset \mathcal{C}'$. By monotonicity of relative entropy under the conditional expectation
\(E_{\mathcal A'}\) we have:
\begin{equation}
\label{mon}
S(\omega_{\mathcal C}\|\omega_{\mathcal B})
\ge
S(\omega_{\mathcal C}\circ E_{\mathcal A'}
\,\|\,
\omega_{\mathcal B}\circ E_{\mathcal A'})~.
\end{equation}
Since $\mathcal{A}' \subset \mathcal{C}'$, we have 
\[
\omega_{\mathcal C} \circ E_{\mathcal A'} = \omega \circ E_{\mathcal{C}'} \circ E_{\mathcal A'} = \omega \circ  E_{\mathcal A'} = \omega_{\mathcal A}~.
\]
Using the commuting square relation $E_{\mathcal B'} E_{\mathcal A'}
=
E_{\mathcal D'}$ and 
the right-hand side of (\ref{mon}) becomes $S(\omega_{\mathcal A}\|\omega_{\mathcal D})$. Hence,
\[
S(\omega_{\mathcal C}\|\omega_{\mathcal B})
\ge
S(\omega_{\mathcal A}\|\omega_{\mathcal D})~.
\]

Applying the Pythagorean identity \eqref{eq:Pythagorean} to the inclusions
\[
\mathcal B'\subset \mathcal C'~,
\qquad
\mathcal D'\subset\mathcal A'~,
\]
gives
\[
S(\omega_{\mathcal C}\|\omega_{\mathcal B})
=
S(\mathcal B)-S(\mathcal C)~, \quad S(\omega_{\mathcal A}\|\omega_{\mathcal D})
=
S(\mathcal D)-S(\mathcal A)~.
\]
Inserting these into our previous inequality then gives
\[
S(\mathcal A)+S(\mathcal B)
\ge
S(\mathcal D)
+
S(\mathcal C)~,
\]
as desired.
\end{proof}

\subsection{Relation to the earlier entropy formula}

The entropy (\ref{SubalgebraEntropy}) should be contrasted with the entropy formula
\begin{equation}
\widetilde{S}(\mathcal A) := S(\omega|\Acal)
-
\log\Ind(E_{\Omega \to\mathcal A})
+
K_{\Omega}
\label{SubalgebraEntropy2}
\end{equation}
which was introduced in \cite{Sahu:2025upe} and analyzed in the tensor network setting in section \ref{subsec:generalized-entropy}. While both are defined in terms of relative entropy and conditional expectations, they are associated with different coarse-graining procedures. The previous definition \eqref{SubalgebraEntropy2} coarse-grains to the subalgebra $\mathcal A$ and corrects by the index, whereas (\ref{SubalgebraEntropy}) coarse-grains to its commutant $\mathcal A'$. Consequently, the two entropies are different a priori. In this section, we will identify under what conditions the two definitions coincide.

Let us evaluate both formulas in the tensor network setting, where we have finite-dimensional factor algebras. Given an extremal region $A$ and corresponding boundary algebra $\mathcal{M}_A$ with mixed input state $\rho_{A}$, it was shown in Proposition \ref{prop:tensor-network-generalized-entropy-mixed} that choosing $K_{\Omega}=2\log D_{\partial}$ in \eqref{SubalgebraEntropy2} gives:
\begin{equation} \label{eq:Stilde}
\widetilde{S}(\mathcal{M}_A) = \log D_A + S(\rho_A)~.
\end{equation}

We can also evaluate the alternative entropy formula
\eqref{SubalgebraEntropy} directly for the mixed tensor network state. The
orthogonal isometries \(T^{(a)}\) define a unitary identification
\[
\mathcal H_\partial
\simeq
\mathcal H_A\otimes\mathbb C^m~,
\qquad
T^{(a)}|\phi\rangle
\longleftrightarrow
|\phi\rangle\otimes|a\rangle~,
\]
where $m=D_\partial/D_A$. In this representation, the algebra associated with the region \(A\) and
its commutant take the standard tensor-product form
\[
\mathcal M_A
=
B(\mathcal H_A)\otimes 1_m~,
\qquad
\mathcal M_A'
=
1_A\otimes B(\mathbb C^m)~.
\]
Moreover, choosing \(T_A=T^{(1)}\), the mixed boundary state
\[
\rho_\Psi=T_A\rho_AT_A^\dagger
\]
is represented by\footnote{Although we assumed faithfulness above \eqref{SubalgebraEntropy2} for convenience, it is not required for the definition of Araki relative entropy. The definition extends naturally to non-faithful normal states by restricting the relative modular operator to the appropriate support subspaces. In particular, the relative entropy is finite only if the support of the first state is contained in that of the second. In finite dimensions, this reduces to the usual relative entropy formula $S(\rho\|\sigma)=\Tr(\rho(\log\rho - \log\sigma))$ for density matrices $\rho, \sigma$.}
\[
\rho_\Psi
=
\rho_A\otimes |1\rangle\langle1|~.
\]

The trace-preserving conditional expectation onto the commutant is
\[
E_{\mathcal M_A'}(X)
=
1_A\otimes\tr_A(X)~,
\]
where $\tr_A=\frac{1}{D_A}\Tr_A$ is the normalized trace on \(B(\mathcal H_A)\). Consequently,
\[
(\omega\circ E_{\mathcal M_A'})(X)
=
\Tr\!\left[
\left(
\frac{1_A}{D_A}\otimes|1\rangle\langle1|
\right)X
\right],
\]
so that the coarse-grained state is represented by
\[
\frac{1_A}{D_A}\otimes|1\rangle\langle1|~.
\]

By a direct computation, we obtain
\begin{align*}
S(\omega\|\omega\circ E_{\mathcal M_A'})
&=
S\!\left(
\rho_A\otimes|1\rangle\langle1|
\,\middle\|\,
\frac{1_A}{D_A}\otimes|1\rangle\langle1|
\right) \\
&=
\Tr(\rho_A\log\rho_A)
-
\Tr\!\left[
\rho_A
\log\!\left(\frac{1_A}{D_A}\right)
\right] \\
&=
\log D_A-S(\rho_A)~.
\end{align*}
Thus, for the mixed tensor network state,
\begin{equation}
S(\mathcal M_A)
=
S(\omega\|\omega\circ E_{\mathcal M_A'})
=
\log D_A-S(\rho_A)~.
\end{equation}
Comparing this with \eqref{eq:Stilde} we find
\[
\widetilde S(\mathcal M_A)-S(\mathcal M_A)
=
2S(\rho_A)~.
\]
The two entropy prescriptions therefore coincide whenever the input state
\(\rho_A\) is pure, while they differ by twice the bulk entropy for a general mixed input state.

\subsection*{Extending the boundary algebra}

It is important to note that the value of $S(\mathcal{A})$ in \eqref{SubalgebraEntropy} depends on the choice of ambient algebra $\mathcal{M}$ in which the commutant $\mathcal{A}'$ is taken. In comparing to our previous entropy proposal, we choose $\mathcal{M}$ to be equal to the boundary algebra $\Omega = B(\mathcal{H}_{\partial})$. In principle, however, different choices of ambient algebra are possible.  Another natural choice is to introduce an additional copy of the boundary system $\overline{\mathcal{H}}_{\partial}$ that acts as a purifying system for states on $\mathcal{H}_{\partial}$. The Hilbert space $\mathcal{H}_{\partial}\otimes \overline{\mathcal{H}}_{\partial}$ containing two copies of the boundary system is exactly the GNS representation of the boundary algebra $\Omega$ (for a faithful state) on which the algebra $\Omega$ is represented by $ B(\mathcal{H}_{\partial})\otimes \mathbb{1}$. Let   $\mathcal{M}=B(\mathcal{H}_{\partial}\otimes \overline{\mathcal{H}}_{\partial})$ be the algebra associated to the full system. Besides the isometry $T_A: \mathcal{H}_A\to \mathcal{H}_{\partial}$, we now have a second isometry $\overline{T}_A: \overline{\mathcal{H}}_A\to \overline{\mathcal{H}}_{\partial}$ that is constructed from a copy of the tensor network.

Given a mixed state $\rho_{A}$ on $\mathcal{H}_A$ we can purify it to a state $|\phi_A\rangle$ on $\mathcal{H}_A\otimes\overline{\mathcal{H}}_A$, and use the isometries $T_A$ and $\overline{T}_A$ to map it into two copies of the boundary Hilbert space:
\[
\rho_{\Phi}=T_{A}\otimes \overline{T}_A |\phi_A\rangle \langle \phi_A|T_{A}^{\dagger}\otimes \overline{T}_A^{\dagger}~.
\]
Using the identification 
\[
\mathcal H_\partial \otimes \overline{\mathcal H}_\partial 
\simeq 
\mathcal H_A\otimes \overline{\mathcal H}_A\otimes \mathbb C^m\otimes \mathbb C^m~,
\qquad
T^{(a)}\otimes \overline{T}^{(b)}|\phi\rangle
\longleftrightarrow
|\phi\rangle\otimes|a\rangle \otimes |b\rangle~,
\]
we can identify the above state with 
$$
\rho_{\Phi} \simeq|\phi_{A}\rangle \langle \phi_{A}|\otimes |1\rangle \langle 1| \otimes |\overline{1}\rangle \langle \overline{1}|~.
$$
Taking the commutant within the ambient algebra $\mathcal{M}$, we have
\[
\mathcal M_A
=
B(\mathcal H_A)\otimes 1_{\overline{A}}\otimes 1_m \otimes 1_{\overline{m}}~,
\qquad
\mathcal M_A'
=
1_A\otimes B(\overline{\mathcal H}_A) \otimes  B(\mathbb C^m)\otimes B(\mathbb C^m)~.
\]
Note that the density matrix associated to the coarse-grained state $\omega\circ E_{\mathcal{M}\to \mathcal M_A'}$ is now given by:
$$
\frac{1_A}{D_A}\otimes \overline{\rho}_{A}\otimes |1\rangle \langle 1| \otimes |\overline{1}\rangle \langle \overline{1}|~,
$$
where $\overline{\rho}_{A} = \Tr_{\mathcal{H}_A}|\phi_{A}\rangle \langle \phi_{A}|$. Therefore,
\begin{align*}
S(\omega\|\omega\circ E_{\mathcal M \to \mathcal M_A'})
&=
S\!\left(
\rho_{\Phi}
\,\middle\|\,
\frac{1_A}{D_A}\otimes \overline{\rho}_{A}\otimes |1\rangle \langle 1| \otimes |\overline{1}\rangle \langle \overline{1}|
\right) \\
&=
\Tr\left[|\phi_{A}\rangle \langle \phi_{A}|\log|\phi_{A}\rangle \langle \phi_{A}|\right]
-
\Tr\!\left[
|\phi_{A}\rangle \langle \phi_{A}|
\log\!\left(\frac{1_A}{D_A}\otimes \overline{\rho}_{A}\right)
\right] \\
&=
\log D_A+S(\overline{\rho}_A) = \log D_A+S(\rho_A)~,
\end{align*}
where at the last equality we used that $S(\rho_A)=S(\overline{\rho}_A)$, since the state $|\phi_A\rangle$ on the full system is pure. We conclude that the entropy defined in \eqref{SubalgebraEntropy} reproduces our generalized entropy proposal:
\[
S(\mathcal M_A)= \widetilde S(\mathcal M_A)~,
\]
when the commutant $\mathcal{M}_A'$ is defined with respect to the doubled boundary system. Note that the entropy on the right-hand side is defined with respect to the state $\omega_{\Omega}$ on $\Omega$, while the left-hand side uses  its purification $\omega_{\Phi}$ on $\mathcal{M}$.

We have thus established the equivalence of the two entropy proposals for finite-dimensional factors in the tensor network framework, provided that we work in the GNS representation of the boundary algebra $\Omega$. In Appendix~\ref{app:entropy_equivalence}, we investigate the extent to which this relation extends to more general von Neumann algebras.

\subsection{Entropies for corner algebras}

In order to compare with our earlier entropy proposal, the calculation in the previous subsection assumed that $\mathcal A$ was a factor. However, one of the benefits of the alternative proposal \eqref{SubalgebraEntropy} is that it holds more generally, e.g. in the case where $\mathcal{A}$ is a corner algebra defined in section \ref{sec:algebras_extremal_regions}. We now
consider the situation in which
\[
\mathcal C_A=P_A B(\mathcal H_\partial)P_A~,
\]
is the corner algebra associated with the projector $P_A$. This subalgebra does \emph{not} generally contain the identity operator in $B(\mathcal{H}_\partial)$. Since
$\mathcal C_A$ is non-unital, its commutant $\mathcal C_A'$ has a non-trivial center. Writing
\[
\mathcal H_\partial
=
P_A\mathcal H_\partial
\oplus
(1-P_A)\mathcal H_\partial~,
\]
and passing to the GNS representation $\mathcal M
=
B(\mathcal H_\partial\otimes\overline{\mathcal H}_\partial)$, the corresponding commutant within $\mathcal{M}$ is
\[
\mathcal C_A'
=
\left(
\mathbb C P_A
\oplus
B((1-P_A)\mathcal H_\partial)
\right)
\otimes
B(\overline{\mathcal H}_\partial)~,
\]
with center $Z(\mathcal{C}_A')=\left(\mathbb C P_A
\oplus
\mathbb{C}(1-P_A)\right)\otimes 1$.

The trace-preserving conditional expectation onto $\mathcal C_A'$ acts by
depolarizing the $P_A$-sector while leaving the orthogonal complement
unchanged:
\begin{equation}
E_{\mathcal M\to\mathcal C_A'}(X)
=
\frac{P_A}{D_A}
\otimes
\Tr_{P_A\mathcal H_\partial}
\!\left[
(P_A\otimes1)
X
(P_A\otimes1)
\right]
+
((1-P_A)\otimes1)
X
((1-P_A)\otimes1)~.
\label{CornerExpectation}
\end{equation}

For the tensor-network state
\begin{equation} \label{eq:tensor_network_state}
\rho_\Phi
=
(T_A\otimes\overline T_A)
|\phi_A\rangle\langle\phi_A|
(T_A^\dagger\otimes\overline T_A^\dagger)~,
\end{equation}
the support of the first boundary factor lies entirely inside the image of
$P_A$. Equivalently,
\[
(P_A\otimes1)\rho_\Phi(P_A\otimes1)=\rho_\Phi~,
\qquad
((1-P_A)\otimes1)\rho_\Phi=0~.
\]
Hence, the second term in \eqref{CornerExpectation} vanishes, and the
conditional expectation simply depolarizes the degrees of freedom inside the
corner algebra:
\[
\omega\circ E_{\mathcal M\to\mathcal C_A'}
\quad\leftrightarrow\quad
\frac{P_A}{D_A}
\otimes
\Tr_{P_A\mathcal H_\partial}(\rho_\Phi)= \frac{1_A}{D_A}
\otimes
\overline\rho_A
\otimes
|1\rangle\langle1|
\otimes
|\overline1\rangle\langle\overline1|~.
\]

Since $\rho_\Phi$ is pure, $S(\rho_\Phi)=0$, and the relative entropy reduces to the entropy of the coarse-grained state:
\begin{align}
S(\omega\|\omega\circ E_{\mathcal M\to\mathcal C_A'})
&=
S\!\left(\rho_{\Phi}\,\middle\|\,
\frac{1_A}{D_A}
\otimes
\overline\rho_A
\otimes
|1\rangle\langle1|
\otimes
|\overline1\rangle\langle\overline1|
\right)
\nonumber\\
&=
-\Tr\left[|\phi_A\rangle \langle \phi_A|\left(-\log D_A (1_A\otimes 1) + 1\otimes \log \overline{\rho}_A \right) \right]
\nonumber\\
&=
\log D_A
+
S(\rho_A)~,
\end{align}
where we used that
$|\phi_A\rangle$ is a purification of $\rho_A$, implying
$S(\rho_A)=S(\overline\rho_A)$ as in the previous subsection. 

We conclude that the entropy formula for the corner algebra agrees with the one for the factor algebra on the particular tensor network state in \eqref{eq:tensor_network_state}. If we had considered a more general input state (e.g., with support on the orthogonal complement $(1-P_A)\mathcal{H}_{\partial}$), we would have generically found differing results.

\section{Discussion}

In this paper, we have explored our suggested connection \cite{Sahu:2025upe} between generalized entanglement wedges and subalgebras for the semiclassical limit of a quantum gravity theory by making use of the random tensor network toy model for holographic quantum gravity. We found that this correspondence is sharp for random tensor networks in the limit of large bond dimension or more generally for isometric tensor networks with isometries associated to extremal regions. An interesting aspect of the connection is that in order to assign a von Neumann factor subalgebra to an extremal region, it is necessary to make a choice from a unitary family. This is directly related to the freedom of making a choice of operator map in a quantum code \cite{FactorCode}. 

For several of the suggested properties of the extremal region/subalgebra map we found that by appropriate choices of the associated boundary von Neumann algebras (and sometimes for almost all choices), it is possible to realize these properties in the tensor network situation. However, in some cases, this realization required a restriction on the choice of regions. For example, for extremal regions $A$ and $B$, we were able to show that boundary algebras associated with $A \cap B$, $A$, $B$, and $A \vee B$ can be chosen with the same inclusion structure as the regions when $A \vee B = A \cup B$ (i.e. when $A \cup B$ is extremal) but we were not able to show this more generally. Relatedly, we were able to show that the strong subadditivity property of extremal regions follows from algebraic strong subadditivity in this case with $A \vee B = A \cup B$, but were not able to show this more generally. 

Our results in the tensor network setting can be used to refine the proposed GW/subalgebra connection for general spacetimes. For example, based on the tensor network results, it appears natural to suggest that for algebras associated with GW regions 
$\mathcal{A}_{W_1}$ and $\mathcal{A}_{W_2}$, the algebras can be chosen to commute if and only if $W_1 \cap W_2$ is empty and $W_1 \cup W_2$ is a GW. 

In the present work, we have focused on random tensor networks without bulk legs. These model spacetime situations where the generalized gravitational entropy is dominated by the area term. It would be interesting to consider the more general situation with bulk legs. In this case, it is natural to consider the more general situation where the bulk state (and thus also the boundary state) is mixed. Then the boundary mixed state can be represented by the density operator $T_{Ii} \rho^i {}_{j} (T^\dagger)^{j J}$ where $i$ and $j$ represent the bulk legs of the tensor network, $I$ and $J$ represent the boundary legs. For an interior region $A$ of the tensor network, we can write $T_{Ii} = (T_A)_{i_A a} (T_{\bar{A}})^{a} {}_{ I i_{\bar{A}}}$ where we have divided $i = (i_A i_{\bar{A}})$ and $a$ represent indices associated with the boundary legs for the region. In this case, removing the tensors $(T_A)_{i_A a}$ associated with a region $A$ leaves a map
\[
(T_{\bar{A}})^{a} {}_{ I i_{\bar{A}}} \rho^{i_A i_{\bar{A}}} {}_j (T^\dagger)^{j J}
\]
from the Hilbert space associated with indices $(a, i_A)$ (i.e. the boundary and bulk legs for the region) to the doubled Hilbert space $(I,J)$ associated with the canonical purification of the region. This appeared in the work \cite{Kaya2025HollowGrams} that we discuss below. It would be interesting to extend our investigations from this paper to this context. 

We have furthermore used the ultraproduct construction to associate a limiting von Neumann algebra to a sequence of finite-dimensional tensor network algebras. While the tensor network algebras at finite bond dimension are finite-dimensional matrix algebras, their tracial ultraproduct is a type $II_1$ factor. Relations that hold only asymptotically at finite bond dimension, such as approximate inclusion or commutation, can become exact relations in the ultraproduct algebra. It would be interesting to study the inclusion structure of the resulting limiting algebra in more detail. For example, one could try to find a natural sequence of finite-index inclusions with a uniform bound on the index to obtain a finite-index inclusion of the corresponding ultraproduct algebras. Understanding the structure of these inclusions may provide a useful way to characterize how geometric and entanglement properties of the tensor network are encoded in the operator algebras at large bond dimension.

\paragraph{Comments on relation to recent work}

Our construction is closely related to the hollow-grams random tensor network (RTN) approach of \cite{Kaya2025HollowGrams}. Both provide tensor network realisations of the Bousso--Penington proposal for generalized wedges.\footnote{This tensor network picture has also been used in \cite{Bozanic:2026ios} to model observers in quantum gravity who occupy a spacetime subregion. Whether our wedge algebras may be interpreted as observer algebras in the fundamental description is an interesting direction for future work.} In \cite{Kaya2025HollowGrams}, a bulk region is hollowed out of an RTN with bulk legs. This yields a state that depends on the region. The entropy of the legs associated with the region is then computed using the replica trick, and the dominant permutation saddle selects the generalized wedge. Our networks, by contrast, have no bulk legs. We identify generalized wedges directly from extremal cuts. Each wedge is associated with a boundary algebra, and its generalized entropy is an algebraic entropy \cref{eq:SgenA,Sgen} that depends on the tensor network state.

This raises a natural question. What algebraic description should be given to an arbitrary bulk region, and how should its information content be computed? We do not associate algebras with arbitrary regions. The tensor network nevertheless assigns a set of operators to any region, through the map built from the vertices exterior to the region (see \cref{fig:TNremove}). This map is not an isometry, as the map proceeds through a minimal cut, namely the generalized wedge of the region. The image of the operators on the region is therefore a linear subspace of boundary operators rather than an algebra. This subspace spans the algebra of the generalized wedge of the region. Its information content can be obtained in our prescription by considering the algebra of the generalised wedge and computing the entropy of that algebra with our formulae \cref{eq:SgenA,Sgen}. The prescription therefore returns the area of the minimal cut enclosing the region.

In \cite{Kaya2025HollowGrams}, the authors propose that strong subadditivity in the hollowed RTN construction holds for regions $a, b, c$ that satisfy additional homology constraints on their boundaries. Under these constraints, the relevant entropies can be viewed as subregion entropies of a single state $\ket{\psi_{abc}}$, created by hollowing out all three regions. The restricted inequality reduces to the one proposed by Bousso and Penington \cite{Bousso:2022hlz} if the regions satisfy an independence condition, namely that no region enters the entanglement wedge of the union of the other two.

In our algebraic framework, strong subadditivity is formulated directly for the boundary algebras associated with generalized wedges. All of these algebras act on a single boundary Hilbert space in a single state. As discussed in \cref{sec:CommutingSquare}, the algebraic SSA inequality holds when the corresponding algebras form a non-degenerate commuting square. In our RTN construction, this occurs when the union of the generalized wedges $A$ and $B$ is itself strongly extremal. For three regions $a$, $b$, $c$, we take $A = E(ac)$ and $B = E(bc)$. It is possible to construct examples in which $A \cup B$ is strongly extremal but $a, b, c$ fail the independence condition. The unrestricted inequality therefore holds in a strictly larger class of configurations than the one singled out by independence criteria of \cite{Kaya2025HollowGrams}. In upcoming work \cite{GW3D}, one of us shows that in vacuum AdS$_3$ the unrestricted strong subadditivity of generalized wedges follows from the strong subadditivity of von Neumann entropy in the dual CFT, giving further evidence for the validity of the unrestricted version of strong subadditivity.

The authors of \cite{Kaya2025HollowGrams} further suggest that the validity of SSA for unrestricted Bousso--Penington wedges is a special feature of semiclassical holographic states. Our algebras are fixed by the network and do not depend on the boundary state in which they are evaluated. Moreover, our SSA result holds at the algebraic level. When $A \cup B$ is strongly extremal, the inequality for the algebraic entropies therefore holds for any state on the boundary Hilbert space, not only for holographic states.

Recent works assign a pure state directly to a gravitating region, prepared by a gravitational path integral with fixed data at the edge of the region \cite{Bousso:2026btw, Bousso:2026txw, Wei:2026syv}. The entropies of subregions of this state are computed by minimal surface prescriptions, which in \cite{Bousso:2026btw} agree with the Bousso--Penington prescription. In our proposal \cite{Sahu:2025upe}, if the algebra of a GW is semifinite, the GW is associated with a density matrix obtained by restricting the global state to the algebra. A natural possibility is then that the pure states of \cite{Bousso:2026btw, Bousso:2026txw, Wei:2026syv} purify these density matrices, with additional degrees of freedom associated with the GW as the purifying system. This intuition is realised in random tensor networks, where the hollowed state of \cite{Kaya:2025vof} purifies the reduced state, obtained by restricting the tensor network state to the corner algebra. In gravity, these additional degrees of freedom may be edge modes on the boundary of the GW, as in the path integral over embeddings of \cite{Balasubramanian:2023dpj}. Gauge and diffeomorphism invariant tensor networks, in which the area operator is related to the entanglement of edge modes, provide a natural setting to test this idea \cite{Dong:2023, Akers_2024, Qi_2022, Balasubramanian:2025, Balasubramanian:2026vlr}.

\section*{Acknowledgements}

We would like to thank Marc Klinger, Nima Lashkari and Antony Speranza for valuable discussions. This work is supported in part by the National Science and Engineering Research Council of Canada (NSERC) and the Simons Foundation via a Simons Investigator Award. AS thanks Pratik Rath and Onkar Parrikar for discussions, and TIFR for its hospitality during a visit in which part of this work was completed. ChatGPT 5 and 6 were used in proofreading parts of this work and suggested the result in Footnote 2, as well as formalism and proof ideas used in sections 4, 5, and appendix B. 
\appendix

\section{Proof details}

\subsection{A reformulation using the Jones basic construction}
\label{app:proof1}

The entropy difference (\ref{Sdiff}) admits a natural reformulation in terms of
the Jones basic construction associated with the inclusion $\mathcal{A}_{W_2}\subset\mathcal{A}_{W_1}$. We first recall the necessary operator-algebraic ingredients.

Let $\mathcal N\subset\mathcal M$ be a finite-index inclusion of von Neumann algebras with faithful normal
conditional expectation
\[
E:\mathcal M\rightarrow\mathcal N~.
\]
The associated \emph{Jones basic construction} is the von Neumann algebra
\[
\mathcal M_1=\langle\mathcal M,e_{\mathcal N}\rangle~,
\]
generated by \(\mathcal M\) together with the Jones projection
\(e_{\mathcal N}\), the orthogonal projection from
\(L^2(\mathcal M)\) onto \(L^2(\mathcal N)\) in the standard
representation. The basic construction is itself a finite-index extension of
\(\mathcal M\),
\[
\mathcal N\subset\mathcal M\subset\mathcal M_1~,
\]
and is equipped with a canonical faithful normal conditional expectation
\[
\widehat E:\mathcal M_1\rightarrow\mathcal M~,
\]
called the \emph{dual conditional expectation}. It is characterized by
\[
\widehat E(xe_{\mathcal N}y)
=
(\Ind E)^{-1}xy ~,
\qquad x,y\in\mathcal M~,
\]
and satisfies $\Ind(\widehat E)=\Ind(E)$. Iterating this construction gives rise to the Jones tower
\[
\mathcal N
\subset
\mathcal M
\subset
\mathcal M_1
\subset
\mathcal M_2
\subset
\cdots~.
\]
For further details on the Jones basic construction and the dual conditional
expectation we refer to
\cite{Jones1983,PimsnerPopa1986,kosaki1986, Longo1989}, see also \cite{vanderHeijden:2024tdk,Corbeel:2026xqb} for a recent application in the context of the black hole information problem.

A second ingredient is the index-relative entropy identity for finite-index
inclusions. Let
\[
E':\mathcal N'\rightarrow\mathcal M'
\]
denote the conditional expectation associated with the commutant inclusion $\mathcal M'\subset\mathcal N'$. Then every faithful normal state \(\omega\) satisfies
\begin{equation}
S_{\mathcal M}(\omega\|\omega\circ E)
+
S_{\mathcal N'}(\omega'\|\omega'\circ E')
=
\log\Ind(E)~,
\label{entropy-index}
\end{equation}
where \(\omega'\) is the corresponding state on the commutant algebra.
More details on the above identity can be found in, e.g., \cite{kosaki1998type, Xu2020}. For related work on studying entropic quantities in terms of the Jones index of algebra inclusions, the associated ``entropic certainty principle'' \eqref{entropy-index} and its application to QFT, see e.g., \cite{Casini:2019kex, Magan:2020ake}.

We now show that the right-hand side of (\ref{Sdiff}) may be interpreted as a
single relative entropy on the Jones basic construction.

\begin{proposition}
Let
\[
M
=
\left\langle
\mathcal{A}_{W_1},e_{\mathcal{A}_{W_2}}
\right\rangle
\]
be the Jones basic construction associated with the inclusion $\mathcal{A}_{W_2}\subset\mathcal{A}_{W_1}$
and let $\widehat E:M\rightarrow\mathcal{A}_{W_1}$ be the corresponding dual conditional expectation. Then there exists a faithful
normal state \(\omega_M\) on \(M\), canonically associated with the state
\(\omega\), such that
\begin{equation}
\label{eq:Jones_basic_construction}
S_{\rm gen}(W_1)-S_{\rm gen}(W_2)
=
S_M
\left(
\omega_M
\middle\|
\omega_M\circ\widehat E
\right).
\end{equation}
\end{proposition}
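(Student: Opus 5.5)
Write $\mathcal N=\mathcal{A}_{W_2}\subset\mathcal M=\mathcal{A}_{W_1}$ and $E=E_{\mathcal{A}_{W_1}\to\mathcal{A}_{W_2}}$. By (\ref{Sdiff}) the left-hand side of (\ref{eq:Jones_basic_construction}) equals $\log\Ind E - S_{\mathcal M}(\omega\|\omega\circ E)$. So I need one relative entropy on $M=\mathcal M_1$ that equals this difference.

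The tool is the entropic certainty identity (\ref{entropy-index}). It converts $\log\Ind E$ minus the relative entropy for $E$ into a relative entropy for a conditional expectation on the commutant side. Applied to $\mathcal N\subset\mathcal M$, it gives
\[
\log\Ind E - S_{\mathcal M}(\omega\|\omega\circ E)=S_{\mathcal N'}(\omega'\|\omega'\circ E')~,
\]
where $E':\mathcal N'\to\mathcal M'$ and $\omega'$ is the state on the commutant associated with $\omega$ (for instance through the canonical purification or standard form). It then remains to identify the commutant inclusion $\mathcal M'\subset\mathcal N'$ with $\mathcal M\subset\mathcal M_1$.

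For this I use Jones' theorem: in the standard representation on $L^2(\mathcal M)$ we have $\mathcal M_1=\langle\mathcal M,e_{\mathcal N}\rangle=J\mathcal N'J$. The modular conjugation $J$ is an antilinear $*$-isomorphism with $J\mathcal M'J=\mathcal M$, so it carries the inclusion $\mathcal M'\subset\mathcal N'$ to $\mathcal M\subset\mathcal M_1$. I then define $\omega_M(x)=\overline{\omega'(JxJ)}$, equivalently the vector state of the standard vector $\xi_\omega$ transported by $J$. This state is faithful and normal because $\omega'$ is.

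I also have to check that $J$ intertwines the conditional expectations, that is, $\widehat E(x)=J E'(JxJ) J$. Both sides are faithful normal conditional expectations $\mathcal M_1\to\mathcal M$. So it suffices to use the uniqueness, for finite-index inclusions of factors, of the minimal-index expectation, or alternatively to verify the defining relation $\widehat E(xe_{\mathcal N}y)=(\Ind E)^{-1}xy$ directly. For the latter I would use Kosaki's characterization of $E'=E^{-1}$ as the dual operator-valued weight, normalized by $\Ind E$, and note that $Je_{\mathcal N}J=e_{\mathcal N}$. Relative entropy is invariant under $*$-isomorphisms, and the antilinear twist does not matter because states are real on positive elements. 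This gives $S_{\mathcal N'}(\omega'\|\omega'\circ E')=S_M(\omega_M\|\omega_M\circ\widehat E)$, which completes the chain.

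I expect the main obstacle to be the matching of normalizations between $E'$ in (\ref{entropy-index}) and the dual conditional expectation $\widehat E$. Conventions differ between Kosaki, Longo and Pimsner--Popa, and they agree only up to the factor $\Ind E$. I would first settle this in the finite-dimensional factor case $B(\mathcal H_A)\otimes 1\subset B(\mathcal H)$, where all maps are explicit partial traces. Then I would invoke uniqueness of the expectations for irreducible finite-index subfactors. For non-irreducible inclusions I would restrict to the minimal expectation, for which $\Ind\widehat E=\Ind E$ is stated.
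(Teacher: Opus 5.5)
Your argument is the paper's proof: the certainty identity (\ref{entropy-index}) for $\mathcal N\subset\mathcal M$, Jones' identification $M=J\mathcal N'J$, transport of $\omega'$ and $E'$ by $J$ (your $\omega_M(x)=\overline{\omega'(JxJ)}$ is exactly the paper's $\omega_2'\circ j$ with $j(x)=Jx^*J$), and invariance of Araki's relative entropy. The paper only asserts $\widehat E=j\circ E'\circ j$, and your direct check of $\widehat E(xe_{\mathcal N}y)=(\Ind E)^{-1}xy$ supplies it. That check uses $E'=(\Ind E)^{-1}E^{-1}$, $E^{-1}(e_{\mathcal N})=1$, $Je_{\mathcal N}J=e_{\mathcal N}$, $\mathcal M'$-bimodularity of $E^{-1}$, and $M=\operatorname{span}\,\mathcal M e_{\mathcal N}\mathcal M$. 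It works for every finite-index $E$, so you do not need the fallback to minimal expectations.

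One step is wrong: $\omega_M$ is not faithful ``because $\omega'$ is''. Here $\omega'$ is the vector state of $\xi_\omega$ on $\mathcal N'$. It is faithful only if $\xi_\omega$ is cyclic for $\mathcal N$, and that generally fails.

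\begin{itemize}
\item If $\omega\circ E=\omega$, then $e_{\mathcal N}\xi_\omega=\xi_\omega$. Hence $\omega_M(1-e_{\mathcal N})=0$, although $0\neq 1-e_{\mathcal N}\in M$.
\item In the finite-dimensional case $B(\mathcal H_A)\otimes 1_m\subset B(\mathcal H_A\otimes\mathbb C^m)$ relevant to the tensor network, $M\cong B(\mathcal H_A\otimes\mathbb C^m\otimes\mathbb C^m)$ acts on $L^2(\mathcal M)$ with multiplicity $\dim\mathcal H_A$. So $\omega_M$ has rank at most $\dim\mathcal H_A$ and is never faithful when $m\ge 2$.
\end{itemize}

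The paper's proof uses the same state and does not verify faithfulness either, so neither argument supports that adjective in the statement. The equality itself is unaffected:
\begin{itemize}
\item Araki's relative entropy admits a non-faithful first argument.
\item $\omega_M\circ\widehat E=\omega\circ\widehat E$ is faithful on $M$.
\item The Pimsner--Popa bound $\widehat E(x)\ge(\Ind E)^{-1}x$ gives $S_M(\omega_M\|\omega_M\circ\widehat E)\le\log\Ind E$.
\end{itemize}
Drop the faithfulness claim, or restrict it to states whose vector representative is cyclic for $\mathcal N$, rather than deriving it from $\omega'$.
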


\begin{proof}
Applying the index-relative entropy identity (\ref{entropy-index}) to the
inclusion $\mathcal{A}_{W_2}\subset\mathcal{A}_{W_1}$ gives
\begin{equation}
\log\Ind(E)
-
S_{\mathcal{A}_{W_1}}
(\omega_1\|\omega_1\circ E)
=
S_{\mathcal{A}_{W_2}'}
(\omega_2'\|\omega_2'\circ E')~,
\label{certainty-rewritten}
\end{equation}
where
\[
E
=
E_{\mathcal{A}_{W_1}\rightarrow\mathcal{A}_{W_2}}~,
\qquad
E'
=
E_{\mathcal{A}_{W_2}'\rightarrow\mathcal{A}_{W_1}'}~.
\]

Let \(J_1\) denote the modular conjugation associated with the standard form of
\(\mathcal{A}_{W_1}\). A fundamental property of the Jones basic construction
is the canonical identification
\[
M
=
J_1\mathcal{A}_{W_2}'J_1~,
\]
which extends the standard relation $\mathcal{A}_{W_1}'
=
J_1\mathcal{A}_{W_1}J_1$. Equivalently, introducing the involutive normal
\(*\)-anti-isomorphism
\[
j(x)=J_1x^*J_1~,
\]
one obtains an anti-isomorphism $j:\mathcal{A}_{W_2}'
\longrightarrow
M$ under which the conditional expectation on the commutant inclusion is carried
to the dual conditional expectation,
\begin{equation}
\widehat E
=
j\circ E'\circ j~.
\label{dual-expectation}
\end{equation}

We now define the state on the basic construction by $\omega_M
=
\omega_2'\circ j$. Using \(j^2=\mathrm{id}\) together with
(\ref{dual-expectation}), we have $\omega_M\circ\widehat E
=
\omega_2'\circ E'\circ j$. Since Araki's relative entropy is invariant under normal
\(*\)-isomorphisms and \(*\)-anti-isomorphisms, we find:
\begin{align}
S_M
(\omega_M\|\omega_M\circ\widehat E)
&=
S_M
(\omega_2'\circ j
\|
\omega_2'\circ E'\circ j)
\nonumber\\
&=
S_{\mathcal{A}_{W_2}'}
(\omega_2'\|\omega_2'\circ E')~.
\label{entropy-j}
\end{align}
Combining (\ref{certainty-rewritten}) with (\ref{entropy-j}) immediately yields
the required relation (\ref{eq:Jones_basic_construction}).
\end{proof}

\subsection{Intersection of extremal regions}

\label{app:intersection}

We show that in the tensor network the intersection of two extremal regions is
again extremal. For a set of
vertices \(A\), we define its area by the number of network edges with exactly
one endpoint in \(A\),
\[
\area(A)\equiv |\partial A|~.
\]
This area satisfies the submodularity relation:
\begin{equation}
\area(A)+\area(B)
\geq
\area(A\cup B)+\area(A\cap B)~.
\label{eq:area-submodularity}
\end{equation}
More precisely,
\[
\area(A)+\area(B)
=
\area(A\cup B)+\area(A\cap B)
+
2\,|\partial(A\setminus B,B\setminus A)|~,
\]
where \(\partial(A\setminus B,B\setminus A)\) denotes the set of edges connecting
\(A\setminus B\) to \(B\setminus A\). Equality in
\eqref{eq:area-submodularity} holds whenever there are no such edges.

\begin{proposition}
Let \(A\) and \(B\) be extremal regions in the tensor network.
Then \(A\cap B\) is also extremal. In other words, for
every region \(C\) satisfying
\[
A\cap B\subset C~,
\]
one has
\[
\area(C)\geq\area(A\cap B)~.
\]
\end{proposition}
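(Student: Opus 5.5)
The plan is the standard submodularity argument. Take any region $C$ with $A\cap B\subset C$. We want to compare $C$ with $A\cap B$, using the extremality of $A$ and $B$. We cannot apply extremality of $A$ directly to $C$, since $C$ need not contain $A$. We therefore build auxiliary regions that do contain $A$ or $B$.

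The key step is to apply the submodularity relation \eqref{eq:area-submodularity} to the pair $A$ and $C$:
\[
\area(A)+\area(C)\geq \area(A\cup C)+\area(A\cap C)~.
\]
Since $A\cup C\supset A$ and $A$ is extremal, we have $\area(A\cup C)\geq\area(A)$. Combining the two inequalities gives
\[
\area(C)\geq\area(A\cap C)~.
\]
So replacing $C$ by $A\cap C$ does not increase area, and $A\cap C$ still contains $A\cap B$.

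Next we repeat the argument with $B$ in place of $A$ and $C_1\equiv A\cap C$ in place of $C$. Submodularity gives
\[
\area(B)+\area(C_1)\geq\area(B\cup C_1)+\area(B\cap C_1)~.
\]
Since $B\cup C_1\supset B$, extremality of $B$ gives $\area(B\cup C_1)\geq\area(B)$, and hence $\area(C_1)\geq\area(B\cap C_1)$.

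Finally, $B\cap C_1=A\cap B\cap C=A\cap B$, because $A\cap B\subset C$. Chaining the two steps gives
\[
\area(C)\geq\area(A\cap C)\geq\area(A\cap B)~,
\]
which is the claim.

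The only point that needs care is that \eqref{eq:area-submodularity} holds for arbitrary vertex subsets, including external-leg edges and edges counted in $\partial(A\setminus B,B\setminus A)$. I would verify this by a per-edge count. For each edge, compare its contribution to $\area(X)+\area(Y)$ with its contribution to $\area(X\cup Y)+\area(X\cap Y)$, according to which of the four sets $X\setminus Y$, $Y\setminus X$, $X\cap Y$, or the complement of $X\cup Y$ contains each endpoint. Dangling legs are treated as having one endpoint in the complement. In every case the left side is at least the right side, with a surplus of $2$ exactly for edges joining $X\setminus Y$ to $Y\setminus X$. This case check is routine, so there is no real obstacle.

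The same argument with strict inequalities also shows that strong extremality is preserved under intersection, though the proposition does not require this.
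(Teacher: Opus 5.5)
Your proof is correct and follows the paper's argument step for step. Submodularity for $A$ and $C$ together with extremality of $A$ gives $\area(C)\geq\area(A\cap C)$, and submodularity for $B$ and $A\cap C$ together with extremality of $B$ gives $\area(A\cap C)\geq\area(A\cap B\cap C)=\area(A\cap B)$; your per-edge check of submodularity (a surplus of $2$ exactly on edges joining $X\setminus Y$ to $Y\setminus X$) matches the exact identity the paper records just before the proposition.
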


\begin{proof}
Let \(C\supset A\cap B\). Since \(A\) is extremal
\[
\area(A)\leq\area(A\cup C)~.
\]
Applying submodularity to \(A\) and \(C\) gives
\[
\area(A\cap C)+\area(A\cup C)
\leq
\area(A)+\area(C)~.
\]
Combining these two inequalities, we obtain
\begin{equation}
\area(A\cap C)\leq\area(C)~.
\label{eq:first-intersection-bound}
\end{equation}

We now use that \(B\) is extremal. Since
\[
B\subset B\cup(A\cap C)~,
\]
we have
\[
\area(B)\leq\area\bigl(B\cup(A\cap C)\bigr)~.
\]
Submodularity applied to \(B\) and \(A\cap C\) then implies
\[
\area\bigl(B\cap A\cap C\bigr)
+
\area\bigl(B\cup(A\cap C)\bigr)
\leq
\area(B)+\area(A\cap C)~,
\]
and hence
\[
\area(A\cap B\cap C)\leq\area(A\cap C)~.
\]
Because \(A\cap B\subset C\), one has $A\cap B\cap C=A\cap B$. Together with \eqref{eq:first-intersection-bound}, this gives
\[
\area(A\cap B)
\leq
\area(A\cap C)
\leq
\area(C)~.
\]
Therefore, \(A\cap B\) is again extremal.
\end{proof}

\subsection{Relation between entropy formulas for general von Neumann factors}

\label{app:entropy_equivalence}
We now investigate to what extent the two entropy proposals agree for more
general von Neumann algebras. The two definitions have different domains of
applicability. The entropy defined by coarse-graining to the commutant is
well-defined whenever the corresponding conditional expectation exists,
whereas our original proposal requires a finite-index inclusion of von
Neumann factors
\[
\mathcal A\subset\Omega
\]
equipped with a faithful normal finite-index conditional expectation
\[
E_{\Omega\to\mathcal A}:\Omega\to\mathcal A~.
\]
In the remainder of this section, we restrict to this common setting and
identify the additional conditions under which the two entropy formulas
coincide. 

We work in the standard
representation of \(\Omega\) on \(\mathcal H_{\Omega }\) and set
\[
\mathcal M\equiv B(\mathcal H_{\Omega })~.
\]
Let \(\omega\) be a faithful normal state on \(\mathcal M \), and \(\omega_{\Omega}\), \(\omega_{\mathcal{A}}\) its restrictions to the algebras \(\Omega\) and \(\mathcal{A}\) respectively. We assume that the algebraic entropies $S(\omega|\Omega)$ and  $S(\omega|\mathcal{A})$ are well-defined and finite.

We define the commutants of $\mathcal A$ and $\Omega$ inside $\mathcal M$ respectively. The first proposition involves the
commutant of \(\mathcal A\) taken in the GNS algebra
\(\mathcal M\). We assume there exists a conditional expectation\footnote{Since \(\mathcal M\) is a type I von Neumann algebra, a conditional expectation
$E_{\mathcal M\to\mathcal A'}:\mathcal M\to\mathcal A'$ can exist only if \(\mathcal A'\) is itself type I. The statement might be generalized by replacing \(E_{\mathcal M\to\mathcal A'}\) with a normal faithful semifinite operator-valued weight $T_{\mathcal M\to\mathcal A'}$, provided that the relative entropy $S\!\left(
\omega
\middle\|
\omega|_{\mathcal A'}
\circ T_{\mathcal M\to\mathcal A'}
\right)$ is well-defined and finite. We leave such an investigation for future work.}
\[
E_{\mathcal M\to\mathcal A'}
:
\mathcal M\to\mathcal A'~.
\]
We then have the following identity:

\begin{proposition} \label{prop:algebra_generalization}
Let $E_{\Omega\to\mathcal A}
:
\Omega\to\mathcal A$
be a finite conditional expectation with corresponding dual 
\[
E_{\mathcal A'\to\Omega'}
:
\mathcal A'\to\Omega',
\]
and define $E_{\mathcal M\to\Omega'}
=
E_{\mathcal A'\to\Omega'}\circ E_{\mathcal M\to\mathcal A'}$.
We have
\begin{align}
S(\omega|\Acal)
-
\log\Ind(E_{\Omega\to\mathcal A}) + K_{\Omega}
=
S\!\left(
\omega
\middle\|
\omega_{\mathcal A'}
\circ E_{\mathcal M\to\mathcal A'}
\right)~,
\label{eq:entropy_equality_general}
\end{align}
where 
\[
K_{\Omega} := S(\omega\|\omega\circ E_{\mathcal M \to \Omega'}) -S(\omega |\Omega)~.
\]
\end{proposition}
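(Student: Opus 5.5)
The plan is to reduce \eqref{eq:entropy_equality_general} to three ingredients already on the table: the Pythagorean identity for relative entropy (as in the monotonicity proposition of Section 5), the index--relative entropy identity \eqref{entropy-index}, and the chain rule for the algebraic entropy of \cite{Sahu:2025upe}. Note first that $\Omega'\subset\mathcal A'$ inside $\mathcal M$, since $\mathcal A\subset\Omega$, and that $E_{\mathcal M\to\Omega'}=E_{\mathcal A'\to\Omega'}\circ E_{\mathcal M\to\mathcal A'}$ by definition. Throughout I read the right-hand side state $\omega_{\mathcal A'}\circ E_{\mathcal M\to\mathcal A'}$ as $\omega\circ E_{\mathcal M\to\mathcal A'}$, since these are the same functional.

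\emph{Step 1 (Pythagorean identity).} Since $E_{\mathcal M\to\mathcal A'}$ is idempotent, we have
\[
\omega\circ E_{\mathcal M\to\Omega'}\circ E_{\mathcal M\to\mathcal A'}=\omega\circ E_{\mathcal A'\to\Omega'}\circ E_{\mathcal M\to\mathcal A'}\circ E_{\mathcal M\to\mathcal A'}=\omega\circ E_{\mathcal M\to\Omega'}~.
\]
So $\sigma=\omega\circ E_{\mathcal M\to\Omega'}$ is $E_{\mathcal M\to\mathcal A'}$-invariant. Petz's identity then gives
\[
S(\omega\|\omega\circ E_{\mathcal M\to\Omega'})=S(\omega\|\omega\circ E_{\mathcal M\to\mathcal A'})+S(\omega\circ E_{\mathcal M\to\mathcal A'}\|\omega\circ E_{\mathcal M\to\Omega'})~.
\]
Both states in the last term factor through $E_{\mathcal M\to\mathcal A'}$. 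Restriction to $\mathcal A'$ decreases relative entropy, and applying $E_{\mathcal M\to\mathcal A'}$ (monotonicity) recovers the original pair, so the term equals $S_{\mathcal A'}(\omega_{\mathcal A'}\|\omega_{\mathcal A'}\circ E_{\mathcal A'\to\Omega'})$.

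\emph{Step 2 (certainty identity).} Apply \eqref{entropy-index} to the finite-index inclusion $\mathcal A\subset\Omega$ with $E=E_{\Omega\to\mathcal A}$ and commutant expectation $E'=E_{\mathcal A'\to\Omega'}$. This gives
\[
S_{\mathcal A'}(\omega_{\mathcal A'}\|\omega_{\mathcal A'}\circ E_{\mathcal A'\to\Omega'})=\log\Ind(E_{\Omega\to\mathcal A})-S_\Omega(\omega_\Omega\|\omega_\Omega\circ E_{\Omega\to\mathcal A})~.
\]

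\emph{Step 3 (algebraic entropy chain rule).} With the trace-preserving expectation used to define $S(\omega|\cdot)$ and $\rho_{\mathcal A}=E(\rho_\Omega)$, we have $\tau(\rho_\Omega\log\rho_{\mathcal A})=\tau(\rho_{\mathcal A}\log\rho_{\mathcal A})$. Hence
\[
S_\Omega(\omega_\Omega\|\omega_\Omega\circ E_{\Omega\to\mathcal A})=S(\omega|\mathcal A)-S(\omega|\Omega)~,
\]
which is the content of \eqref{Sdiff} with $\Omega$ as the larger algebra. For non-tracial settings, I would take this relation as the definition-level input from \cite{Sahu:2025upe}.

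\emph{Step 4 (assembly).} Substituting Steps 2 and 3 into Step 1 and solving for $S(\omega\|\omega\circ E_{\mathcal M\to\mathcal A'})$ gives
\[
S(\omega\|\omega\circ E_{\mathcal M\to\mathcal A'})=S(\omega\|\omega\circ E_{\mathcal M\to\Omega'})-\log\Ind(E_{\Omega\to\mathcal A})+S(\omega|\mathcal A)-S(\omega|\Omega)~.
\]
Recognizing $K_\Omega$ in the first and last terms yields \eqref{eq:entropy_equality_general}.

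The main obstacle is Step 2, specifically making the conventions match. The $E'$ in \eqref{entropy-index} must be exactly the dual expectation $E_{\mathcal A'\to\Omega'}$ named in the statement. The commutants must be taken in the standard representation $\mathcal H_\Omega$, so that $\mathcal A'$ and $\Omega'$ inside $\mathcal M=B(\mathcal H_\Omega)$ are the ones for which the identity holds. And the state $\omega'$ must be the restriction of the given $\omega$ on $\mathcal M$. I would check these against \cite{kosaki1998type}. A secondary point is that Step 1 needs faithfulness and finiteness of the entropies, which are assumed in the statement.
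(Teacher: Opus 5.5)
Your proposal is correct and follows the paper's proof. The paper likewise combines three ingredients: the index--relative entropy identity for $\mathcal A\subset\Omega$ (with commutants taken in $\mathcal M=B(\mathcal H_\Omega)$), the Pythagorean identity for $\Omega'\subset\mathcal A'\subset\mathcal M$, and the relation $S(\omega\|\omega\circ E_{\Omega\to\mathcal A})=S(\omega|\mathcal A)-S(\omega|\Omega)$. It then rearranges to isolate $K_\Omega$. Two of your steps fill in points the paper leaves implicit: your check, via restriction plus monotonicity under $E_{\mathcal M\to\mathcal A'}$, that the $\mathcal M$-level Pythagorean remainder equals the relative entropy of the restrictions to $\mathcal A'$, and your tracial derivation of the chain rule.
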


\begin{proof}
This follows by rearranging the index relation
\eqref{entropy-index} for the relevant inclusions:
\begin{equation}
S\!\left(
\omega_{\Omega}
\middle\|
\omega_{\Omega}\circ E_{\Omega\to\mathcal A}
\right)
+
S\!\left(
\omega_{\mathcal A'}
\middle\|
\omega_{\Omega'}
\circ E_{\mathcal A'\to\Omega'}
\right)
=
\log\Ind(E_{\Omega\to\mathcal A})~.
\label{eq:standard-form-index-relation}
\end{equation}
The important point is that the second relative entropy is computed on
\(\Omega'\subset\mathcal A'\), where both commutants are taken in the
purification algebra
\(\mathcal M=B(\mathcal H_{\Omega})\), rather than inside
\(\Omega\). We then use the Pythagorean identity to rewrite
\[
S\!\left(
\omega_{\mathcal A'}
\middle\|
\omega_{\Omega'}
\circ E_{\mathcal A'\to\Omega'}
\right)
=
S\!\left(
\omega
\middle\|
\omega_{\Omega'}
\circ E_{\mathcal M\to\Omega'}
\right)
-
S\!\left(
\omega
\middle\|
\omega_{\mathcal A'}
\circ E_{\mathcal M\to\mathcal A'}
\right)~.
\]
Combining both equations gives 
\begin{align}
&
S\!\left(
\omega_{\Omega}
\middle\|
\omega_{\Omega}\circ E_{\Omega\to\mathcal A}
\right)
-
\log\Ind(E_{\Omega\to\mathcal A})
\nonumber\\
&\qquad
+
S\!\left(
\omega
\middle\|
\omega_{\Omega'}
\circ E_{\mathcal M\to\Omega'}
\right)
=
S\!\left(
\omega
\middle\|
\omega_{\mathcal A'}
\circ E_{\mathcal M\to\mathcal A'}
\right)~.
\nonumber 
\end{align}
We can use the following expression to relate the relative entropy to a difference in algebraic entropies:
\[
S(\omega\|\omega\circ E_{\Omega \to \mathcal A}) =S(\omega|\Acal)- S(\omega|\Omega)
~,
\]
which gives us the required result \eqref{eq:entropy_equality_general}.
\end{proof}
The constant $K_{\Omega}$ in Proposition \ref{prop:algebra_generalization} depends on the algebra $\Omega$, and the choice of global state $\omega$. As a sanity check, we can evaluate it in the case that the algebras are finite-dimensional factors of the form $\mathcal{M}=B(\mathcal{H}\otimes \overline{\mathcal{H}}), \Omega=B(\mathcal{H})\otimes 1$ and $ \Omega'=1\otimes B(\overline{\mathcal{H}})$. Recall that for an algebra of the form  $\mathcal{A}=B(\mathcal{H}_A)$ with $D_{\mathcal{A}} = \dim \mathcal{H}_A$ we have
\[
S(\omega|\mathcal{A}) = S(\rho_{\mathcal{A}})-\log D_{\mathcal A}~,
\]
where $S(\rho_{\mathcal{A}})$ is the von Neumann entropy of the reduced density matrix to the algebra $\mathcal{A}$. 
Taking the global state $\omega$ on $\mathcal{M}$ to be pure, we thus find:
\[
S(\omega\|\omega\circ E_{\mathcal M \to \Omega'}) = S(\omega|\Omega')- S(\omega|\mathcal M) = S(\rho_{\Omega'})-\log D_{\Omega'} -(S(\rho_{\mathcal{M}})-\log D_{\mathcal{M}} ) ~,
\]
Plugging this into the expression for $K_{\Omega}$, we obtain
\begin{align*}
K_{\Omega} &= S(\omega\|\omega\circ E_{\mathcal M \to \Omega'}) - S(\omega |\Omega) \\
&= S(\rho_{\Omega'})-S(\rho_{\Omega})-S(\rho_{\mathcal{M}}) + \log \frac{D_{\mathcal{M}}D_{\Omega}}{D_{\Omega'}} \\
& = \log D_{\Omega}^2~,
\end{align*}
where we used that  $D_{\mathcal{M}}=D_{\Omega}D_{\Omega'}$ and $S(\rho_{\Omega'})=S(\rho_{\Omega})$. In this case, $K_{\Omega}$ is independent of $\omega$.

\section{Averaging over Haar ensembles  }

\label{app:average}

\subsection{Tensor network averaging}
\label{app:haar}

In this section, we shall derive the master formula \eqref{eq:master-formula1}.

The assignment \eqref{eq:CJiso} defines, for each region $S$, the linear map built from the vertices in the complement of $S$,
\[
\hat T_S:\mathcal H_{S}\to\mathcal H_\partial~,
\qquad
\hat T_S=\bigotimes_{e\,\in\,\edge\setminus\edge_S}\bra{\Phi_e}
\bigotimes_{v\,\in\,\verts\setminus S}\ket{V_v}~.
\]
The domain $\mathcal H_S$ is spanned by the indices left dangling by the removal of the vertices of $S$, namely those on the edges of $\bdy S$. Hence
\begin{equation}
D_S\equiv\dim\mathcal H_S=D^{|\bdy S|}~.
\label{eq:DA-dim}
\end{equation}
After the normalization of \eqref{eq:normalized_TA}, $\hat T_A$ defines the positive operator
\begin{equation}
P_A=\frac{D_A}{\tr\big(\hat T_A\hat T_A^\dagger\big)}\,\hat T_A\hat T_A^\dagger~,
\qquad
\tr(P_A)=D_A~,
\label{eq:PA-def}
\end{equation}
where the second equality is exact.

For a collection of regions $A_{i_{\alpha,k}}$
where the indices take the value $(i_{1,1},\dots,i_{m,k_m})=(1,\dots,n)$, we have using \eqref{eq:PA-def}, 
\begin{align}
\prod_{\alpha=1}^{m}\tr\!\left(P_{A_{i_{\alpha,1}}}\cdots P_{A_{i_{\alpha,k_\alpha}}}\right)
&=
\left(\prod_{i=1}^{n}\frac{D_{A_i}}{\tr\big(\hat T_{A_i}\hat T_{A_i}^\dagger\big)}\right)
\prod_{\alpha=1}^{m}
\tr\!\left(
\hat T_{A_{i_{\alpha,1}}}\hat T_{A_{i_{\alpha,1}}}^{\dagger}
\cdots
\hat T_{A_{i_{\alpha,k_\alpha}}}\hat T_{A_{i_{\alpha,k_\alpha}}}^{\dagger}
\right).
\label{eq:Master_Trace}
\end{align}
 We approximate the average of \eqref{eq:Master_Trace} by the ratio of the averages of numerator and denominator. We justify this in \ref{app:ratio-of-averages approximation}.

Given a permutation $\sigma \in S_n$, let $\Sigma(\sigma)$ act on an $n$-fold replicated space by
\begin{equation}
\Sigma(\sigma)\ket{i_1}\otimes\cdots\otimes\ket{i_n}
=\ket{i_{\sigma(1)}}\otimes\cdots\otimes\ket{i_{\sigma(n)}}~,
\label{eq:rep-operator}
\end{equation}
which is a representation of $S_n$. A trace of a product of operators on $ B(\Hcal_\partial)$ is reproduced by placing one operator on each replica and gluing the replicas cyclically. With the convention \eqref{eq:rep-operator} one has $\Tr[(O_1\otimes\cdots\otimes O_n)\Sigma(\sigma)]=\prod_{\text{cycles}}\tr(\cdots)$. Thus we have
\begin{equation}
\prod_{\alpha=1}^{m}\tr\!\left(
\hat T_{A_{i_{\alpha,1}}}\hat T_{A_{i_{\alpha,1}}}^{\dagger}
\cdots
\hat T_{A_{i_{\alpha,k_\alpha}}}\hat T_{A_{i_{\alpha,k_\alpha}}}^{\dagger}
\right)
=
\Tr\!\left[
\bigotimes_{i=1}^{n}\big(\hat T_{A_i}\hat T_{A_i}^\dagger\big)\,
\Sigma\big(\kappa\big)
\right],
\label{eq:Replica_master}
\end{equation}
where
\begin{equation}
\kappa=\prod_{\alpha=1}^{m}\big(i_{\alpha,1}\;i_{\alpha,2}\;\cdots\;i_{\alpha,k_\alpha}\big)
\label{eq:kappa}
\end{equation}
is the product of one cycle per trace factor. The operator $\Sigma(\kappa)$ acts on $\Hcal_\partial^{\otimes n}$, which factorises over the dangling legs. It therefore factorises as $\bigotimes_{\ell\in\legs}\Sigma_\ell(\kappa)$. Each leg may be regarded as an edge joining a network vertex to an external vertex $b\in\extv$ carrying the fixed permutation $\pi_b=\kappa$. This is the origin of the boundary condition imposed in \cref{sec:main}.

For a Haar random state $\ket V$ in dimension $d$, the average $\overline{(\ket V\bra V)^{\otimes k}}$ commutes with $U^{\otimes k}$ for every unitary $U$ which restricts its form to be proportional to a projector onto the symmetric subspace of the $k-$fold
replicated Hilbert space \cite{Hayden:2016cfa}. Normalizing the trace to unity gives
\begin{equation}
\overline{\left(\ket{V}\bra{V}\right)^{\otimes k}}
=\frac{1}{(d)_{k}}\sum_{\sigma\in S_{k}}\Sigma(\sigma)~,
\qquad
(d)_{k}=d(d+1)\cdots(d+k-1)~,
\label{eq:haar-moment}
\end{equation}
where we have used $\sum_{\sigma\in S_k}d^{\,\mathrm{cyc}(\sigma)}=(d)_k,$ the Pochhammer symbol.

The vertex $v$ is absent from $\hat T_{A_i}$ whenever $v\in A_i$. On the replicated space, therefore, the projector $\ket{V_v}\bra{V_v}$ appears once for each $i$ with $v\notin A_i$, a total of
\begin{equation}
n_v=\#\{i:v\notin A_i\}\le n
\label{eq:nv}
\end{equation}
times. Applying \eqref{eq:haar-moment} with $k=n_v$ produces permutations of those $n_v$ replicas alone. Extending them to $S_n$ by fixing the remaining labels, the permutation at $v$ is constrained to the set
\begin{equation}
H_v=\left\{\sigma\in S_n:\sigma(i)=i\ \text{whenever}\ v\in A_i\right\}\cong S_{n_v}~.
\label{eq:Hv}
\end{equation}
When $v$ is not in any region, $H_v=S_n$. Averaging vertex by vertex thus replaces the random tensors by a single permutation variable per vertex,
\begin{equation}
\overline{\Tr\!\left[\bigotimes_{i=1}^{n}\big(\hat T_{A_i}\hat T_{A_i}^\dagger\big)\Sigma(\kappa)\right]}
=\left(\prod_{v\in\verts}\frac{1}{(d_v)_{n_v}}\right)
\sum_{\substack{\pi_v\in H_v,\;v\in\verts\\ \pi_b=\kappa,\;b\in\extv}}
w[\{\pi_v\}]~,
\label{eq:replica-to-sum}
\end{equation}
where $w[\{\pi_v\}]$ is the contraction of the operators $\Sigma_v(\pi_v)$ along the edges of the extended network. Note that the boundary vertices are fixed to carry the permutation $\kappa$.

An edge $e=\langle xy\rangle$ is absent from replica $i$ whenever $e\in\edge_{A_i}$, so it carries $n_e=\#\{i:e\notin\edge_{A_i}\}$ copies of $\ket{\Phi_e}$. Contracting the two permutation operators through them gives
\begin{equation}
w_e(\pi_x,\pi_y)
=\bra{\Phi_e}^{\otimes n_e}\Sigma_x(\pi_x)\otimes\Sigma_y(\pi_y)\ket{\Phi_e}^{\otimes n_e}
=D^{\,n_e-d_{\mathrm C}(\pi_x,\pi_y)}~,
\label{eq:edge-weight}
\end{equation}
where
\begin{equation}
d_{\mathrm C}(\sigma,\rho)=n-\#\,\mathrm{cyc}\big(\sigma^{-1}\rho\big)
\label{eq:cayley}
\end{equation}
is the \emph{Cayley distance}, equal to the minimal number of transpositions carrying $\sigma$ into $\rho$. For \eqref{eq:edge-weight}, each cycle of the relative permutation closes one
index loop and contributes a factor of $D$. The exponent is therefore the
number of cycles that act within the $n_e$ replicas on which $\ket{\Phi_e}$ is
actually present. Now the replicas on which $e$ is absent are precisely those
fixed by both $\pi_x$ and $\pi_y$, and each contributes a fixed point of
$\pi_x\pi_y^{-1}$. Counting all of $\#\,\mathrm{cyc}(\pi_x\pi_y^{-1})$ thus
overstates the exponent by exactly $n-n_e$. Equation \eqref{eq:edge-weight}
records the corrected count,
$D^{\,\#\,\mathrm{cyc}(\pi_x\pi_y^{-1})-(n-n_e)}=D^{\,n_e-d_{\mathrm C}(\pi_x,\pi_y)}$.

Collecting the exponents, \eqref{eq:replica-to-sum} becomes a classical spin model on the extended graph in which each vertex carries an element of $S_n$ and $\log D$ plays the role of an inverse temperature,
\begin{equation}
w[\{\pi_v\}]=\left(\prod_{e\in\widehat\edge}D^{\,n_e}\right)e^{-\mathcal E[\{\pi_v\}]\log D}~,
\qquad
\mathcal E[\{\pi_v\}]=\sum_{(x,y)\in\widehat\edge}d_{\mathrm C}(\pi_x,\pi_y)~.
\label{eq:perm-hamiltonian}
\end{equation}
A configuration is a collection of domains, one for each permutation occurring in it, and only the edges crossing a domain wall contribute to $\mathcal E$. As $D\to\infty$ the sum is dominated by the configurations of least energy,
\begin{equation}
\mathcal E_{\kappa}(A_1,\ldots,A_n)
=\min_{\substack{\pi_v\in H_v,\;v\in\verts\\ \pi_b=\kappa,\;b\in\extv}}
\ \sum_{(x,y)\in\widehat\edge}d_{\mathrm C}(\pi_x,\pi_y)~,
\label{eq:min-energy}
\end{equation}
with the remaining configurations suppressed by integer powers of $D^{-1}$, since the energies are integers. The favoured configurations are those with the fewest and shortest domain walls. These walls are the tensor network counterparts of extremal surfaces, and their total weight is the area term in the entropy.
Assembling the results so far, we have the average of the numerator 
\begin{equation}
\label{eq:Real_Master_Average}
\overline{\prod_{\alpha=1}^{m}\tr\!\left(
\hat T_{A_{i_{\alpha,1}}}\hat T_{A_{i_{\alpha,1}}}^{\dagger}
\cdots
\hat T_{A_{i_{\alpha,k_\alpha}}}\hat T_{A_{i_{\alpha,k_\alpha}}}^{\dagger}
\right)} = \mathcal{N}D^{-\mathcal{E}_\kappa(A_{1},\dots,A_n) }(g_{\kappa}+O(D^{-1}))~,
\end{equation}
where
\begin{equation}
\mathcal N=\left(\prod_{v\in\verts}\frac{1}{(d_v)_{n_v}}\right)\prod_{e\in\widehat\edge}D^{\,n_e}~,
\label{eq:normalisation}
\end{equation}
$g_\kappa$ is the number of configurations attaining the minimum in \eqref{eq:min-energy} and $\kappa$ is defined in \cref{eq:kappa}.

We can rewrite $\mathcal N$ as a factor over replicas since $n_v=\#\{i:v\notin A_i\}$ and $n_e=\#\{i:e\notin\edge_{A_i}\}$, and at large $D$ we may use $(d_v)_{n_v}^{-1}=d_v^{-n_v}\big(1+O(d_v^{-1})\big)$. The replica $i$ factor is then
\[
\left(\prod_{v\notin A_i}d_v^{-1}\right)\prod_{e\notin\edge_{A_i}}D~,
\]
which is exactly the right hand side of \eqref{eq:replica-to-sum} evaluated at $n=1$ for the single region $A_i$. In this case the relevant permutation group is $S_1$ which is trivial and the energy vanishes. Hence
\begin{equation}
\mathcal N=\prod_{i=1}^{n}\overline{\Tr\big(\hat T_{A_i}\hat T_{A_i}^\dagger\big)}\ \big(1+O(D^{-1})\big)~,
\label{eq:N-factorises}
\end{equation}
so the normalisation in \eqref{eq:Master_Trace} cancels the prefactor of the averaged numerator identically. Using \eqref{eq:DA-dim} for the surviving factors $D_{A_i}$ we obtain
\begin{equation}
\begin{aligned}
&\overline{\prod_{\alpha=1}^{m}\tr\!\left(P_{A_{i_{\alpha,1}}}\cdots P_{A_{i_{\alpha,k_\alpha}}}\right)}
\\[2mm]
&\hspace{1.5cm}
=D^{\,\sum_{i=1}^{n}|\bdy A_i|-\mathcal E_{\kappa}(A_1,\ldots,A_n)}\big(g_\kappa+O(D^{-1})\big)~.
\end{aligned}
\label{eq:master-result}
\end{equation}
 The corrections are of two kinds, the expansion of the Pochhammer symbols in \eqref{eq:N-factorises} and the subleading permutation configurations, and both are suppressed by a single power of $D$.

At $n=2$, relevant to the R\'enyi entropy of \cref{eq:purity}, the group is $S_2=\{\Id,\,\text{swap}\}$ with $d_{\mathrm C}(\Id,\Id)=0$ and $d_{\mathrm C}(\Id,\text{swap})=1$. Identifying the two elements with spin values $\pm1$ reduces \eqref{eq:perm-hamiltonian} to the Ising model of \cref{eq:Ising}, and the minimal domain wall is the Ryu--Takayanagi surface.

\subsection{Minimum energy configurations}
\label{app:averageTN}

Let us now find the minimal energy configuration for some trace computations of interest.

\subsubsection{Moments of  $P_A$}

Let $A$ be a strongly extremal subregion in the tensor network. We want to show that the operator ${P}_A = {T}_A{T}_A^\dagger$ is a projector in the large $D$ limit. This is equivalent to saying that the leading order contribution in $1/D$ to
\begin{equation*}
    \tr{({P}_A^n)} = D_A^n\frac{\Tr( (\otimes^n \hat P_A \Sigma({c_n}))}{\Tr( \otimes^n \hat P_A)}~
\end{equation*}
must be independent of $n$. To calculate the average of the numerator, we perform the Haar average as described in the previous section, which produces a sum over all configurations with the following boundary condition: each boundary leg contains $\Sigma(c_n)$ and each edge on the boundary of $A$ contains $\Id$.

With the given boundary condition, the dominant contribution comes from a configuration where there is a $c_n$ domain homologous to the boundary separated from an identity domain homologous to the region $A$. 
For a generalized entanglement wedge, the domain wall is the boundary of the region $A$ itself. In this case each edge on the boundary of $A$ contributes $D_e^{1-n}$, since $d_C(\mathbb{1},c_n) = n-1$. The total contribution from all the legs on the boundary of $A$ is therefore given by $D_e^{(1-n)|\partial A|} = D_A^{1-n}$. 

Using the master formula \eqref{eq:master-result}, we thus obtain the leading order contribution \[
\overline{\tr(P_A^n)} =D_A^n  D_A^{(1-n)}= D_A~,
\]
which is independent of $n$. We conclude that $P_A$ is a projector in the large bond dimension limit. 

\subsubsection{Moments of $P_AP_B$}
Let $A$ and $B$ be strongly extremal regions with $A\cup B$ strongly extremal. We claim that the minimum of \eqref{eq:min-energy} is attained at
\begin{equation}
\sigma_v=
\begin{cases}
\Id~, & v\in A\cap B~,\\
c_{\rm even}=(2\,4\,\cdots\,2n)~, & v\in A\setminus B~,\\
c_{\rm odd}=(1\,3\,\cdots\,2n-1)~, & v\in B\setminus A~,\\
\tau=(1\,2\,\cdots\,2n)~, & v\in\overline{A\cup B}~,
\end{cases}
\label{eq:pApB-config}
\end{equation}
where $\overline{A \cup B}$ denotes the complement of $A\cup B$ in the tensor network. We prove this by showing that any admissible configuration can be changed into \eqref{eq:pApB-config} without raising the energy.
For future reference, we record the relevant Cayley distances for the above permutations
\begin{equation}
d_{\mathrm C}(\Id,\tau)=2n-1~,\quad
d_{\mathrm C}(\Id,c_{\rm even})=d_{\mathrm C}(\Id,c_{\rm odd})=n-1~,\quad d_{\mathrm C}(c_{\rm even},\tau)=d_{\mathrm C}(c_{\rm odd},\tau)=n~.
\label{eq:cycle-distances}
\end{equation}
We first prove the following lemma which is a consequence of extremality of a region $C$:
\begin{lemma}
     Let $f\ge0$ be a function on
$\verts\setminus C$ that vanishes on the external vertices. Then
\begin{equation}
\sum_{\substack{\{v,w\}\\ v,w\notin C}}\big|f(v)-f(w)\big|
\ \ge\
\sum_{\substack{\{u,v\}\\ u\in C,\ v\notin C}} f(v)~,
\label{eq:coarea}
\end{equation}
where the sum runs over edges of the tensor network.
\end{lemma}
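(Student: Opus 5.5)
This is the standard coarea (layer-cake) argument. For $t>0$ define the superlevel set
\[
S_t=\{v\in\verts\setminus C:\ f(v)\ge t\}~,
\]
which consists only of internal vertices because $f$ vanishes on the external vertices. Each $S_t$ is therefore a legitimate region, and $C\cup S_t$ is a region containing $C$.

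\textbf{Step 1 (left-hand side).} Each term of the left-hand side decomposes as
\[
|f(v)-f(w)|=\int_0^\infty \big|\mathbb 1[f(v)\ge t]-\mathbb 1[f(w)\ge t]\big|\,dt~,
\]
so the left-hand side equals $\int_0^\infty e(S_t,\verts\setminus(C\cup S_t))\,dt$. Here $e(X,Y)$ counts edges between $X$ and $Y$, and external vertices lie in the second set.

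\textbf{Step 2 (right-hand side).} Similarly, $f(v)=\int_0^\infty \mathbb 1[f(v)\ge t]\,dt$, so the right-hand side equals $\int_0^\infty e(C,S_t)\,dt$.

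\textbf{Step 3 (pointwise inequality).} It then suffices to show, for every $t$,
\[
e(C,S_t)\le e(S_t,\verts\setminus(C\cup S_t))~.
\]
This is exactly the extremality of $C$ applied to the enlarged region $C\cup S_t$. Writing $R=\overline{C\cup S_t}$ for the complement including external vertices, we have
\[
|\partial(C\cup S_t)|=|\partial C|-e(C,S_t)+e(S_t,R)~.
\]
Extremality gives $|\partial(C\cup S_t)|\ge|\partial C|$, which is the pointwise inequality. Integrating over $t$ proves the lemma.

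The only point needing care is the boundary bookkeeping in Step 3. Edges from $C$ to $S_t$ stop being boundary edges, and edges from $S_t$ to $R$ (including dangling legs) become boundary edges. Edges internal to $S_t$ or to $C$ contribute nothing. No other subtlety arises, since $f\ge0$ makes every $S_t$ with $t>0$ a subset of $\verts\setminus C$ and the integrals are finite sums over the finitely many level values of $f$.
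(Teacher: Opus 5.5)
Your proof is correct and follows the paper's argument essentially step for step. You slice $f$ into superlevel sets $S_t$, apply extremality of $C$ to $C\cup S_t$ with the same boundary bookkeeping, and integrate using the layer-cake identities. Your choice of $\{f\ge t\}$ for $t>0$, where the paper uses $\{f>t\}$ for $t\ge 0$, is immaterial, since the two differ only on a measure-zero set of $t$.
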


\begin{proof}
For $t\ge0$ let $S_t=\{v\in\verts\setminus C: f(v)>t\}$. Since $f$ vanishes
on $\extv\setminus C$, the set $S_t$ is disjoint from $\extv$, so
extremality of $C$ gives $|\partial(C\cup S_t)|\ge|\partial C|$. Enlarging
$C$ by $S_t$ removes from the boundary those edges of $\partial C$ whose
outer endpoint lies in $S_t$, and adds those edges outside $C$ with exactly
one endpoint in $S_t$. Hence
\begin{equation}
\sum_{\substack{\{v,w\}\in\edge\\ v,w\notin C}}
\big|\mathbf 1_{S_t}(v)-\mathbf 1_{S_t}(w)\big|
\ \ge\
\sum_{\substack{\{u,v\}\in\edge\\ u\in C,\ v\notin C}}\mathbf 1_{S_t}(v)~.
\label{eq:coarea-slice}
\end{equation}
Integrating \eqref{eq:coarea-slice} over $t\in[0,\infty)$ and exchanging the
finite sums with the integral yields \eqref{eq:coarea}, by the identities
\begin{equation}
\int_0^\infty\big|\mathbf 1_{\{f(v)>t\}}-\mathbf 1_{\{f(w)>t\}}\big|\,dt
=\big|f(v)-f(w)\big|~,
\qquad
\int_0^\infty\mathbf 1_{\{f(v)>t\}}\,dt=f(v)~,
\label{eq:layer-cake}
\end{equation}
the first of which holds because the two indicators differ precisely for $t$
between $f(w)$ and $f(v)$.
\end{proof}

Let $\{\sigma_v\}$ be an admissible configuration. We first replace it by the
configuration that agrees with it on $A\cup B$ and satisfies $\sigma_v=\tau$
for every $v\in\overline{A\cup B}$, and we estimate the resulting change in
energy. Set $f(v)=d_{\mathrm C}(\sigma_v,\tau)$ for
$v\in\overline{A\cup B}$, which is non-negative and vanishes on the external
vertices. Edges internal to $A\cup B$ are unaffected. Edges internal to
$\overline{A\cup B}$ lose their entire energy, and by the triangle
inequality this loss is at least 
\[
\sum_{\substack{\{v,w\}\\ v,w\in\overline{A\cup B}}}
\big|f(v)-f(w)\big|~.
\]
Edges joining $A\cup B$ to $\overline{A\cup B}$ gain
$d_{\mathrm C}(\sigma_u,\tau)-d_{\mathrm C}(\sigma_u,\sigma_v)\le f(v)$
each, so the total gain is at most
\[
\sum_{\substack{\{u,v\}\\ u\in A\cup B,\ v\notin A\cup B}} f(v)~.
\]
By \eqref{eq:coarea} applied with $C=A\cup B$, the first sum dominates the
second, and the energy does not rise. We next replace the configuration so
obtained by the one satisfying $\sigma_v=c_{\rm even}$ for every
$v\in A\setminus B$, leaving it unchanged elsewhere, and we put
\begin{equation}
a_x=n-1-d_{\mathrm C}(\Id,\sigma_x)\ \ge\ 0~,
\qquad x\in A\setminus B~,
\label{eq:ax}
\end{equation}
the inequality holding because $\sigma_x$ permutes only the $n$ even labels
and therefore lies within distance $n-1$ of the identity. The energy lost on
an edge internal to $A\setminus B$ is at least $|a_x-a_{x'}|$, by the
triangle inequality applied to $\Id$, $\sigma_x$ and $\sigma_{x'}$. The
energy lost on an edge joining $A\setminus B$ to $B$ equals $-a_x$, since the
permutation at the far endpoint permutes only odd labels, whether that
endpoint lies in $B\setminus A$ or in $A\cap B$, so that both the old and the
new distance decompose by disjointness of supports. The energy lost on an
edge joining $A\setminus B$ to $\overline{A\cup B}$ is at least $a_x$,
since the new distance equals $n$ by \cref{eq:cycle-distances} while
\[
d_{\mathrm C}(\sigma_x,\tau)\ \ge\
d_{\mathrm C}(\Id,\tau)-d_{\mathrm C}(\Id,\sigma_x)=n+a_x~,
\]
and it is here that the preceding replacement on $\overline{A\cup B}$ is
required. Summing the three contributions, the total energy lost is at least
\begin{equation}
\sum_{\substack{\{x,x'\}\\ x,x'\in A\setminus B}}\big|a_x-a_{x'}\big|
+\sum_{\substack{\{x,r\}\\ x\in A\setminus B,\ r\in\overline{A\cup B}}} a_x
-\sum_{\substack{\{x,v\}\\ x\in A\setminus B,\ v\in B}} a_x~,
\label{eq:step-two-sum}
\end{equation}
which is non-negative by \eqref{eq:coarea} applied with $C=B$ and with $f$
taken to equal $a$ on $A\setminus B$ and to vanish on
$\overline{A\cup B}$. This choice is legitimate, since
$\verts\setminus B=(A\setminus B)\cup\overline{A\cup B}$ and $f$ vanishes on
the external vertices. Hence the energy again does not rise. Exchanging the
roles of $A$ and $B$ and repeating the argument, we may further set
$\sigma_v=c_{\rm odd}$ for every $v\in B\setminus A$ without raising the
energy. Since every vertex of $A\cap B$ necessarily carries the identity, the
resulting configuration is \eqref{eq:pApB-config}, which proves the claim.
Evaluating \eqref{eq:min-energy} on \eqref{eq:pApB-config} by means of
\eqref{eq:cycle-distances}, we obtain
\begin{equation}
\mathcal E_\kappa
=(n-1)\big(|\partial A|+|\partial B|\big)+\big|\partial(A\cup B)\big|~.
\label{eq:pApB-energy}
\end{equation}
The minimiser is unique provided $|\partial A|<|\partial(A\cup B)|$ and
$|\partial B|<|\partial(A\cup B)|$, and all three regions $A,B,A\cup B$ are strongly extremal. Should either inequality be an equality,
the minimum is degenerate and the factor $g_\kappa$ in
\eqref{eq:master-result} exceeds unity.

\subsubsection{Variance of the moments}

Consider fluctuations of
$
X_n \equiv \tr\!\left[(P_AP_B)^n\right]
$
as captured by the variance
$
    \overline{X_n^2}-\overline{X_n}^{\,2}.
$
The first term is
\[
    \overline{X_n^2}
    =
    \overline{
    \tr\!\left[(P_AP_B)^n\right]
    \tr\!\left[(P_AP_B)^n\right]
    }~.
\]
This is described by a $4n$-replica model with boundary permutation
\begin{equation}
r=(1\,2\,\cdots\,2n)(2n+1\,2n+2\,\cdots\,4n)~.
\end{equation}
The analogue of \eqref{eq:pApB-config} is
\begin{equation}
\sigma_v=
\begin{cases}
\Id~, & v\in A\cap B~,\\
c_{\rm even}^{(2)}~, & v\in A\setminus B~,\\
c_{\rm odd}^{(2)}~, & v\in B\setminus A~,\\
r~, & v\in\overline{A\cup B}~,
\end{cases}
\label{eq:variance-disc-config}
\end{equation}
where
\begin{align}
c_{\rm even}^{(2)}
&=(2\,4\,\cdots\,2n)(2n+2\,2n+4\,\cdots\,4n)~,
\nonumber\\
c_{\rm odd}^{(2)}
&=(1\,3\,\cdots\,2n-1)(2n+1\,2n+3\,\cdots\,4n-1)~.
\end{align}
Their relevant Cayley distances are
\begin{equation}
d_{\mathrm C}(\Id,r)=4n-2~,\qquad
d_{\mathrm C}(\Id,c_{\rm even}^{(2)})
=
d_{\mathrm C}(\Id,c_{\rm odd}^{(2)})
=2n-2~,
\end{equation}
and
\begin{equation}
d_{\mathrm C}(c_{\rm even}^{(2)},r)
=
d_{\mathrm C}(c_{\rm odd}^{(2)},r)
=2n~.
\end{equation}
The proof that \eqref{eq:variance-disc-config} minimizes the energy follows
the same sequence of replacements as in the previous section. First,
extremality of $A\cup B$ and \eqref{eq:coarea} allow us to set
$\sigma_v=r$ throughout $\overline{A\cup B}$ without increasing the
energy. One may then set $\sigma_v=c_{\rm even}^{(2)}$ on
$A\setminus B$ and, by exchanging $A$ and $B$,
$\sigma_v=c_{\rm odd}^{(2)}$ on $B\setminus A$.

There is only one additional possibility in the $4n$-replica problem.
A permutation supported on the $2n$ even labels can have
$d_{\mathrm C}(\Id,\sigma)=2n-1$, rather than at most $2n-2$, if it is a
single cycle joining the even labels of the two replica copies. In this
case $\sigma$ and $r$ act transitively on the $4n$ replicas, and the
cycle inequality gives
\begin{equation}
d_{\mathrm C}(\sigma,r)\ge 2n+1
>
d_{\mathrm C}(c_{\rm even}^{(2)},r)=2n~.
\end{equation}
Thus such replica-connecting configurations are not energetically favored. 

It is therefore convenient to define
\begin{equation}
a_x^+
=
\max\!\left\{
2n-2-d_{\mathrm C}(\Id,\sigma_x),\,0
\right\},
\qquad x\in A\setminus B~.
\end{equation}
With this replacement, the same estimates as in the previous subsection
give a lower bound on the energy lost of
\begin{equation}
\sum_{\substack{\{x,x'\}\\x,x'\in A\setminus B}}
|a_x^+-a_{x'}^+|
+
\sum_{\substack{\{x,r'\}\\
x\in A\setminus B,\ r'\in\overline{A\cup B}}}
a_x^+
-
\sum_{\substack{\{x,v\}\\
x\in A\setminus B,\ v\in B}}
a_x^+~,
\end{equation}
which is non-negative by \eqref{eq:coarea} applied with $C=B$, with $f$ taken to be equal to $a^{+}$ on $A\setminus B$ and zero elsewhere. Hence setting $\sigma_v=c_{\rm even}^{(2)}$ on $A\setminus B$ does not
raise the energy, and similarly for $c_{\rm odd}^{(2)}$ on
$B\setminus A$.

Evaluating the energy on \eqref{eq:variance-disc-config} gives
\begin{align}
\mathcal E_{\mathrm{min}}
&=
(2n-2)(|\partial A|+|\partial B|\big)
+2\big|\partial(A\cup B)\big|
\nonumber\\
&=
2\Big[
(n-1)(|\partial A|+|\partial B|\big)+\big|\partial(A\cup B)\big|
\Big]~.
\label{eq:variance-disc-energy}
\end{align}
Thus this saddle is the disconnected product of two copies of the
leading saddle for $\overline{X_n}$.
The leading disconnected contribution is
\begin{align}
\overline{X_n^2}
&=
D^{2n(|\partial A|+|\partial B|)-\mathcal E_{\mathrm{min}}}
\big(1+O(D^{-1})\big)
=
D^{2(|\partial A|+|\partial B|-|\partial(A\cup B)|)}
\big(1+O(D^{-1})\big)~.
\label{eq:variance-leading}
\end{align}
On the other hand, from \eqref{eq:pApB-n-leading} we find
\[
\overline{X_n}^{\,2}
=
D^{\,2(|\partial A| + |\partial B| - |\partial(A \cup B)|)}
\big(1+O(D^{-1})\big)~.
\]
Thus, the leading disconnected contribution cancels in the variance:
\[
\overline{X_n^2}-\overline{X_n}^{\,2}
=
D^{\,2(|\partial A| + |\partial B| - |\partial(A \cup B)|)}
O\!\left(D^{-1}\right)~,
\]
or equivalently
\[
\frac{
\overline{X_n^2}-\overline{X_n}^{\,2}
}{
\overline{X_n}^{\,2}
}
=
O\!\left(D^{-1}\right)~.
\]

\subsection{The ratio-of-averages approximation}
\label{app:ratio-of-averages approximation}

We now justify the replacement of the average of the normalized expression in \eqref{eq:Master_Trace} by the ratio of the corresponding averages. Define
\begin{equation}
Q_i\equiv \hat T_{A_i}\hat T_{A_i}^{\dagger},
\qquad
X\equiv
\prod_{\alpha=1}^{m}
\tr\left(
Q_{i_{\alpha,1}}\cdots Q_{i_{\alpha,k_\alpha}}
\right),
\qquad
Y\equiv
\prod_{i=1}^{n}\tr(Q_i)~.
\label{eq:ratio-XY-def}
\end{equation}
The quantity appearing in \eqref{eq:Master_Trace} is therefore
\begin{equation}
\left(\prod_{i=1}^{n}D_{A_i}\right)\frac{X}{Y}~,
\end{equation}
where the factors \(D_{A_i}\) are non-random and hence play no role in the following discussion.
Expanding around the mean of \(Y\), with
$$
\delta Y=Y-\overline{Y},
$$
gives
\begin{align}
\overline{\left(\frac{X}{Y}\right)}
={}&
\frac{\overline{X}}{\overline{Y}}\left[1
-\frac{\operatorname{Cov}(X,Y)}{\overline{X}\,\overline{Y}}
+\frac{\operatorname{Var}(Y)}{\overline{Y}^{2}}\right]
+\text{higher central moments}~,
\label{eq:ratio-expansion}
\end{align}
where the higher moments involve at least three error terms.
Thus it is sufficient to show that the connected fluctuations involving
\(Y\) are suppressed at large \(D\). 
We first consider the covariance
\begin{equation}
\operatorname{Cov}(X,Y)=\overline{XY}-\overline X \ \overline Y~.
\label{eq:cov-XY}
\end{equation}
The average \(\overline{XY}\) is a \(2n\)-replica calculation. We label the replicas associated with \(X\) by
\(1,\ldots,n\)
and those associated with \(Y\) by
\(n+1,\ldots,2n\).
The first \(n\) replicas are glued at the boundary according to the permutation
\(\kappa\) defined in \eqref{eq:kappa}, whereas each of the additional replicas arising from \(Y\) is traced individually. The boundary condition is therefore
\begin{equation}
\widetilde\kappa
=
\kappa (n+1)(n+2)\cdots(2n)~.
\label{eq:cov-boundary-condition}
\end{equation}
In other words, \(\kappa\) acts on the first replica sector while the second sector is fixed pointwise.
At a vertex \(v\), each region \(A_i\) now occurs once in each replica sector. Hence the vertex carries a permutation in
\begin{equation}
\widetilde H_v
=
\left\{
\sigma\in S_{2n} :
\sigma(i)=i,\;
\sigma(n+i)=n+i
\quad\text{whenever } v\in A_i
\right\}
\cong S_{2n_v}~.
\label{eq:cov-Hv}
\end{equation}
where \(n_v\) is defined in \eqref{eq:nv}. The averaged product therefore takes the form
\begin{equation}
\overline{XY}
=
\widetilde{\mathcal N}
\sum_{\substack{
\widetilde\pi_v\in\widetilde H_v;v\in\verts\
\widetilde\pi_b=\widetilde\kappa;b\in\extv}}
D^{-\widetilde{\mathcal E}[{\widetilde\pi_v}]}~,
\label{eq:cov-replica-sum}
\end{equation}
with
\begin{equation}
\widetilde{\mathcal E}[{\widetilde\pi_v}]
=
\sum_{(x,y)\in\widehat\edge}
d_{\mathrm C}(\widetilde\pi_x,\widetilde\pi_y)
\label{eq:cov-energy}
\end{equation}
and
\begin{equation}
\widetilde{\mathcal N}
=
\left(
\prod_{v\in\verts}\frac{1}{(d_v)_{2n_v}}
\right)
\prod_{e\in\widehat\edge}D^{2n_e}~.
\label{eq:cov-normalisation}
\end{equation}
Let
\(\{\pi_v^\star\}\)
be a configuration attaining the minimum
\(\mathcal E_\kappa(A_1,\ldots,A_n)\)
in \eqref{eq:min-energy}. There is an immediate \(2n\)-replica configuration
\begin{equation}
\widetilde\pi_v^{\rm disc}
=
\pi_v^\star (n+1)(n+2)\cdots(2n)~.
\label{eq:cov-disc-saddle}
\end{equation}
It obeys the boundary condition
$
\widetilde\pi_b^{\rm disc}
=
\tilde{\kappa}
$,
and, since the Cayley distance is unchanged by adjoining fixed points, its energy is therefore
\begin{equation}
\widetilde{\mathcal E}_{\rm disc}
=
\mathcal E_\kappa(A_1,\ldots,A_n)~.
\label{eq:cov-disc-energy}
\end{equation}
This is the disconnected saddle: the first replica sector reproduces the leading configuration of \(\overline X\), while the replicas belonging to \(Y\) remain everywhere in the identity permutation.

To see that no configuration can have lower energy, consider the map
\(R:S_{2n}\rightarrow S_n\)
obtained by deleting the labels
\(n+1,\ldots,2n\)
from the cycle decomposition of a permutation. It satisfies
\begin{equation}
R(\widetilde\kappa)=\kappa~,
\qquad
R(\widetilde H_v)= H_v~,
\end{equation}
Moreover, deleting replica labels cannot increase the Cayley distance. Indeed, \(d_{\mathrm C}(\sigma,\rho)\) is the minimum number of transpositions required to transform \(\sigma\) into \(\rho\). If
$$
(\sigma=\gamma_0)\rightarrow \gamma_1 \rightarrow \ldots \rightarrow (\gamma_k=\rho)~,
\qquad
k=d_{\mathrm C}(\sigma,\rho)~,
$$
is a shortest path in the Cayley graph, then each consecutive pair
\(\gamma_{j-1},\gamma_j\) differs by a single transposition. After deleting the labels \(n+1,\ldots,2n\), such a step either remains a single transposition on the surviving labels or becomes trivial. Hence the projected path from \(R(\sigma)\) to \(R(\rho)\) contains at most \(k\) nontrivial transpositions, and therefore
\begin{equation}
d_{\mathrm C}\big(R(\sigma),R(\rho)\big)
\leq
d_{\mathrm C}(\sigma,\rho)~.
\label{eq:cayley-projection}
\end{equation}
Consequently, for any admissible \(2n\)-replica configuration,
\begin{align}
\widetilde{\mathcal E}[{\widetilde\pi_v}]
&\geq
\mathcal E[{R(\widetilde\pi_v)}]
\geq
\mathcal E_\kappa(A_1,\ldots,A_n)~.
\label{eq:cov-energy-bound}
\end{align}
Together with \eqref{eq:cov-disc-energy}, this shows that
\begin{equation}
\widetilde{\mathcal E}_{\widetilde\kappa}
=
\mathcal E_\kappa(A_1,\ldots,A_n)~.
\label{eq:cov-min-energy}
\end{equation}

A configuration which connects the two replica sectors, or which acts nontrivially on the second sector, cannot be of the disconnected form
\(\pi_v (n+1)(n+2)\cdots(2n)\)
everywhere. Since the boundary condition on the second sector is the identity, such a configuration necessarily introduces an additional domain wall. Equivalently, the inequality in \eqref{eq:cayley-projection} is strict on at least one edge. 
The genuinely connected contribution to \(\overline{XY}\) is consequently suppressed by extra powers of $1/D$. 

It remains to compare the normalization of the disconnected \(2n\)-replica saddle with the product of the two separate averages. Using
\begin{equation}
(d_v)_{2n_v}^{-1}
=
d_v^{-2n_v}\big(1+O(d_v^{-1})\big)~,
\qquad
(d_v)_{n_v}^{-2}
=
d_v^{-2n_v}\big(1+O(d_v^{-1})\big)~,
\end{equation}
we have
\begin{equation}
\widetilde{\mathcal N}
=
\mathcal N^2\big(1+O(D^{-1})\big)~.
\label{eq:cov-normalisation-factorises}
\end{equation}
On the other hand, \(Y\) itself corresponds to the boundary condition
\(\kappa=\Id\). Its leading configuration has every permutation equal to the identity and zero energy, so
\begin{equation}
\overline Y
=
\mathcal N\big(1+O(D^{-1})\big)~.
\label{eq:Y-average}
\end{equation}
Combining this with \eqref{eq:Real_Master_Average}, the disconnected saddle of \(\overline{XY}\) reproduces
\begin{equation}
\overline{XY}_{\rm disc}
=
\overline X \ \overline Y
\big(1+O(D^{-1})\big)~.
\label{eq:cov-disc-factorisation}
\end{equation}
The leading disconnected contribution therefore cancels in
\(\overline{XY}-\overline X\,\overline Y\), leaving only the finite-\(D\)
corrections to the Haar normalization and configurations with positive
energy gap. Hence,
\begin{equation}
\frac{\operatorname{Cov}(X,Y)}
{\overline X \ \overline Y}
=
O(D^{-1})~.
\label{eq:cov-relative-suppression}
\end{equation}

The fluctuations of the denominator are suppressed in the same way. The quantity
\(\overline{Y^2}\)
is again a \(2n\)-replica calculation, now with boundary condition
\begin{equation}
\widetilde\kappa=\Id
\end{equation}
on all \(2n\) replicas. The leading configuration has every vertex in the identity permutation and has zero energy. Any configuration correlating the two copies of \(Y\), or acting nontrivially in either replica sector, has a strictly positive integer energy. Together with the \(O(D^{-1})\) correction from the Pochhammer factors, this gives
\begin{equation}
\frac{\operatorname{Var}(Y)}{\overline Y^{2}}
=
O(D^{-1})~.
\label{eq:Y-relative-variance}
\end{equation}
The same replica argument applies to higher connected moments involving additional copies of \(Y\): any contribution which does not factorize into
independent identity-boundary replica sectors requires additional
domain-wall energy and is suppressed by a positive power of \(D^{-1}\).

Returning to \eqref{eq:ratio-expansion}, the covariance, variance, and
higher-fluctuation terms therefore contribute only beyond leading order.
We conclude that
\begin{equation}
\overline{\left(\frac{X}{Y}\right)}
=
\frac{\overline X}{\overline Y}
\big(1+O(D^{-1})\big)~.
\label{eq:ratio-of-averages}
\end{equation}
Thus, to the order retained in \eqref{eq:master-result}, it is sufficient to
average the numerator and denominator of \eqref{eq:Master_Trace}
separately.

\subsection{A tensor network example with $A\cup B$ not extremal}

\label{app:nonextremal-counterexample}

We now evaluate the first two moments for the tensor network shown in Figure \ref{fig:nonextremal-counterexample}, which has the property that $A, B$ are extremal but $A\cup B$ is not. All edges have bond dimension $D$. We denote the permutations at the top, left, right, and central vertices by $\sigma_t$, $\sigma_\ell$, $\sigma_r$, and $\sigma_c$, respectively. The five physical boundary legs are fixed to $c_{2n}$, while the central vertex belongs to $C=A\cap B$ and is fixed to
$
\sigma_c=\Id~.
$
The permutations at the remaining random vertices are summed over the allowed replica groups.
For this network,
\begin{equation}
D_A=D_B=D^4~,
\qquad
D_E=D^5<D_{A\cup B}=D^6~,
\end{equation}
where $E$ is the region obtained by including the top vertex. Thus $A\cup B$ is not extremal.

\paragraph{First moment.}

For $n=1$, the odd- and even-replica groups are trivial, so
\begin{equation}
\sigma_\ell=\sigma_r=\sigma_c=\Id~,
\qquad
\sigma_t\in S_2~.
\end{equation}
The boundary permutation is $c_2=(12)$, and the action for the two possible values of $\sigma_t$ is
\begin{equation}
S_1(\sigma_t)
=
d_C(\sigma_t, c_2)
+2d_C(\sigma_t, \Id)
+4d_C(c_2, \Id)~.
\label{eq:nonextremal-n1-action}
\end{equation}
The three terms arise from the top boundary leg, the two edges connecting the top vertex to the lower vertices, and the four boundary legs attached to the lower vertices, respectively. The two edges connecting the lower vertices to $C$ have zero cost.
Using $d_C(c_2, \Id)=1$, we find
\begin{equation}
S_1(\Id)=5~,
\qquad
S_1(c_2)=6~.
\end{equation}
The dominant saddle therefore has $\sigma_t=\Id$. Geometrically, the identity domain extends throughout $E$, and the domain wall lies on $\partial E$. Its contribution is $D^{-5}=D_E^{-1}$, so
\begin{equation}
\overline{\Tr(P_AP_B)}
\sim
D_AD_BD_E^{-1}
=
D^8D^{-5}
=
D^3~.
\label{eq:nonextremal-example-n1}
\end{equation}

\paragraph{Second moment.}

For $n=2$, define
\begin{equation}
c=c_4=(1234)~,
\qquad
a=(24)~,
\qquad
b=(13)~.
\end{equation}
The permutations at the left and right vertices are restricted to the even- and odd-replica subgroups,
\begin{equation}
\sigma_\ell\in \{\Id,a\}~,
\qquad
\sigma_r\in \{\Id,b\}~,
\end{equation}
while $\sigma_t$ is summed over $S_4$.
The action is
\begin{align}
S_2(\sigma_t,\sigma_\ell,\sigma_r)
={}&
d_C(\sigma_t ,c)
+d_C(\sigma_t,\sigma_\ell)
+d_C(\sigma_t,\sigma_r)
\nonumber\\
&+
d_C(\sigma_\ell, \Id)
+d_C(\sigma_r, \Id)
+2d_C(\sigma_\ell, c)
+2d_C(\sigma_r, c)~.
\label{eq:nonextremal-n2-action}
\end{align}
The first term comes from the boundary leg attached to the top vertex. The next two terms come from the edges connecting the top vertex to the left and right vertices, while $d_C(\sigma_\ell)$ and $d_C(\sigma_r)$ come from the edges connecting these vertices to $C$. The final two terms account for the two boundary legs attached to each lower vertex.

The relevant distances are
\begin{equation}
d_C(c, \Id)=3~,
\qquad
d_C(a, \Id)=d_C(b, \Id)=1~,
\end{equation}
and
\begin{equation}
d_C(a,c)=d_C(b,c)=2~,
\qquad
d_C(a,b^)=2~.
\end{equation}
Minimising \eqref{eq:nonextremal-n2-action} over $\sigma_t\in S_4$ for each allowed pair $(\sigma_\ell,\sigma_r)$ gives

\begin{equation}
\begin{array}{c|c|c}
(\sigma_\ell,\sigma_r)
&
\displaystyle \min_{\sigma_t\in S_4}
S_2(\sigma_t,\sigma_\ell,\sigma_r)
&
\text{minimising }\sigma_t
\\ \hline
(\Id,\Id) & 15 & \Id \\
(a,\Id)   & 14 & a \\
(\Id,b)   & 14 & b \\
(a,b)     & 14 & a,\ b,\ c
\end{array}
\label{eq:nonextremal-minimisation}
\end{equation}

These minima follow directly from the triangle inequality for the Cayley distance. For example, when $(\sigma_\ell,\sigma_r)=(a,\Id)$, the part of the action depending on $\sigma_t$ is $d_C(\sigma_t ,c)
+d_C(\sigma_t ,a)
+d_C(\sigma_t, \Id)~.$
The triangle inequality implies
\begin{align}
2\Big[
d_C(\sigma_t ,c)
+d_C(\sigma_t ,a)
+d_C(\sigma_t, \Id)
\Big]
\geq{}&
d_C(c,a)
+d_C(c, \Id)
+d_C(a, \Id)
\nonumber\\
={}&
2+3+1=6~.
\end{align}
The bound is saturated by $\sigma_t=a$, giving a minimum total action of $14$. The case $(\sigma_\ell,\sigma_r)=(\Id,b)$ follows in the same way.
For $(\sigma_\ell,\sigma_r)=(a,b)$, the $\sigma_t$-dependent part is
\begin{equation}
d_C(\sigma_t ,c)
+d_C(\sigma_t ,a)
+d_C(\sigma_t, b)~.
\end{equation}
Its minimum is $4$, attained by $\sigma_t=a$, $\sigma_t=b$, and $\sigma_t=c$. The global minimum is therefore
\begin{equation}
S_{2,\min}=14~,
\end{equation}
and it is attained by the five configurations
\begin{equation}
(\sigma_t,\sigma_\ell,\sigma_r)
=
(a,a,\Id)~,
\qquad
(b,\Id,b)~,
\end{equation}
\begin{equation}
(\sigma_t,\sigma_\ell,\sigma_r)
=
(a,a,b)~,
\qquad
(b,a,b)~,
\qquad
(c,a,b)~.
\end{equation}
Each of these configurations contributes $D^{-14}$ at leading order. Restoring the projector normalization gives
\begin{equation}
\overline{\tr\left[(P_AP_B)^2\right]}
\sim
5D_A^2D_B^2D^{-14}
=
5D^{16}D^{-14}
=
5D^2~.
\label{eq:nonextremal-example-n2}
\end{equation}
Comparing \eqref{eq:nonextremal-example-n1} and \eqref{eq:nonextremal-example-n2}, we find
\begin{equation}
\overline{\tr(P_AP_B)}
\sim D^3~,
\qquad
\overline{\tr\left[(P_AP_B)^2\right]}
\sim 5D^2~,
\end{equation}
so the leading moments are dependent on $n$ in this non-extremal example.

\subsection{Corner algebra from averaging}
\label{app:CornerAvg}

We average the factor subalgebra map $\varphi^{(B)}(\mathcal{O}_A) = 
U_A^{(B)}(\mathcal{O}_A\otimes\mathbb{1})U_A^{(B)\dagger}$ over the Haar measure on 
$U(D_\partial - D_A)$, the group parametrising the choice of auxiliary isometries 
$T^{(a)}$, $a \geq 2$. Using the explicit form of $U_A^{(B)}$,
\begin{equation}
    U^{(B)}_A = \sum_j \ket{\phi^1_j}\bra{j, e_1} 
    \;+\; 
    \sum_{\substack{a,a'\geq 2\\j,j'}} B_{(a,j),(a',j')}\,\ket{\phi^a_j}\bra{j', e_{a'}},
\end{equation}
we expand $\varphi^{(B)}(\mathcal{O}_A)$ in blocks. The $a=1$ sector contributes
\begin{equation}
    \sum_{j,j'} (\mathcal{O}_A)_{jj'}\ket{\phi^1_j}\bra{\phi^1_{j'}},
\end{equation}
which is independent of $B$ and therefore unaffected by the averaging. The $a \geq 2$ 
sectors contribute terms of the form
\begin{equation}
    \sum_{\substack{a,a' \geq 2 \\ j,j'}} \sum_{\substack{b,b' \geq 2 \\ k,k'}}
    B_{(a,j),(a',j')}\,\overline{B_{(b,k),(b',k')}}\,
    (\mathcal{O}_A)_{j'k'}\,\ket{\phi^a_j}\bra{\phi^b_k}.
\end{equation}
Note that mixed terms between the $a=1$ sector and the $a\geq 2$ sectors vanish, 
since $\mathcal{O}_A \otimes \mathbb{1}$ preserves the $a$-label.

Applying the Haar average over $B \in U(D_\partial - D_A)$,
\begin{equation}
    \mathbb{E}_B\!\left[B_{IJ}\overline{B_{I'J'}}\right] 
    = \frac{1}{D_\partial - D_A}\,\delta_{II'}\delta_{JJ'}~,
\end{equation}
where $I = (a,j)$ and $J = (a',j')$ are collective indices with $a,a' \geq 2$, 
the averaged contribution from the $a \geq 2$ sectors becomes
\begin{equation}
    \frac{1}{D_\partial - D_A} \sum_{\substack{a \geq 2\\j}} \sum_{\substack{a' \geq 2\\j'}}
    (\mathcal{O}_A)_{j'j'}\,\ket{\phi^a_j}\bra{\phi^a_j}
    = \frac{(m-1)\mathrm{Tr}(\mathcal{O}_A)}{D_\partial - D_A}
    \sum_{\substack{a \geq 2\\j}}\ket{\phi^a_j}\bra{\phi^a_j}.
\end{equation}
The sum $\sum_{a \geq 2, j}\ket{\phi^a_j}\bra{\phi^a_j}$ is precisely the projector 
onto the orthogonal complement of the code subspace, $\mathbb{1}_\partial - P_A$, 
which has dimension $D_\partial - D_A$. Hence the prefactor simplifies and we obtain
\begin{equation}
    \phi(\mathcal{O}_A) 
    \equiv \mathbb{E}_B\!\left[\varphi^{(B)}(\mathcal{O}_A)\right]
    = T_A \mathcal{O}_A T_A^\dagger + \tau(\mathcal{O}_A)(\mathbb{1}_\partial - P_A)~,
\end{equation}
where $\tau(\mathcal{O}_A) = \frac{1}{D_A}\mathrm{tr}(\mathcal{O}_A)$ is the 
normalised trace and $P_A = T_A T_A^\dagger$ projects onto the code subspace. 
This operator manifestly belongs to the corner algebra $\mathcal{N}_A$.

\begin{proposition}
The map $\phi : B(\mathcal{H}_A) \to \mathcal{N}_A$ is linear, unital, and 
$*$-preserving. Its failure to be multiplicative is central:
\begin{equation}
    \phi(\mathcal{O}_A \mathcal{O}_B) - \phi(\mathcal{O}_A)\phi(\mathcal{O}_B)
    = \bigl(\tau(\mathcal{O}_A \mathcal{O}_B) - \tau(\mathcal{O}_A)\tau(\mathcal{O}_B)
    \bigr)(\mathbb{1}_\partial - P_A)~,
\end{equation}
so $\phi$ is a unital $*$-homomorphism after quotienting the central ideal $ \mathbb{C}(\mathbb{1}_\partial - P_A)$.
\end{proposition}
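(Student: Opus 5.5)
The plan is to work directly from the explicit formula $\phi(\mathcal{O}) = T_A \mathcal{O} T_A^\dagger + \tau(\mathcal{O})(\mathbb{1}_\partial - P_A)$ derived just above. We also use the isometry relations $T_A^\dagger T_A = \mathbb{1}_A$ and $P_A = T_A T_A^\dagger$, and the resulting identities $T_A^\dagger(\mathbb{1}_\partial - P_A) = 0$ and $(\mathbb{1}_\partial - P_A)T_A = 0$.

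\emph{Linearity, unitality and the $*$-property.} Linearity is immediate because both $\mathcal{O}\mapsto T_A\mathcal{O}T_A^\dagger$ and $\tau$ are linear. Unitality follows from $\tau(\mathbb{1}_A)=1$, since then $\phi(\mathbb{1}_A) = P_A + (\mathbb{1}_\partial - P_A) = \mathbb{1}_\partial$. For the $*$-property we use $\tau(\mathcal{O}^\dagger) = \overline{\tau(\mathcal{O})}$ and that $\mathbb{1}_\partial - P_A$ is self-adjoint. Together these give $\phi(\mathcal{O}^\dagger)=\phi(\mathcal{O})^\dagger$. The image lies in $\mathcal{N}_A$ by the description of $\mathcal{N}_A$ as operators of the form $P_A X P_A + \lambda(\mathbb{1}-P_A)$.

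\emph{Multiplicativity defect.} Expand $\phi(\mathcal{O}_A)\phi(\mathcal{O}_B)$ into four terms.
\begin{itemize}
\item The term $T_A\mathcal{O}_A T_A^\dagger T_A \mathcal{O}_B T_A^\dagger$ equals $T_A \mathcal{O}_A\mathcal{O}_B T_A^\dagger$ by the isometry property.
\item The two cross terms vanish, because $T_A^\dagger(\mathbb{1}_\partial-P_A)=0$ and $(\mathbb{1}_\partial-P_A)T_A=0$.
\item The last term is $\tau(\mathcal{O}_A)\tau(\mathcal{O}_B)(\mathbb{1}_\partial-P_A)^2 = \tau(\mathcal{O}_A)\tau(\mathcal{O}_B)(\mathbb{1}_\partial-P_A)$.
\end{itemize}
Subtracting this from $\phi(\mathcal{O}_A\mathcal{O}_B) = T_A\mathcal{O}_A\mathcal{O}_BT_A^\dagger + \tau(\mathcal{O}_A\mathcal{O}_B)(\mathbb{1}_\partial-P_A)$ gives exactly the stated formula.

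\emph{Centrality and the quotient.} Note first that $\mathbb{1}_\partial - P_A$ is a central projection of $\mathcal{N}_A$. It commutes with $P_A X P_A$ (both products vanish) and with scalar multiples of itself. The set $\mathbb{C}(\mathbb{1}_\partial-P_A)$ is then a two-sided ideal of $\mathcal{N}_A$, since multiplying $\mathbb{1}_\partial-P_A$ by $P_AXP_A+\lambda(\mathbb{1}-P_A)$ gives $\lambda(\mathbb{1}-P_A)$. It is also $*$-closed. The quotient $\mathcal{N}_A/\mathbb{C}(\mathbb{1}_\partial-P_A)$ is isomorphic to the corner $\mathcal{C}_A = P_A B(\mathcal{H}_\partial)P_A$, implemented by compression with $P_A$. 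Composing $\phi$ with the quotient map yields $\mathcal{O}\mapsto T_A\mathcal{O}T_A^\dagger$ modulo the ideal. By the defect formula this composite is multiplicative, and it is also linear and $*$-preserving, hence a $*$-homomorphism. It is unital with respect to the quotient unit, which is the class of $\mathbb{1}_\partial$ and corresponds to $P_A$.

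There is no real obstacle here. The only point needing care is the bookkeeping showing that the cross terms vanish and that the ideal is genuinely two-sided in $\mathcal{N}_A$ rather than in $B(\mathcal{H}_\partial)$. In $B(\mathcal{H}_\partial)$ the set $\mathbb{C}(\mathbb{1}_\partial-P_A)$ is not an ideal, so the quotient must be taken inside $\mathcal{N}_A$.
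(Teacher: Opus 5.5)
Your proposal is correct and follows the paper's proof: you expand $\phi(\mathcal{O}_A)\phi(\mathcal{O}_B)$ directly, using $T_A^\dagger T_A=\mathbb{1}_A$ and the orthogonality of $P_A$ and $\mathbb{1}_\partial-P_A$, and you get unitality from $\phi(\mathbb{1}_A)=P_A+(\mathbb{1}_\partial-P_A)$, just as the paper does. Your additional check that $\mathbb{C}(\mathbb{1}_\partial-P_A)$ is a two-sided ideal of $\mathcal{N}_A$ (but not of $B(\mathcal{H}_\partial)$), with quotient $\mathcal{C}_A$, spells out what the paper leaves implicit in the last clause of the statement.
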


\begin{proof}
Linearity and $*$-preservation are immediate. Unitality follows from 
$\phi(\mathbb{1}_A) = T_A T_A^\dagger + (\mathbb{1}_\partial - T_A T_A^\dagger) 
= \mathbb{1}_\partial$. Using $T_A^\dagger T_A = \mathbb{1}_A$ and 
$P_A(\mathbb{1}_\partial - P_A) = 0$,
\begin{equation}
    \phi(\mathcal{O}_A)\phi(\mathcal{O}_B)
    = T_A \mathcal{O}_A \mathcal{O}_B T_A^\dagger 
    + \tau(\mathcal{O}_A)\tau(\mathcal{O}_B)(\mathbb{1}_\partial - P_A)~,
\end{equation}
while $\phi(\mathcal{O}_A \mathcal{O}_B) = T_A \mathcal{O}_A \mathcal{O}_B T_A^\dagger 
+ \tau(\mathcal{O}_A \mathcal{O}_B)(\mathbb{1}_\partial - P_A)$. Subtracting gives 
the stated formula.
\end{proof}

The failure of $\phi$ to be multiplicative is thus entirely supported on the 
orthogonal complement of the code subspace. Forming the quotient by the central 
ideal $\mathbb{C}(\mathbb{1}_\partial - P_A)$ recovers the exact representation 
$\mathcal{O}_A \mapsto T_A \mathcal{O}_A T_A^\dagger$ on the code subspace.

\bibliographystyle{jhep}
\bibliography{references}
\end{document}